\definecolor{Gray}{gray}{0.85}
\definecolor{LightCyan}{rgb}{0.88,1,1}
\theoremstyle{plain}
\newtheorem{theorem}{Theorem}
\newtheorem{lemma}{Lemma}
\newtheorem{corollary}{Corollary}[theorem]
\newtheorem{proposition}{Proposition}
\theoremstyle{definition}
\newtheorem{definition}{Definition}
\theoremstyle{remark}
\newtheorem*{remark}{Remark}
\begin{document}

\begin{frontmatter}
\title{Matching Queues with Abandonments in Quantum Switches: Stability and Throughput Analysis}

\begin{aug}
\author[A]{\fnms{Martin} \snm{Zubeldia}\ead[label=e1]{zubeldia@umn.edu}},
\author[B]{\fnms{Prakirt R.} \snm{Jhunjhunwala}\ead[label=e2]{rj2122@columbia.edu}},
\and
\author[C]{\fnms{Siva Theja} \snm{Maguluri}\ead[label=e3]{siva.theja@gatech.edu}}
\address[A]{
University of Minnesota,
\printead{e1}}

\address[B]{
Columbia University,
\printead{e2}}

\address[C]{
Georgia Institute of Technology,
\printead{e3}}

\end{aug}

\begin{abstract}
Inspired by quantum switches, we consider a discrete-time multi-way matching system with two classes of arrivals: requests for entangled pair of qubits between two nodes, and qubits from each node that can be used to serve the requests. An important feature of this model is that qubits decohere and so abandon over time. In contrast to classical server-based queueing models, the combination of queueing, server-less multi-way matching, and (potentially correlated) abandonments make the analysis a challenging problem. The primary focus of this paper is to study a simple system consisting of two types of requests and three types of qubits operating under a Max-Weight policy.
  
In this setting, we characterize the stability region under the Max-Weight policy by adopting a two-time scale fluid limit to get a handle on the abandonments. In particular, we show that Max-Weight is throughput optimal and that it can achieve throughputs larger than the ones that can be achieved by non-idling policies when the requests are infinitely backlogged. Moreover, despite the use of the Max-Weight policy, we show that there can be a counter-intuitive behavior in the system: the longest requests queue can have a positive drift for some time even if the overall system is stable.
\end{abstract}

\begin{keyword}
\kwd{Stochastic Matching Networks}
\kwd{Abandonments}
\kwd{Quantum Switches}
\end{keyword}

\end{frontmatter}

\tableofcontents

\section{Introduction}\label{sec:intro}

During the last century, telecommunication networks' theoretical analysis, design, and control have evolved together with the underlying technology. Starting with the circuit-switched telephone networks of the early twentieth century, through the boom of the packet-switched network that is the Internet, the advent of quantum computers requires a new type of switched network \cite{quantumMagazine,shortcuts,quantumBook}. In quantum computers, the qubit replaces the bit as the minimum unit of information. Qubits act as a richer unit of information that allows new computational applications and a new way of communicating by \emph{entangling} the state of qubits at different nodes. This \emph{entanglement} is a fundamental feature of quantum mechanics that has no counterpart in classical mechanics. When two qubits are entangled, altering the state of one alters the state of the other. A Quantum network enables distribution of entangled qubits between users, and several architectures propose implementing such networks \cite{networks1,networks2,networks3,networks4,networks5,networks6}.

In this paper, we focus on the most fundamental component of a quantum communication network: a single quantum switch that connects two or more nodes in a star topology. In this kind of switches, connections between nodes are established in two steps: a preparation step, and a connection step. First, to prepare for connection requests, pairs of maximally entangled qubits (also know as Bell pairs, or EPR states) are constantly being generated between the switch and each of the connected nodes. Generated pairs are stored in quantum memories, with one qubit of each pair stored at the switch and the other one at the corresponding node. Due to a phenomenon called \emph{quantum decoherence}, in which the environment interferes with the qubit and eventually alters its state, the stored entangled qubits become useless after some time, and are thus lost. Second, when a request for a connection between two nodes is made, if the switch has qubits entangled with each of the nodes, then it performs a Bell-state measurement between those two qubits, entangling them with positive probability (otherwise, the qubits are lost and the request is not served). If the entanglement swapping is successful, this also results in the entanglement of the corresponding two qubits at the nodes, that can be used for quantum communication between them using teleportation and for other applications. On the other hand, if the switch has no such two qubits for entanglement swapping stored, the request must wait for their generation.

A discrete-time model for a Quantum switch based on a network of matching queues was introduced in \cite{TowsleyStochAnalysis}. In this model, the quantum memory in each node is modeled as a queue with random arrivals (the continuous generation of entangled qubits, which is random in nature \cite{qubitGenerationModel}) and abandonments (due to decoherence). We allow for general arrival distributions in each time slot, and assume that the each qubit abandons after a geometrically distributed amount of time (as in \cite{Fittipaldi23}), mimicking the natural exponential decoherence of quantum states \cite{decoherence}. Furthermore, we assume that a random number of requests for connections between two nodes arrive to the switch in each time slot, and that these connections wait in a queue at the switch until they are completed (that is, they do not abandon). Finally, the entanglement attempt of two qubits through a Bell-state measurement in response to a request for a connection corresponds to a three-way matching among the request and the pair of qubits if the entanglement is successful, and to a two-way matching between the pair of qubits if the entanglement is not successful. In the latter case, the request remains unfulfilled. The central control problem in a Quantum switch is the following. In particular, to simplify the analysis, we assume that there is no limit in the number of Bell-state measurements that can be made in each time slot. When a new entangled qubit is generated, and there is more than one type of request waiting, the switch needs to decide which connection to create. This is a matching problem among all involved queues. 

Although the model described above is inspired by quantum switches, it also models certain Assemble-to-Order systems with perishable inventories \cite{graves1982application}. For such systems, the qubits would correspond with perishable components, and the requests for connections would correspond for requests to assemble items that require two of the perishable components. Moreover, the failure of the Bell-state measurement that entangles two qubits to serve the request (which results in the qubits being lost) would correspond to a failure of a quality control check (which results in the parts being lost). This close connection between these seemingly very different systems make our conclusions for quantum switches also applicable to such Assemble-to-Order systems with perishable inventories.

Our goal in this paper is to find the stability region of a quantum switch in a W-topology, which corresponds to a switch with three nodes, but where there are only requests for connections for two out of the three possible pairs. Here, when qubits arrive, and there is more than one three-way matching possible, we use a Max-Weight policy on the requests queue lengths to decide which matching to attempt.
        
In scheduling of classical queues, the stability region of a system operating under a Max-Weight policy is generally determined by the convex hull of the set of service schedules, that is, of the set of possible departure vectors from the different queues. If the set of schedules is randomized in each time slot (even in a Markovian way), independently of which one is actually used, the stability region can still be determined by considering the steady-state expectation of the available schedules. A quantum switch in contrast involves matching queues, where the ``service schedules'' in each time slot are determined by the available qubits, which can be stored over time to fulfill future requests until they decohere and abandon the system. Therefore, the set of possible schedules is determined by a Markov chain with dynamics that depend on which schedule is used. In particular, this means that there is no straightforward way of taking expectations and obtaining the stability region, as the steady-state distribution of the schedules now depends on the arrival rates in an unknown way. Also, the stability region cannot be obtained by ``saturating'' the queues as in the queueing networks introduced in \cite{saturation}.

\subsection{Our contribution}
The main contribution of our paper is to show that the stability region of a quantum switch in a W-topology under Max-Weight can be characterized in terms of the throughputs obtained when the requests are either completely or partially infinitely backlogged. In a completely backlogged system, both kinds of requests are infinitely backlogged and the qubit queues make all possible matchings in all slots, simplifying the throughput analysis. In the partially backlogged system, one type of requests are infinitely backlogged, while the other type are not. This backlogged system is equivalent to a lower dimensional system, viz., a Quantum Switch in a Y-topology. Therefore, we first characterize the stability region under a Y-topology and use it as a stepping stone to study the W-topology. The contributions of our paper are as follows:

\begin{enumerate}
    \item \textbf{Y-topology:} A Y-topology consists of two qubit queues and one request queue. This corresponds to a quantum switch with only two nodes.
    \begin{itemize}
        \item[(i)] Studying the capacity of the Y-topology further reduces to the case when the requests are infinitely backlogged, leading to a two-way matching system.
        We obtain the stability region and average matching rate (throughput) of a two-way matching queue with (potentially correlated) abandonments, and explore its connections to a single-server queue.
        \item[(ii)] For the Y-topology, we obtain the stability region and the system's throughput. In particular, we show system's stability whenever the arrival rate of connection requests is lower than the throughput of an appropriate two-sided queue with abandonments.
    \end{itemize}
    
    \item \textbf{W-topology:} For the system in the W-topology, in order to keep the problem tractable, we assume that both the abandonment probabilities and the three-way matching success probabilities are homogeneous. In this setting, we show the following:
    \begin{itemize}
        \item[(i)] When both requests queues are infinitely backlogged, we characterize the sum of the throughputs of both types of requests, which is the same for all non-idling matching policies. Moreover, we show that the throughput of each request queue is uniformly bounded away from zero under any non-idling matching policy.
        \item[(ii)] When only one request queue is infinitely backlogged, we obtain the capacity region for the other queue when we give strict priority to the infinitely backlogged one. Moreover, we obtain the throughput of the infinitely backlogged one, as a function of the arrival rate of the other one.\footnote{Even if we give strict priority to the infinitely backlogged request queue, its throughput still depends on the arrival rate to the other queue. This is because, when there are no qubits of type 1 and there are qubits of type 2 and 3, the latter two would still be matched with a request of type 2 (if there are any), taking away qubits of type 2 that could be used to match with requests of type 1 in the future.}
        \item[(iii)] When none of the requests queues are infinitely backlogged, we show that the Max-Weight policy is throughput optimal, and that its stability region is the convex hull of all throughputs obtained in the completely and partially infinitely backlogged cases. This implies that considering only the infinitely backlogged system doesn't give the capacity region in  quantum switches.
        \item[(iv)] We use a two time-scale fluid limit approximation to characterize the transient behaviors. In particular, we show that, in some cases, the drift of the largest requests queue can be positive for some time under the Max-Weight policy, even when the system is stable.
    \end{itemize}
   \item \textbf{Methodological contribution:} We obtain the above results using two different proof techniques, both of which may be of independent interest to study more general matching networks.  
\begin{itemize}
    \item[(i)] We obtain the results on the Y-topology using a multi-step variant of the Foster-Lyapunov theorem. To overcome the challenge of the ``service schedules'' being dependent on the requests queues, we construct two coupled processes. In one of the processes matchings are always attempted regardless of requests, and in the other, no matchings are attempted (pure abandonment). We show that these processes stochastically dominate the original one. We then use the multi-step Lyapunov argument to show stability of the coupled processes, implying the stability of the original process. 
    \item[(ii)] We obtain the results on the W-topology using fluid limit arguments. The key challenge is in establishing the fluid limit for two reasons. 
    The first one is that we have a discontinuous drift, and thus we cannot apply the well-known fluid limit theorems that rely upon the Lipschitz continuity of the drift. The second one is that we have an averaging effect in the drift, as the qubit queues evolve in a faster time scale than the requests queues.
    We overcome these two challenges following roughly the same approach as in \cite{positiveResult}, by first establishing tightness of the scaled processes, and then showing that every limit point is a solution of certain sub-differential equation that includes the averaging effect. In order to establish the latter, we had to go beyond the (already non-standard) approach of \cite{positiveResult}, and exploit the fact that limiting trajectories do exist in order establish the sub-differential equation in the boundary of the state space.
\end{itemize}
\end{enumerate}

\subsection{Previous work}
As previously mentioned, the first queueing theoretic analysis for a quantum switch was done in \cite{TowsleyStochAnalysis}. In this seminal paper, as well as in the follow-up work \cite{TowsleyTripartite,Towsley2020,Towsley2022}, the authors assume that there is always a request for connections to be made, and they use a continuous time Markov chain as a tractable approximation of the discrete-time nature of the quantum switches. Under these assumptions, they obtain throughput in a heterogeneous network, with/without decoherence and connections among two or three nodes.

In a slightly different line of work \cite{TowsleyMaxWeight,TowsleyProtocolDesign} that is the closest to this paper, the authors consider a discrete-time queueing model where requests for connections are indeed stored in a queue until enough entangled qubits are available to match with them. In particular, in \cite{TowsleyMaxWeight} they consider the case where any node can store only one qubit and where each qubit decoheres after a single time slot. In this setting, they show that a Max-Weight policy on the length of the requests queues is throughput optimal. On the other hand, in \cite{TowsleyProtocolDesign} they assume that there is an infinite quantum memory for qubits but that these do not decohere. In this case, they also show that Max-Weight is throughput optimal but with a weaker notion of stability that only involves the requests queues.
To the best of our knowledge, we are first to provide a theoretical analysis of a model with queueing on both the requests and the qubits and with a non-trivial distribution for the decoherence time. 

The matching queueing model used for quantum switches is akin to the models for assemble-to-order systems. However, assemble-to-order systems and their models rarely involve abandonments. Without abandonments, optimal policies often involve waiting for some time before performing a more informed matching (e.g., the so-called discrete-review policies \cite{discreteReview,ATO2006,dynamicControlMatching}). These policies are unsuitable for systems with abandonments, as they allow for unnecessary abandonments to occur. When incorporating abandonments, policies that wait before performing matchings can only be asymptotically optimal if the abandonment rate is asymptotically negligible \cite{matchingImpatientHeterogeneous,Blanchet2022}. Other recent works in the matching system that considers abandonments do so in continuous time and with with either deterministic patience \cite{perish_survey}, or with a mix of items with zero and infinite patience \cite{priorityAndImpatient,Gupta22}. In the latter, their optimal policies are also unsuitable for the case where all items can abandon, as they make the infinitely patient items wait unnecessarily.

Many other recent papers tackle the analysis and control of matching queueing systems. Some focus on finding general conditions for the stability \cite{Busic2013,hypergraphs} or instability \cite{instabilityMatching} of such systems. Others focus on throughput optimality and reward maximization in the long-term average \cite{Stolyar} or in a discounted setting \cite{optimalControl}. In particular, in \cite{ItaiItai2021} they show that greedy policies can be optimal in hindsight. Finally, some also focus on finding structural properties, such as showing that the steady-state queue length distribution has product form when the matching policy is First-Come-First-Served \cite{productForm} (even when there are abandonments \cite{francisco}), or that adding flexibility can hurt performance \cite{flexibilityHurts}. Finally, a recent work \cite{Jonckheere2022} considers a bipartite two-way matching queueing network with abandonments under a Max-Weight policy. While they obtain the stability region and other stochastic properties, they do not show optimality in any way. Moreover, none of these papers combine the challenges of having multi-way matchings together with (potentially correlated) abandonments that occur in the same time-scale as the arrivals, so their techniques cannot be directly translated to our setting.

\subsection{Notation}\label{sec:notation}
Given a real number $q$, we denote $q^-=-\min\big\{q,\, 0\big\}$ and $q^+=\max\big\{0,\, q\big\}$. Analogously, given a real-valued stochastic process $Q(\cdot)$, we denote $Q^-(\cdot)=-\min\big\{Q(\cdot),\, 0\big\}$ and $Q^+(\cdot)=\max\big\{0,\, Q(\cdot)\big\}$. We also denote $\mathbb{R}_+=[0,\infty)$ and $\mathbb{Z}_+=\{0,1,2,\dots\}$. Moreover, we denote
\[ D^2[0,T] := \left\{ f:[0,T]\to\mathbb{R}_+^2, \text{ c\`adl\`ag, and piece-wise constant} \right\}. \]
Finally, we denote as $\text{proj}_A(x)$ the projection of $x$ onto the set $A$, and we denote vectors with bold face fonts.

\section{Three-way matching in a Y-topology}\label{sec:Y}

In this section, we first introduce and analyze a two-way matching model (Subsection \ref{sec:twoWayMatching}). Then, we build upon this when we introduce and analyze our first model for a quantum switch consisting of a three-way matching network in a Y-topology (Subsection \ref{sec:Ymodel}).

\subsection{Two-way matching system}\label{sec:twoWayMatching}

\begin{wrapfigure}{R}{0.5\textwidth}
\centering
\begin{tikzpicture}[scale=0.7]
    \draw (4,0) -- (6,0);
    \draw (0,0) node {$\mu_1$};
    \draw [->] (0.5,0) -- (2,0);
    \draw [->] (3.5,0.5) -- (3.5,1) -- (2.5,1);
    \draw (2,1) node {$\gamma_1$};
    \draw (2,0.5) -- (4,0.5) -- (4,-0.5) -- (2,-0.5);
    \filldraw[fill=white] (5,0) circle (15pt);
    \draw (8,0.5) -- (6,0.5) -- (6,-0.5) -- (8,-0.5);
    \draw (10,0) node {$\mu_2$};
    \draw [->] (9.5,0) -- (8,0);
    \draw [->] (6.5,0.5) -- (6.5,1) -- (7.5,1);
    \draw (8,1) node {$\gamma_2$};
\end{tikzpicture}
\caption{System with two-way matchings and abandonments in both queues.}
\label{fig:twoSidedAbandonments}
\end{wrapfigure}
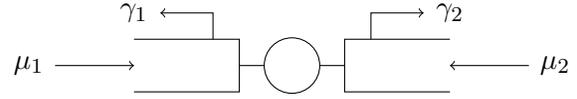
We consider a discrete-time system consisting of two FIFO queues with infinite buffers, where items arrive at each of the queues according to i.i.d. exogenous arrival processes denoted by $A_1(\cdot)$ and $A_2(\cdot)$. A pictorial representation is presented in Figure \ref{fig:twoSidedAbandonments}. We assume that the two arrival processes are independent of each other, with $\mathbb{E}[A_i(1)]=\mu_i$ and $\mathbb{E}\big[A_i(1)^2\big]<\infty$ for $i\in\{1,2\}$. Items of type $i$ depart either via immediate matching with the other type or via abandonment, where each item in the $i$-th queue abandons with probability $\gamma_i$ independent across different time slots. Within each time slot, we allow for an arbitrary correlation structure among the abandonments of the different items. We assume that the abandonments are realized at the beginning of the time slot, then the arrivals come in and join the queue, and finally, the items in the queues are matched with each other and depart the system. We also assume that the system is non-idling, which implies that at least one of the queues is empty at the end of any time slot. This system can be modeled as a discrete-time Markov chain $(Q_1(\cdot),Q_2(\cdot))$ over the state space $\mathbb{Z}_+\times\mathbb{Z}_+$ defined recursively by
\begin{align}
\label{eq: two_way_lindley_recursion}
    Q_i(t+1) &= Q_i(t) + A_i(t) - D_i(t) -M(t), \quad \text{ for } i \in \{1,2\},
\end{align}
for all $t\geq 0$, where
\[ M(t) = \min\limits_{i\in \{1,2\}}\big\{Q_i(t)-D_i(t)+A_i(t) \big\} \]
is the number of matchings in the $t$-th time slot, and
\[ D_i(t) = \sum\limits_{\ell=1}^{Q_i(t)} Z^{(i)}_\ell(t), \]
with $Z^{(i)}_\ell(t)\sim Ber(\gamma_i)$, denote the abandonments from the $i$-th queue. Here, for each $\ell\geq 1$, $\{Z^{(i)}_\ell(t):t\geq 0\}$ is a sequence of i.i.d. Bernoulli random variables with probability $\gamma_i$, and, for each $t\geq 0$, $\{Z^{(i)}_\ell(t):\ell\geq 1\}$ is a sequence of exchangeable Bernoulli random variables with probability $\gamma_i$. In particular, when the later are also i.i.d., we have $D_i(t)\sim Bin\big(Q_i(t),\gamma_i\big)$.

Note that $M(t)$ is defined in such a way that ensures that at least one of the queues is equal to zero at all times. Therefore, we can also model this system as the one-dimensional Markov chain $Q(\cdot) = Q_1(\cdot) - Q_2(\cdot).$
In particular, we have that $Q^+(\cdot) = Q_1(\cdot)$ and $Q^-(\cdot) = Q_2(\cdot)$, and that $Q(\cdot)$ can also be defined recursively by
\[ Q(t+1) = Q(t) + A_1(t) - A_2(t) - D_1(t) + D_2(t),\]
for all $t\geq 0$. In this setting, we define the \emph{throughput} as the rate of successful matchings, given by 
\[ \text{Throughput} := \liminf\limits_{t\to\infty} \frac{1}{t} \sum\limits_{k=1}^t M(k). \]
Furthermore, if the Markov chain $Q(\cdot)$ is ergodic, then the throughput is simply given by $\mathbb{E}[M(\cdot)]$, where the expectation is taken with respect to the steady-state distribution.

\subsubsection{No Abandonment}\label{sec:two_way_no_abandon}
When the abandonment probabilities for both queues are zero, that is, when $\gamma_1 = \gamma_2 =0$, the system acts as a one-dimensional random walk. Therefore, it is transient when $\mu_1\neq \mu_2$, and null recurrent when $\mu_1 = \mu_2$. However, in both cases, the throughput is equal, almost surely, to $\min\{\mu_1,\mu_2\}$. More details on this can be found in~\cite{menshikov2016non}.

\subsubsection{Abandonments on one side}\label{sec: two_way_one_side}

We now consider the case of the two-way matching system where there are no abandonments in one of the queues, that is, where $\gamma_1=0$. For ease of exposition, we denote $\mu_1=\lambda$, $\mu_2=\mu$, and $\gamma_2=\gamma$ in Figure \ref{fig:twoSidedAbandonments}. The queue on the right-hand side acts as a service queue, and the queue on the left-hand side acts as a request queue.


Although this is a server-less two-way matching system, it is closely related to the single-sever queue. Note that the  service queue is stable for any $\gamma \in (0,1]$. In the extreme case that $\gamma=1$, arrivals to the service queue that are not immediately matched are lost. Therefore, the request queue behaves exactly as a discrete-time GI/GI/1 queue where the services are the arrivals to the service queue. In particular, this implies that the request queue (and thus the whole system) is stable when $\lambda<\mu$. On the other hand, when $\gamma<1$, arrivals to the service queue that are not immediately matched stay in the queue for a geometrically distributed amount of time, or until they are matched. In relation to the single-server queue, this means that the unused ``services'' are stored for future use instead of being lost. Even though this does not exactly behave like a single server queue, the extra stored up services can only help in decreasing the queue length of the request queue. Therefore, the overall system is stable when $\lambda<\mu$. However, when $\lambda > \mu$, the storing of services are not enough to make the request queue stable. In Lemma \ref{lem:basicStability} given below, we formalize the above argument and establish that the stability and throughput of this system, which turn out to be the same as in a single-server queue, and do not depend on the value of $\gamma$.

\begin{lemma}\label{lem:basicStability}
For any $\gamma\in(0,1]$, we have the following.
    \begin{itemize}
        \item [(i)] If $\lambda < \mu$, then $Q(\cdot)$ is positive recurrent and the throughput is equal to $\lambda$.
        \item [(ii)] If $\lambda > \mu$, then $Q(\cdot)$ is transient and the throughput is equal to $\mu$.
    \end{itemize}
\end{lemma}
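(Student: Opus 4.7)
The plan is to attack part (i) via Foster-Lyapunov on the one-dimensional chain $Q(\cdot)$ and part (ii) via an explicit pathwise random-walk bound, and to read off the throughputs from the flow-balance identities.

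For part (i), I would work with the recursion $Q(t+1) = Q(t) + A_1(t) - A_2(t) + D_2(t)$ (using $D_1 \equiv 0$, since $\gamma_1 = 0$), where $D_2(t) \mid Q(t) = q$ is $\mathrm{Bin}(q^-, \gamma)$, and take the Lyapunov function $V(q) = q^2$. A direct computation of the conditional drift, splitting according to the sign of $q$, gives
\begin{align*}
\mathbb{E}[V(Q(t+1)) - V(Q(t)) \mid Q(t) = q] &= 2q(\lambda - \mu) + O(1), \quad q \geq 0, \\
\mathbb{E}[V(Q(t+1)) - V(Q(t)) \mid Q(t) = q] &= -\gamma(2 - \gamma)\, q^2 + O(|q|), \quad q < 0,
\end{align*}
where the coefficient $-\gamma(2-\gamma) < 0$ for $\gamma \in (0,1]$ comes from combining the first-moment drift $-2\gamma q^2$ with the $\gamma^2 q^2$ piece of $\mathbb{E}[D_2^2 \mid q]$. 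Both expressions are bounded above by a negative constant outside a finite set, so together with irreducibility the Foster-Lyapunov theorem yields positive recurrence and $\mathbb{E}_\pi[|Q|] < \infty$. To extract the throughput, I would take stationary expectations in the recursion $Q_1(t+1) = Q_1(t) + A_1(t) - M(t)$ to obtain $\mathbb{E}_\pi[M] = \lambda$, and then invoke Birkhoff's ergodic theorem to transfer this to the time average that defines the throughput.

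For part (ii), the key observation is that $D_2(t) \geq 0$ always, so iterating the $Q$-recursion gives the pathwise bound $Q(t) \geq Q(0) + \sum_{k=0}^{t-1}(A_1(k) - A_2(k))$. The SLLN together with $\lambda > \mu$ force the right-hand side to $+\infty$ almost surely, hence $Q(t) \to +\infty$ and, since $Q_1(t) \geq Q(t)$, the chain is transient. For the throughput, I would telescope the $Q_2$-recursion to obtain $\sum_{k=0}^{t-1} M(k) = Q_2(0) - Q_2(t) + \sum_{k=0}^{t-1}(A_2(k) - D_2(k))$, and use that $Q_2(t) = Q^-(t) = 0$ for all sufficiently large $t$ almost surely (because $Q(t) \to +\infty$). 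This makes both $Q_2(t)/t$ and $\frac{1}{t}\sum_{k=0}^{t-1} D_2(k)$ vanish, while the SLLN applied to $A_2(\cdot)$ contributes $\mu$, yielding throughput $\mu$.

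The step I expect to be the main obstacle is getting the Lyapunov drift in the $q < 0$ regime right: one needs to carefully combine the $-2\gamma q^2$ contribution from $\mathbb{E}[D_2 \mid q]$ with the $\gamma^2 q^2$ contribution from $\mathbb{E}[D_2^2 \mid q]$, and the exact identity $-2\gamma + \gamma^2 = -\gamma(2-\gamma) < 0$ for all $\gamma \in (0,1]$ is what makes a single quadratic Lyapunov function sufficient across the entire state space, uniformly in $\gamma$, and in particular permits the stability region to be independent of the abandonment rate.
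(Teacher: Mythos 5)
Your proposal is correct, and part (ii) coincides with the paper's argument almost verbatim (the same pathwise lower bound by a positively drifting random walk, and the same use of the fact that $Q^{-}(t)=0$ eventually; your telescoping of the $Q_2$-recursion is a slightly cleaner way to read off the throughput $\mu$ than the paper's observation that $M(t)=A_2(t)$ for all large $t$). For part (i) the skeleton is the same — a Foster--Lyapunov argument on the one-dimensional chain $Q(\cdot)$ — but the execution differs in a way worth noting. The paper uses $L(q)=|q|$, which forces a truncation argument on the arrival variable $A_2$ (choosing a threshold $\bar A$ so that the conditional means recover a drift of order $-(\mu-\lambda)/2$ on the positive half-line), and then identifies the throughput by a somewhat informal flow-balance statement. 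You instead take $V(q)=q^2$, exploit the assumed finite second moments of the arrivals together with the binomial variance identity to get drift $2q(\lambda-\mu)+O(1)$ for $q\ge 0$ and $-\gamma(2-\gamma)q^2+O(|q|)$ for $q<0$; this avoids the truncation step and, because the drift is in fact bounded by $-\epsilon(1+|q|)$ outside a finite set, the moment version of the Foster--Lyapunov criterion gives $\mathbb{E}_\pi[|Q|]<\infty$ as a byproduct, which is exactly what you need to justify cancelling $\mathbb{E}_\pi[Q_1(t+1)]=\mathbb{E}_\pi[Q_1(t)]$ and conclude $\mathbb{E}_\pi[M]=\lambda$ rigorously before invoking the ergodic theorem. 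Two small points: be explicit that the moment bound requires the $f$-drift (state-dependent) form of the criterion rather than the basic version you cite, and that irreducibility/aperiodicity (or a restriction to the accessible communicating class) still needs the short argument the paper gives via the binomial abandonments; neither is a gap, just bookkeeping.
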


The positive recurrence when $\lambda < \mu$ can be shown by using a coupling argument, by considering a single server queue with `arrivals' as the arrivals to the request queue and `services' as the arrivals to the service queue. Then, the queue length in the request queue is always smaller than the queue length of the single server queue, and the result in Part (i) follows simply by using the stability of the single server queue. In Part (ii), we use a similar coupling to show that the overall queue length $Q(\cdot)$ is lower bounded by a transient random walk. 

Another way to prove Part (i) is to consider $L(x) = |x|$ as a Lyapunov function, and show that the drift of this Lyapunov function is negative outside a finite subset of the state space whenever $\lambda <\mu$. Then, the Foster-Lyapunov theorem \cite{meyn_tweedie_1992} implies system's stability. A necessary technicality to use the Foster-Lyapunov theorem is that the underlying Markov chain should be irreducible and aperiodic. This can be shown by using the fact that abandonments follow a binomial distribution. 
Mathematical details for this is provided in Appendix \ref{app:basicStability}. 

\begin{remark}
    While the request queue 
    requires $\lambda<\mu$ to be stable, the service queue is always stable thanks to the abandonments. However, as seen in Lemma \ref{lem:basicStability}, the magnitude of the abandonments plays no role in the request queue's stability region nor in the system's throughput.
\end{remark}

\subsubsection{Abandonments on both sides}\label{sec:bothSides}

We now consider the case where there are abandonments in both queues, that is, $\gamma_1,\gamma_2>0$. This system is depicted in Figure \ref{fig:twoSidedAbandonments}. In this case, the system's stability follows due  to abandonments on both sides. However, unlike the case where only one side has abandonments, in this case, throughput does depend on the value of $\gamma_1$ and $\gamma_2$. This is formalized in the Lemma \ref{lem:trivialStability} below.

\begin{lemma}\label{lem:trivialStability}
For any $\gamma_1,\gamma_2\in(0,1]$, the process $Q(\cdot)$ is positive recurrent with unique invariant distribution $\pi$, and the throughput is almost surely equal to
\begin{align}
    C_Y = \mathbb{E}_{\pi}[M] = \mu_1 - \gamma_1\mathbb{E}_\pi\big[Q^+\big] = \mu_2 - \gamma_2\mathbb{E}_\pi\big[Q^-\big], \label{eq:firstThr}
\end{align}
where
\[ M = \min\big\{ Q^+ - D_1(Q^+) + A_1(1),\, Q^- - D_2(Q^-) + A_2(1) \big\}. \]
\end{lemma}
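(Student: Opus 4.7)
The plan is to combine a quadratic Foster--Lyapunov argument with a steady-state flow-balance identity. The quadratic Lyapunov function will simultaneously yield positive recurrence of $Q(\cdot)$, uniqueness of its invariant distribution $\pi$, and finiteness of $\mathbb{E}_\pi[Q^2]$; this last property is crucial for legitimizing the exchange of limit and expectation in the flow-balance step.

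First, I would verify that $Q(\cdot)$ is irreducible and aperiodic on its reachable subset of $\mathbb{Z}$. Since abandonments are binomial with $\gamma_i>0$, from any state $q$ the event ``all items in the non-empty queue abandon and no new items arrive'' has positive probability, so the chain reaches $0$ from any state; conversely, $0$ reaches any other reachable state through pure arrivals. Aperiodicity follows from the positive one-step return probability of $0$.

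Next, I would apply Foster--Lyapunov with $L(q) = q^2$. Letting $\Delta Q := A_1(t) - A_2(t) - D_1(t) + D_2(t)$ and expanding
\[
\mathbb{E}\bigl[L(Q(t+1)) - L(Q(t)) \,\big|\, Q(t) = q\bigr] = 2q\,\mathbb{E}[\Delta Q \mid Q(t) = q] + \mathbb{E}[(\Delta Q)^2 \mid Q(t) = q],
\]
and using $\mathbb{E}[D_i \mid Q(t)=q] = \gamma_i q_i$ and $\mathrm{Var}(D_i \mid Q(t)=q) = \gamma_i(1-\gamma_i) q_i$ with $q_1 = q^+$, $q_2 = q^-$, together with the independence of arrivals and abandonments, a direct computation gives a drift equal to $-\gamma_1(2-\gamma_1)(q^+)^2 - \gamma_2(2-\gamma_2)(q^-)^2 + O(|q|)$. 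Since $\gamma_1,\gamma_2 \in (0,1]$, the quadratic coefficients are strictly negative, so the drift is bounded above by $-c q^2 + C$ for constants $c, C > 0$. The moment form of the Foster--Lyapunov theorem (cf.\ \cite{meyn_tweedie_1992}) then yields positive recurrence, uniqueness of the invariant distribution $\pi$, and $\mathbb{E}_\pi[Q^2] < \infty$, and hence $\mathbb{E}_\pi[Q^+] + \mathbb{E}_\pi[Q^-] < \infty$.

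With first-moment finiteness in hand, the throughput identities follow from taking steady-state expectations in the per-queue recursion $Q_i(t+1) = Q_i(t) + A_i(t) - D_i(t) - M(t)$: stationarity yields $\mathbb{E}_\pi[M] = \mu_i - \gamma_i \mathbb{E}_\pi[Q_i]$ for $i \in \{1,2\}$, which are precisely the two equalities in \eqref{eq:firstThr}. The almost-sure identification of the throughput with $\mathbb{E}_\pi[M]$ then follows from the ergodic theorem for the positive-recurrent chain, using $M \leq Q^+ + A_1$ to ensure $\pi$-integrability. The main obstacle is the drift calculation itself: the positive $\gamma_i^2 (q^\pm)^2$ contribution coming from $\mathbb{E}[D_i^2]$ must be strictly dominated by the $-2\gamma_i (q^\pm)^2$ term coming from $2q\,\mathbb{E}[\Delta Q]$, and the fact that the resulting coefficient $-\gamma_i(2-\gamma_i)$ is negative for all $\gamma_i \in (0,1]$ --- while all cross terms from $\mathbb{E}[A_j D_i]$ collapse into lower-order $O(|q|)$ contributions --- is what ultimately makes the quadratic Lyapunov function succeed.
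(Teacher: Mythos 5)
Your proof is correct, and the overall architecture (Foster--Lyapunov for stability, then steady-state flow balance for the throughput identities) matches the paper's. The genuine difference is the choice of Lyapunov function: you use $L(q)=q^2$, whereas the paper uses $L(q)=|q|$ and bounds the one-step drift by the triangle inequality as $(\mu_1+\mu_2-\gamma_1|q|)\mathds{1}_{\{q\geq 0\}}+(\mu_1+\mu_2-\gamma_2|q|)\mathds{1}_{\{q<0\}}$. Your quadratic computation is correct (the key cancellation is that $q^+q^-\equiv 0$, so the cross term in $(\gamma_1 q^+-\gamma_2 q^-)^2$ vanishes, leaving a drift $\leq -\gamma_i(2-\gamma_i)(q^\pm)^2+O(|q|)$, which is indeed strictly negative definite for $\gamma_i\in(0,1]$). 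Your approach buys the extra conclusion $\mathbb{E}_\pi[Q^2]<\infty$, at the cost of tracking second-moment terms of $A_i$ and $D_i$. One small point of perspective: you state that the quadratic Lyapunov function is what legitimizes the flow-balance exchange, but the paper's linear choice already does so, because the linear drift bound is $-\min\{\gamma_1,\gamma_2\}|q|+(\mu_1+\mu_2)$, i.e.\ it decreases like $-\Omega(|q|)$ rather than merely $-\epsilon$. By the moment form of Foster--Lyapunov this already gives $\mathbb{E}_\pi[|Q|]<\infty$, which is all the flow-balance step needs. So the quadratic function is not necessary here, though it is a perfectly valid (and slightly stronger) alternative.
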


Note that $M$ denotes the number of matchings for the two-dimensional representation of the two-way matching system. Thus, the first equality in Equation \eqref{eq:firstThr} follows simply by the definition of throughput and using the ergodicity and positive recurrence of the Markov chain. For second (or third) equality, one should note that $\mu_1$ (or $\mu_2$) is the average rate of arrivals to the first (or second) queue, and using the binomially distributed nature of the abandonments, the rate of abandonment in steady-state is $\gamma_1\mathbb{E}_\pi\big[Q^+\big]$ (or $\gamma_2\mathbb{E}_\pi\big[Q^-\big]$).  Finally, since the system is ergodic, $\mathbb{E}_{\pi}[M]$ is the average matching rate in the system. Since items can only leave the system via matching or abandonment, and as the rate of arrivals equals the rate of departures in steady-state, we have $\mu_1 = \gamma_1\mathbb{E}_\pi\big[Q^+\big] + \mathbb{E}_{\pi}[M]$. This can be formally shown by equating the drift of $Q_1(\cdot)$ (or $Q^+(\cdot)$) in Equation \eqref{eq: two_way_lindley_recursion} to zero in steady-state. The proof uses the Foster-Lyapunov theorem with $L(x) = |x|$ as the Lyapunov function once more, and it is given in Appendix \ref{app:trivialStability}.

\begin{remark}
    The results from subsections \ref{sec:two_way_no_abandon}, \ref{sec: two_way_one_side} and \ref{sec:bothSides} can be combined to say that, for any value of $\gamma_1,\gamma_2 \in[0,1]$, the throughput is given by $\mu_i - \gamma_i\mathbb{E}_{\pi}[Q_i]$ when the system is stable, and by $\min\{\mu_1,\mu_2\}$ when the system is unstable.
\end{remark}

\subsection{Model for three-way matching in a Y-topology}\label{sec:Ymodel}

We now consider a discrete-time system consisting of three FIFO queues with infinite buffers, where requests arrive to one of the queues as an i.i.d. process $A(\cdot)$, with $\mathbb{E}[A(1)]=\lambda$ and $\mathbb{E}\big[A(1)^2\big]<\infty$, and qubits are generated and join each one of the other two queues as i.i.d. processes $S_1(\cdot)$ and $S_2(\cdot)$, independent of each other, with $\mathbb{E}[S_i(1)]=\mu_i$ and $\mathbb{E}\big[S_i(1)^2\big]<\infty$ for $i\in\{1,2\}$. Qubits can depart from their respective queues at each time slot via abandonments with positive probabilities $\gamma_1$ and $\gamma_2$, respectively, with the same correlation structure as in the two-sided queue of the previous subsection. This system is depicted in Figure~\ref{fig:singleThreeWay}.

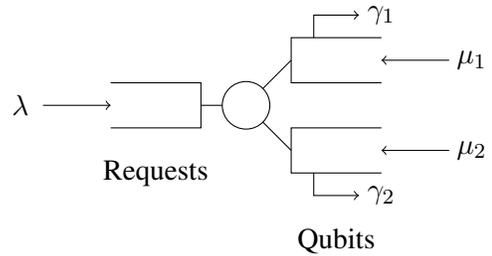
\begin{wrapfigure}{r}{0.45 \textwidth}
\centering
\begin{tikzpicture}[scale=0.6]
    \draw (5,0) -- (4,0);
    \draw (5,0) -- (6,1);
    \draw (5,0) -- (6,-1);
    
    \draw (3,-1.5) node {Requests};
    \draw (0,0) node {$\lambda$};
    \draw [->] (0.5,0) -- (2,0);
    \draw (2,0.5) -- (4,0.5) -- (4,-0.5) -- (2,-0.5);
    \filldraw[fill=white] (5,0) circle (15pt);
    
    \draw (8,1.5) -- (6,1.5) -- (6,0.5) -- (8,0.5);
    \draw (10,1) node {$\mu_1$};
    \draw [->] (9.5,1) -- (8,1);
    \draw[->] (6.5,1.5) -- (6.5,2) -- (7.5,2);
    \draw (8,2) node {$\gamma_1$};

    \draw (8,-0.5) -- (6,-0.5) -- (6,-1.5) -- (8,-1.5);
    \draw (10,-1) node {$\mu_2$};
    \draw [->] (9.5,-1) -- (8,-1);
    \draw[->] (6.5,-1.5) -- (6.5,-2) -- (7.5,-2);
    \draw (8,-2) node {$\gamma_2$};
    \draw (7,-3) node {Qubits};
\end{tikzpicture}
\setlength{\belowcaptionskip}{-10pt}
\caption{Three-way matching system in Y-topology.}
\label{fig:singleThreeWay}
\end{wrapfigure}

A request waiting in its queue can only depart from the system if it is matched with one qubit from each type. However, even if qubits are present in both qubit queues, the matching (request fulfillment) can fail. We assume that a request fulfillment is successful with probability $p>0$, and fails with probability $(1-p)$. In other words, if there is at least one item in each of the three queues, they depart immediately as a three-way matching with probability $p$, and as a two-way matching between the qubits with probability $1-p$ (with the request staying in its queue). If the request queue is empty, there is no two-way matching between the qubits. We also assume that the system is non-idling, in the sense that matching requests with pairs of qubits are attempted until one of the three queues becomes empty. Finally, we assume that there is no bound on the number of matches that can be made in any time slot.


We model this system as a discrete-time Markov chain $\big(N(\cdot), {\bf Q}(\cdot)\big)$ over the state space $\big\{(n,{\bf q})\in\mathbb{Z}_+\times\mathbb{Z}_+^2 : n.q_1.q_2 = 0\big\},$ where $N(\cdot)$ is the number of requests, and $Q_1(\cdot)$ and $Q_2(\cdot)$ are the number of qubits. Let $\mathbf Y(t)=\big\{Y_\ell(t): \ell,t\in\mathbb{Z}_+\big\}$ be a set of i.i.d. Bernoulli random variables with probability $p$, such that $Y_\ell(t)$ is the indicator that the $\ell$-th attempted three-way matching in the $t$-th time slot is successful. For $i\in\{1,2\}$, let $\big\{Z_\ell^{(i)}(t): \ell,t\in\mathbb{Z}_+\big\}$ be a set of exchangeable Bernoulli random variables with probability $\gamma_i$ such that, for each $\ell\in\mathbb{Z}_+$ and $i\in\{1,2\}$, the sequence $\big\{Z_\ell^{(i)}(t): t\in\mathbb{Z}_+\big\}$ is i.i.d.. Here $Z_\ell^{(i)}(t)$ is the indicator that the $\ell$-th qubit in the $i$-th queue in the $t$-th time slot abandons. Then, we 
have
\begin{align*}
    N(t+1) &= N(t) + A(t) - \sum_{\ell=1}^{M(t)} Y_\ell(t), && \text{ and }  &&
    Q_i(t+1) = Q_i(t) - D_i(t) + S_i(t) - M(t) 
\end{align*}
for $i\in\{1,2\}$, where
\[ D_i(t) = \sum\limits_{\ell=1}^{Q_i(t)} Z^{(i)}_\ell(t), \]
is the number of abandonments from the $i$-th qubits' queue, and $M(t)$ is the number of attempted three-way matchings. By the definition of $\mathbf Y(t)$, the term $\sum_{\ell=1}^{M(t)} Y_\ell(t)$ is a binomial random variable that represents the number of successful three-way matchings in the $t$-th time slot after $M(t)$ attempts. Further, by denoting 
\begin{equation}
    \label{eq: neg_bin}
    U(t) = \min\left\{m: \sum\limits_{\ell=1}^m Y_\ell(t) \geq N(t)+A(t) \right\},
\end{equation}
we have that the total number of three-way matchings attempted in the $t$-th slot is given by\
\[ M(t) = \min\left\{Q_1(t)-D_1(t)+S_1(t),\,\, Q_2(t)-D_2(t)+S_2(t),\,\, U(t) \right\}. \]
The term $U(t)$ in the definition of $M(t)$ is a negative binomial random variable that represents the number of three-way matching attempts needed to obtain $N(t)+A(t)$ successful ones. By the definition of $M(t)$, three-way matchings are attempted in each slot until there are either no more requests or any of the types of qubits. Thus, similar to that for the two-way matching system, $M(t)$ is defined to ensure that at least one of the queues is always empty. Finally, for this system, the throughput is defined as
\[ \text{Throughput} := \liminf\limits_{t\to\infty} \frac{1}{t} \sum\limits_{k=1}^t \sum\limits_{\ell=1}^{M(k)} Y_\ell(k). \]
Unlike in the definition of throughput of the two-way matching system, we now only consider the three-way matchings as part of the throughput, as two-way matchings in this system are failed attempts. If the Markov chain $Q(\cdot)$ is ergodic, then the throughput is simply be given by 
\[ \mathbb{E}\left[\sum_{\ell=1}^{M(\cdot)} Y_\ell(\cdot)\right] = p \mathbb{E}[M(\cdot)], \]
where the expectation is taken with respect to the steady-state distribution.

\subsubsection{Infinitely backlogged case}
We first consider a simplified case where the requests are infinitely backlogged, i.e., where $N(t)=\infty$ for all $t\geq 0$. In this case, the processes $Q_1(\cdot)$ and $Q_2(\cdot)$ behave the same as the two-way matching system of Subsection \ref{sec:bothSides}. That is, there exists a coupling such that, almost surely, we have $Q_1(t)=Q^+(t)$ and $Q_2(t)=Q^-(t)$ for all $t\geq 0$. Therefore, since $\gamma_1,\gamma_2>0$, Lemma \ref{lem:trivialStability} implies that the process ${\bf Q}(\cdot)$ is positive recurrent with invariant distribution $\pi$, and thus, $\mathbb{E}[M(\cdot)] = C_Y$. Therefore, since any matching is successful with probability $p$, the throughput of the Y-topology in the infinitely backlogged case is equal to $p\mathbb{E}[M(\cdot)] = p C_Y$.

\subsubsection{Non-infinitely backlogged case}
When the request queue $N(\cdot)$ is not infinitely backlogged, the stability of the system is no longer guaranteed by the abandonments. However, the stability region is closely related to the throughput in the infinitely backlogged case, as follows.

\begin{theorem}\label{thm:initialStability}
Let $C_Y$ be the throughput defined in Equation \eqref{eq:firstThr}. We have the following.
\begin{itemize}
\item [(i)] If $\lambda<p C_Y$, then $\big(N(\cdot),{\bf Q}(\cdot)\big)$ is positive recurrent, and the throughput is equal to $\lambda$.
\item [(ii)] If $\lambda>p C_Y$, then $\big(N(\cdot),{\bf Q}(\cdot)\big)$ is transient, and the throughput is equal to $p C_Y$.
\end{itemize}
\end{theorem}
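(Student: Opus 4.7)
The plan is to prove Theorem \ref{thm:initialStability} by exploiting the fact that, depending on whether $N(\cdot)$ is positive, the qubit dynamics of the Y-topology switch between the infinitely backlogged two-way system of Subsection \ref{sec:bothSides} (matching rate $C_Y$ by Lemma \ref{lem:trivialStability}) and a pure-abandonment dynamics (no matchings). Both parts then reduce to quantifying the effective successful-matching rate produced by this switching and comparing it to the arrival rate $\lambda$.

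\emph{Part (i).} I would apply the multi-step Foster--Lyapunov theorem to $L(n,\mathbf{q}) = n + \beta(q_1 + q_2)$ for a sufficiently small $\beta>0$. By the positive recurrence of the infinitely backlogged qubit chain (Lemma \ref{lem:trivialStability}) together with geometric ergodicity inherited from the binomial abandonments, one can choose $T$ large enough that the empirical matching rate in the infinitely backlogged auxiliary system is within any prescribed $\epsilon$ of $C_Y$ with high probability, uniformly over the initial qubit state. For $n_0$ sufficiently large, the event $E_T := \{N(t) \ge 1 \text{ for all } 0 \le t \le T\}$ holds with high probability, since the total number of matchings over $T$ slots is bounded by $Q_1(0) + \sum_{t=0}^{T-1} S_1(t)$ (each matching consumes one type-$1$ qubit), whose mean does not depend on $n_0$. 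On $E_T$ the original and infinitely backlogged qubit paths coincide under the natural coupling, so the expected number of successful three-way matchings over $T$ slots is at least $T p (C_Y - \epsilon)$. Combining this with the trivial bound $N(T) \le n_0 + \sum A(t)$ on $E_T^c$ and the abandonment-driven negative drift of $q_1+q_2$ outside a bounded set, I would obtain
\[
\mathbb{E}\big[L(N(T),\mathbf{Q}(T)) - L(n_0,\mathbf{q}_0)\big] \;\le\; -\delta T
\]
outside a finite set in $(n,\mathbf{q})$, yielding positive recurrence via the multi-step Foster--Lyapunov theorem. Throughput $=\lambda$ then follows from flow balance in the $N$-recursion under the invariant distribution.

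\emph{Part (ii).} I would show $\liminf_{T\to\infty} N(T)/T \ge \lambda - pC_Y > 0$ almost surely, which implies transience. Construct an item-level sample-path coupling (each qubit arrival carries its own independent geometric patience and a unique identifier, and the matching-success coins $\{Y_\ell\}$ are shared) under which $Q_i(t) \ge \widetilde Q_i(t)$ for all $t$ and $i\in\{1,2\}$, where $\widetilde{\mathbf{Q}}$ denotes the infinitely backlogged process started from the same state. Summing the recursion $Q_i(t+1)-Q_i(t) = -D_i(t)+S_i(t)-M(t)$ and taking expectations yields
\[
\mathbb{E}\!\left[\sum_{t=0}^{T-1} M(t)\right] = Q_i(0) - \mathbb{E}[Q_i(T)] + T\mu_i - \gamma_i\sum_{t=0}^{T-1} \mathbb{E}[Q_i(t)] \le T\mu_i - \gamma_i \sum_{t=0}^{T-1}\mathbb{E}[\widetilde Q_i(t)] + O(1) = T\,C_Y + o(T),
\]
where the last equality uses the ergodic theorem together with Lemma \ref{lem:trivialStability}, giving $\mathbb{E}_\pi[\widetilde Q_i] = (\mu_i - C_Y)/\gamma_i$. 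Hence the expected number of successful three-way matchings grows at most like $TpC_Y + o(T)$, and $\mathbb{E}[N(T)] \ge n_0 + T(\lambda - pC_Y) - o(T)$, which combined with a standard SLLN strengthening gives the almost sure statement. The throughput equals $pC_Y$: the upper bound is just shown, while the lower bound follows from the fact that $N(t)$ diverges, so that eventually the system is infinitely backlogged and matchings occur at the full rate $pC_Y$.

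The main obstacle is the sample-path coupling in Part (ii): binomial abandonments are queue-length-dependent, and a naive shared-randomness coupling does not preserve the order $Q_i \ge \widetilde Q_i$ over time, forcing the item-level construction with per-arrival geometric patience sketched above. A secondary but essential issue in Part (i) is that the single-step drift of $N$ is not negative in general, since the qubit queues need time to mix before delivering matchings at rate $C_Y$; this is precisely why the multi-step variant of Foster--Lyapunov is needed, and why the uniform mixing of the infinitely backlogged chain over the initial qubit state must be verified explicitly.
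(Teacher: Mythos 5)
Your Part (i) follows the same high-level plan as the paper's (multi-step Foster--Lyapunov, driven by the ergodicity of the two-sided matching chain), but the paper does not condition on the event $E_T$ that the request queue never empties; instead it replaces the original process by two \emph{coupled bounding} processes. Process~I always attempts all possible matchings regardless of the requests queue, so its qubit dynamics are deterministically decoupled from the request queue; the paper then proves, by induction on the inequality $\overline N(t) - \underline Q_i(t) \geq N(t) - Q_i(t)$, that its request count $\overline N$ dominates $N$ from above. Process~II is a pair of GI/M/$\infty$ queues $\overline{\bf Q}$ dominating $\bf Q$ from above. The Lyapunov function is then $\overline N + \overline Q_1 + \overline Q_2$, and the case analysis on whether $\overline Q_1+\overline Q_2$ is small or large versus $\overline N$ is what makes the multi-step drift argument go through cleanly. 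Your conditioning-on-$E_T$ route is a legitimate alternative in spirit, but it is not obviously simpler: you still need to control $\mathbb{E}\big[\,\cdot\,\mathds{1}_{E_T^c}\big]$ via Cauchy--Schwarz and a second-moment bound, and your $\mathbb{P}(E_T^c)$ bound scales with $Q_1(0)$ relative to $n_0$, so the threshold for ``$n_0$ sufficiently large'' is state-dependent; this is precisely the bookkeeping that the paper's choice of $L = \overline n + \overline q_1 + \overline q_2$ makes transparent. Also, the ``binomial abandonments are queue-length dependent'' obstacle you flag is actually straightforward under the paper's coupling: abandonment is driven by shared i.i.d. primitives $Z_\ell^{(i)}(t)$, so the $\ell$-th item in the longer queue sees the same coin as the $\ell$-th item in the shorter one, and monotonicity of $Q_i - D_i$ is immediate.

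For Part (ii) there is a genuine gap. Establishing $\mathbb{E}[N(T)] \geq n_0 + T(\lambda - p C_Y) - o(T)$ does \emph{not} imply transience of the Markov chain, and ``a standard SLLN strengthening'' from a first-moment bound to almost sure divergence does not exist without more structure. The paper bypasses this entirely by constructing a coupled process $\underline N(\cdot)$ satisfying $N(t) \geq \underline N(t)$ pathwise, where $\underline N(t) = N(0) + \sum_{k<t} A(k) - \sum_{k<t}\sum_{\ell \leq \overline{\overline M}(k)} Y_\ell(k)$ and $\overline{\overline M}$ is the (ergodic, request-independent) always-match count; the SLLN applied to $\underline N$ then gives $N(t)\to\infty$ a.s. directly. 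Your own coupling $Q_i(t) \geq \widetilde Q_i(t)$ together with $D_i(t) \geq \widetilde D_i(t)$ \emph{does} yield a pathwise bound $\sum_{s<T} M(s) \leq Q_i(0) + \sum_{s<T} S_i(s) - \sum_{s<T}\widetilde D_i(s)$, and the ergodic theorem applied to the infinitely backlogged chain gives $\frac1T\sum \widetilde D_i(s) \to \gamma_i\mathbb{E}_\pi[\widetilde Q_i]$ a.s.; you then still need a martingale-type SLLN to pass from $\frac1T\sum M(s) \leq C_Y + o(1)$ to $\frac1T\sum_s\sum_{\ell\leq M(s)} Y_\ell(s) \leq pC_Y + o(1)$. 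So the ingredients you introduce suffice, but the argument as written stops at the expectation level, which is not enough to conclude transience.
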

In the previous subsection we observed that the rate of successful matchings is at most $p C_Y$. Thus, when the requests come at a rate higher than $p C_Y$, the system behaves like it is in the infinitely backlogged case, and the number of requests diverges with time. This gives us Theorem \ref{thm:initialStability} Part (ii). 

Proving the positive recurrence for the Y-topology in Part (i) is more challenging than proving the positive recurrence of the two-way matching system. Depending on the values of $\lambda$, $\mu_1$, and $\mu_2$, the drift of the request queue might be positive for infinitely many states. This means that a straightforward argument that uses $L(x) = |x|$ as a Lyapunov function (as in previous sections) does not work here. One might be able to find a suitable Lyapunov function and show that the one-step drift is negative if queue lengths are large, or one might prove stability using fluid limits, as we do for the W-topology in the upcoming section (Section \ref{sec:W}). However, we use a multi-step Lyapunov drift argument here to prove positive recurrence.

Note that, while the request queue is positive, the qubit queues behave like a two-way matching system, where the average rate of attempted matchings is $p C_Y$. Therefore, if $\lambda<p C_Y$, the drift of the request queue is negative during this time. To formally prove Part (i), we couple the system with two simpler ones.
\begin{enumerate}
    \item In one of them, qubits attempt two-way or three-way matchings irrespective of whether a request is present or not. In particular, if both qubit queues are non-empty, a matching between the qubits occurs, and they depart the system. Further, if a request is also present at that time, it is fulfilled with probability $p$. In this case, as qubits do not wait for a request, the request queue stochastically dominates the original request queue.
    \item In the other one, no matchings are ever attempted. As a result, the qubit queues behave as two $GI/M/\infty$ queues with potentially correlated services, which stochastically dominate the original qubit queues.
\end{enumerate} 
The main advantage of these two processes is that the evolution of the qubit queues does not depend on the state of the requests queue. Therefore, we use a multi-step Lyapunov argument on these new coupled processes to establish their positive recurrence. In particular, we use a number of steps $T$ large enough so that the multi-step drift for the request queue is close enough to $T(\lambda-p C_Y)$ when the queues are large enough. Finally, since the original processes were stochastically dominated by the new ones, the positive recurrence of the former follows. The complete proof for Theorem \ref{thm:initialStability} is given in Appendix \ref{app:initialStability}.

\begin{remark}
    Contrary to the two-way matching case, here the stability region of the system depends on the abandonment probabilities, as qubits of one type sometimes have to wait for the other type of qubits to match with a request. Therefore, qubits can be lost to abandonments even if there are outstanding requests. Moreover, the stability threshold is $p C_Y$, which is obtained as an expectation with respect to the steady-state distribution of a Markov chain. This is because while the requests queue is positive, it behaves as a single server queue with Markov modulated services.
\end{remark}

\section{Three-way matching in a W-topology}\label{sec:W}

Before going into the mathematical details of the W-topology, we motivate the model using a simple example. Suppose Alex and Jun go to a fast food restaurant. Alex orders a combo of items A and B, and Jun orders a combo of items B and C. Now, as this is a fast food chain, it tries to fulfill the first order that it can. It is simple to observe that, if items A and B are available before item C, Alex will be served first, even if Jun orders first. In fact, many customers might get served before Jun, depending on the time that it takes to prepare item C. Further, the restaurant might have to throw items away if they get too cold. These kinds of systems are called assemble-to-order systems \cite{ATO2006}, with the extra condition that inventories contain perishable items \cite{graves1982application}. In a quantum switch, we have qubits instead of items, and requests instead of customers. We present the model for the W-topology in terms of a quantum switch.

Consider a discrete-time system consisting of five FIFO queues with infinite buffers, where requests arrive to two of the queues as i.i.d. process $A_1(\cdot)$ and $A_2(\cdot)$, with $\mathbb{E}[A_i(1)]=\lambda_i$ and $\mathbb{E}\big[A_i(1)^2\big]<\infty$, and qubits are generated and join each one of the other three queues as i.i.d. processes $S_1(\cdot)$, $S_2(\cdot)$, and $S_3(\cdot)$ with $\mathbb{E}[S_i(1)]=\mu_i$ and $\mathbb{E}\big[S_i(1)^2\big]<\infty$. All these five processes are independent of each other. Moreover, each qubit departs from the system at each time slot via abandonments, with probability $\gamma_1$ from the first and third queues, and with probability $\gamma_2$ from the second queue. These abandonments are independent across different time slots, but they can be correlated among different qubits in the same time slot. This system is depicted in Figure \ref{fig:W}.

\begin{wrapfigure}{R}{0.5\textwidth}
\centering
\begin{tikzpicture}[scale=0.7]
    \draw (5,0) -- (4,0);
    \draw (5,0) -- (6,1);
    \draw (5,0) -- (6,-1);
    
    \draw (5,-2) -- (4,-2);
    \draw (5,-2) -- (6,-1);
    \draw (5,-2) -- (6,-3);
    
    \draw (0,0) node {$\lambda_1$};
    \draw [->] (0.5,0) -- (2,0);
    \draw (2,0.5) -- (4,0.5) -- (4,-0.5) -- (2,-0.5);
    \filldraw[fill=white] (5,0) circle (15pt);

    \draw (0,-2) node {$\lambda_2$};
    \draw [->] (0.5,-2) -- (2,-2);
    \draw (2,-1.5) -- (4,-1.5) -- (4,-2.5) -- (2,-2.5);
    \filldraw[fill=white] (5,-2) circle (15pt);

    \draw (3,-3.5) node {Requests};

    \draw (8,1.5) -- (6,1.5) -- (6,0.5) -- (8,0.5);
    \draw (10,1) node {$\mu_1$};
    \draw [->] (9.5,1) -- (8,1);
    \draw[->] (6.5,0.5) -- (6.5,0) -- (7.5,0);
    \draw (8,0) node {$\gamma_1$};

    \draw (8,-0.5) -- (6,-0.5) -- (6,-1.5) -- (8,-1.5);
    \draw (10,-1) node {$\mu_2$};
    \draw [->] (9.5,-1) -- (8,-1);
    \draw[->] (6.5,-1.5) -- (6.5,-2) -- (7.5,-2);
    \draw (8,-2) node {$\gamma_2$};

    \draw (8,-2.5) -- (6,-2.5) -- (6,-3.5) -- (8,-3.5);
    \draw (10,-3) node {$\mu_3$};
    \draw [->] (9.5,-3) -- (8,-3);
    \draw[->] (6.5,-3.5) -- (6.5,-4) -- (7.5,-4);
    \draw (8,-4) node {$\gamma_1$};
    
    \draw (7,-5) node {Qubits};
\end{tikzpicture}
\caption{Three-way matchings in a W-topology.}
\label{fig:W}
\end{wrapfigure}
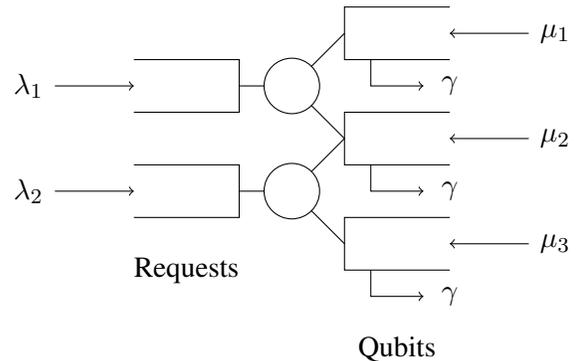

Similar to the Y-topology, we assume that a request fulfillment is successful with probability $p$ and fails with probability $(1-p)$. This means that if there is at least one item in each of three queues connected according to Figure \ref{fig:W}, they depart immediately as a three-way matching with probability $p$, or as a two-way matching between the qubits with probability $1-p$ (with the request staying in its queue). And if the request queue is empty, there is no matching between the qubits. Here we also assume that there is not bound on the number of matches that can be made in any time slot.

In the context of a quantum switch, this queueing system corresponds to the case where there are three nodes but only requests for connections between the first and second, and between the second and third. If a qubit for the second node is generated, and there are qubits for the other two nodes stored in the switch and outstanding requests for both types of connections (i.e., if there are more than one possible three-way matchings), a control policy (to be specified later) decides which connection to attempt first. We assume that the control policy is non-idling, that is, if there are qubits present to fulfill a request, then the system must attempt a three-way matching. The control policy only decides the priority between the requests queues. 

We model this system as a discrete-time Markov chain $\big({\bf N}(\cdot),{\bf Q}(\cdot)\big)$, where $N_1(\cdot)$ and $N_2(\cdot)$ are the number of requests, and $Q_1(\cdot)$, $Q_2(\cdot)$, and $Q_3(\cdot)$ are the number of qubits. Let $\mathbf Y_i(t)=\big\{Y^{(i)}_\ell(t): \ell,t\in\mathbb{Z}_+\big\}$ for $i\in\{1,2\}$ be two sets of i.i.d. Bernoulli random variables with probability $p$, such that $Y^{(i)}_\ell(t)$ is the indicator that the $\ell$-th attempted three-way matching of type $i$ at time $t$ is successful. Let $\big\{Z_\ell^{(i)}(t): \ell,t\in\mathbb{Z}_+,\, i\in\{1,3\}\big\}$ and $\big\{Z_\ell^{(2)}(t): \ell,t\in\mathbb{Z}_+\big\}$ be sets of exchangeable Bernoulli random variables with probabilities $\gamma_1$ and $\gamma_3$, respectively, such that, for each $\ell\in\mathbb{Z}_+$ and $i\in\{1,2,3\}$, the sequence $\big\{Z_\ell^{(i)}(t): t\in\mathbb{Z}_+\big\}$ is i.i.d.. Here $Z_\ell^{(i)}(t)$ is the indicator that the $\ell$-th qubit in the $i$-th queue in the $t$-th time slot abandons. We define the Markov chain recursively as follows.
\begin{align}
    N_1(t+1) &= N_1(t) + A_1(t) - \sum\limits_{\ell=1}^{M_1(t)} Y_\ell^{(1)}(t) \label{eq: w_state_vector_N1} \allowdisplaybreaks\\
    N_2(t+1) &= N_2(t) + A_2(t) - \sum\limits_{\ell=1}^{M_2(t)} Y_\ell^{(2)}(t) \label{eq: w_state_vector_N2}\allowdisplaybreaks\\
    Q_1(t+1) &= Q_1(t) - D_1(t) + S_1(t) - M_1(t)  \label{eq: w_state_vector_Q1}\allowdisplaybreaks\\
 Q_2(t+1) &= Q_2(t) - D_2(t) +S_2(t) - M_1(t) - M_2(t) \label{eq: w_state_vector_Q2}\allowdisplaybreaks\\
    Q_3(t+1) &= Q_3(t) - D_3(t) +S_3(t) - M_2(t) \label{eq: w_state_vector_Q3}
\end{align}
where
\[ D_i(t) = \sum\limits_{\ell=1}^{Q_i(t)} Z^{(i)}_\ell(t), \]
are the number of abandonments from the $i$-th qubits' queue, and $M_i(t)$ is the number of attempted matchings of type $i$ at time $t$. By definition of $\mathbf Y^{(i)}(t)$, the term $\sum_{\ell=1}^{M_i(t)} Y_\ell^{(i)}(t)$ is a binomial random variable that represents the number of successful three-way matchings in the $t$-th slots after $M_i(t)$ attempts. Further, by denoting 
\begin{equation}
\label{eq: w_neg_bin}
    U_i(t) = \min\left\{m: \sum\limits_{\ell=1}^m Y_\ell^{(i)}(t) \geq N_i(t)+A_i(t) \right\} \quad \text{for} \ i\in \{1,2\},
\end{equation}
we have that the number of two-way and three-way matchings of each type are given by
\begin{align}
    M_1(t) &= \min\big\{Q_1(t)-D_1(t)+S_1(t),\,\, Q_2(t)-D_2(t)+S_2(t) - X(t)M_2(t),\,\, U_1(t) \big\} \label{eq: w_matching_vector1}\\
    M_2(t) &= \min\big\{Q_2(t)-D_2(t)+S_2(t) - [1-X(t)]M_1(t),\,\, Q_3(t)-D_3(t)+S_3(t),\,\, U_2(t)\big\} \label{eq: w_matching_vector2}.
\end{align}
where $X(t)\in\{0,1\}$ is a (possibly randomized) control signal and the term $U_i(t)$ in the definition of $M_i(t)$ is similar as in Equation \eqref{eq: neg_bin}.
This definition implies that, in each time slot, three-way matchings are attempted until there are either no more requests, or no more of any of the types of qubits to fulfil any request. Thus, the state space of the Markov chain is $\left\{({\bf n},{\bf q})\in\mathbb{Z}_+^2\times\mathbb{Z}_+^3 : n_1.q_1.q_2 = 0, \,\, n_2.q_2.q_3 = 0\right\}.$
Finally, for the W-topology, the throughput of the $i$-th type of requests is defined as
\[ \text{Throughput}_i := \liminf\limits_{t\to\infty} \frac{1}{t} \sum\limits_{k=1}^t \sum\limits_{\ell=1}^{M_i(k)} Y^{(i)}_\ell(k). \]

The model for the W-topology can also be thought of as two three-way matching systems in a Y-topology stitched together. The process $(N_1(\cdot),Q_1(\cdot),Q_2(\cdot))$ represents the first Y-topology and the process $(N_2(\cdot),Q_2(\cdot),Q_3(\cdot))$ represents the second Y-topology, where the two Y-topologies share the resource in the qubit queue $Q_2(\cdot)$. If we consider one of the requests queues, say $N_1(\cdot)$, to be always empty ($\lambda_1 =0$), then the Y-topology corresponding to the other request queue, that is $(N_2(\cdot),Q_2(\cdot),Q_3(\cdot))$, behaves exactly like a three-way matching system in Section \ref{sec:Y}. Due to this relationship, the analysis of the W-topology is heavily based on the results in Section \ref{sec:Y}.

\begin{remark}
     The policies that can be implemented using the control signal $X(t)$ are ``non-idling'' in the sense that matchings are attempted whenever matchings are possible. In particular, if $X(t)=0$, all possible matchings amongst queues involving requests of type $1$ are attempted first, and then all possible matchings amongst queues involving requests of type $2$ are attempted with the remaining requests and qubits. If $X(t)=1$, the priority is reversed.
\end{remark}

\subsection{Infinitely backlogged case}
We start with the simplified case where both requests queues are infinitely backlogged, that is, where $N_1(t)=N_2(t)=\infty$, for all $t\geq 0$. In this case, the process ${\bf Q}(\cdot)$ is recursively defined as in equations \eqref{eq: w_state_vector_Q1}, \eqref{eq: w_state_vector_Q2} and \eqref{eq: w_state_vector_Q3}. Further, as $N_1(t)=N_2(t)=\infty$, we have that $U_1(t) = U_2(t) = \infty$ in Equation \eqref{eq: w_neg_bin}. Then the three-way matchings attempted are given by $M_1(t)$ and $M_2(t)$ in Equation \eqref{eq: w_matching_vector1} and Equation \eqref{eq: w_matching_vector2}, respectively, by substituting $U_1(t) = U_2(t) = \infty$.
In this case, we can model the system as a two-sided queue as in the following proposition.

\begin{proposition}\label{prop:equivalence}
Regardless of the control policy, the process $\tilde Q(\cdot) = Q_1(\cdot)+Q_3(\cdot) - Q_2(\cdot)$
behaves (path-wise) exactly as a two sided queue in Subsection \ref{sec:bothSides}, with arrivals $S_1(\cdot)+S_3(\cdot)$ and $S_2(\cdot)$, and abandonment probabilities equal to $\gamma_1$ and $\gamma_2$.
\end{proposition}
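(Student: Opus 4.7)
The plan is to substitute the W-topology recursions \eqref{eq: w_state_vector_Q1}--\eqref{eq: w_state_vector_Q3} directly into $\tilde Q(t+1) = Q_1(t+1) + Q_3(t+1) - Q_2(t+1)$ and observe that the matching variables $M_1(t)$ and $M_2(t)$ cancel out: the term $-M_1(t)$ appears in $Q_1(t+1)$ and $Q_2(t+1)$ with opposite signs in $\tilde Q$, and similarly $-M_2(t)$ appears in $Q_3(t+1)$ and $Q_2(t+1)$ with opposite signs. After cancellation, one obtains the policy-free recursion
$$\tilde Q(t+1) = \tilde Q(t) + \bigl[S_1(t)+S_3(t)\bigr] - S_2(t) - \bigl[D_1(t)+D_3(t)\bigr] + D_2(t),$$
which in particular does not depend on the control signal $X(t)$.

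Next, I would set $\tilde A_1(t) := S_1(t)+S_3(t)$ and $\tilde A_2(t) := S_2(t)$ as the two-sided queue's arrivals, and define $\tilde D_1(t) := D_1(t)+D_3(t)$ and $\tilde D_2(t) := D_2(t)$ as its abandonments. To verify that these have the correct conditional distributions, I need $\tilde D_1(t) \sim \mathrm{Bin}(\tilde Q^+(t),\gamma)$ and $\tilde D_2(t) \sim \mathrm{Bin}(\tilde Q^-(t),\gamma)$. The key observation is that in the infinitely backlogged case, the state-space constraints $n_1 q_1 q_2 = 0$ and $n_2 q_2 q_3 = 0$ reduce (since $N_1,N_2=\infty$) to $Q_1(t)Q_2(t)=0$ and $Q_2(t)Q_3(t)=0$ for every $t$. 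Hence at every time either $Q_2(t)=0$, giving $\tilde Q(t)\geq 0$ with $\tilde Q^+(t)=Q_1(t)+Q_3(t)$ and $\tilde Q^-(t)=0$, or $Q_1(t)=Q_3(t)=0$, giving $\tilde Q(t)\leq 0$ with $\tilde Q^+(t)=0$ and $\tilde Q^-(t)=Q_2(t)$. Since individual qubit abandonments are i.i.d.\ $\mathrm{Ber}(\gamma)$ with a common parameter, the sum $D_1(t)+D_3(t)$ conditioned on the queue lengths is $\mathrm{Bin}(Q_1(t)+Q_3(t),\gamma)$, which matches $\mathrm{Bin}(\tilde Q^+(t),\gamma)$ in the first case and is trivially zero in the second; an analogous argument handles $\tilde D_2(t)$.

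Combining these two steps, the derived recursion for $\tilde Q(\cdot)$ together with the above distributional identifications coincides exactly with the recursion of the difference process $Q_1-Q_2$ of the two-sided matching system of Subsection~\ref{sec:bothSides} with arrival processes $\tilde A_1(\cdot),\tilde A_2(\cdot)$ and common abandonment probability $\gamma$. Coupling the driving noise variables then yields path-wise equivalence, irrespective of the choice of $X(\cdot)$. The main subtlety (rather than a real obstacle) is justifying the identification of the combined abandonment $D_1(t)+D_3(t)$ with a single Binomial of parameter $\tilde Q^+(t)$; this requires invoking the non-idling constraint so that $Q_1(t)+Q_3(t)$ coincides with $\tilde Q^+(t)$ whenever it is nonzero. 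Everything else reduces to direct algebra once the $M_1,M_2$ terms cancel.
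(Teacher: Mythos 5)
Your proof is correct and follows essentially the same route as the paper's: direct algebra on the recursions \eqref{eq: w_state_vector_Q1}--\eqref{eq: w_state_vector_Q3} to eliminate the control signal, followed by identification of the arrival and abandonment primitives with those of the two-sided queue. The only cosmetic difference is that you observe $M_1(t)+M_2(t)$ cancels outright from $\tilde Q$ (so its exact value is irrelevant), whereas the paper verifies explicitly that $M_1(t)+M_2(t)$ equals the two-sided queue's matching variable $\min\{Q_1+Q_3-D_1-D_3+S_1+S_3,\ Q_2-D_2+S_2\}$ regardless of $X(t)$; both routes rely on the non-idling state constraint $(Q_1+Q_3)Q_2=0$ to identify $\tilde Q^+=Q_1+Q_3$ and $\tilde Q^-=Q_2$, and on the sum of independent $\mathrm{Bin}(Q_1,\gamma)$ and $\mathrm{Bin}(Q_3,\gamma)$ being $\mathrm{Bin}(Q_1+Q_3,\gamma)$, which you state correctly.
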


The proof of Proposition  \ref{prop:equivalence} (given in Appendix \ref{app:equivalence}) simply follows by rewriting the recursive equation for $\tilde Q(\cdot)$ and making the observation that, in infinitely backlogged case, $M_1(t)+M_2(t)$ does not depend on the value of the control signal $X(t)$. By the non-idling assumption, whenever $Q_2(\cdot)>0$, we have $Q_1(\cdot)+Q_3(\cdot)=0$, as otherwise two-way matchings are possible. Thus, we have $\tilde Q^+(\cdot) = Q_1(\cdot)+Q_3(\cdot)$ and $\tilde Q^-(\cdot) = Q_2(\cdot)$ with the condition $\tilde Q^+(\cdot).\tilde Q^-(\cdot) =0$. This describes the two-way matching system of Subsection \ref{sec:twoWayMatching}. The only difference here is in terms of the definition of the \textit{throughput}. In Subsection \ref{sec:twoWayMatching}, the throughput is defined as the average number of matchings, while here, we define throughput as the average number of `successful' matchings, with success probability $p>0$. Due to this, the throughput for $\tilde Q(\cdot)$ is equal to $p$ times the throughput of the two-way matching system in Subsection \ref{sec:twoWayMatching}. Using this equivalence, and the stability and throughput results in Lemma~\ref{lem:trivialStability}, we have the following.

\begin{corollary}
\label{cor: w_throughput_bound}
    For any non-idling control policy, the process $\tilde Q(\cdot)$ is positive recurrent, with unique invariant distribution $\pi_{1,2}$, and the total throughput of the system is
    \begin{equation}
    C_{1,2} := p\left(\mu_2 - \gamma_2 \mathbb{E}_{\pi_{1,2}}[ Q_2 ]\right) = p\left(\mu_1 + \mu_3 - \gamma_1 \mathbb{E}_{\pi_{1,2}}[ Q_1+Q_3 ]\right). \label{eq:thrBacklogged}
\end{equation}
Moreover, the individual throughputs are upper bounded by 
 \begin{align*}
    \overline{C}_{1} &:= p\left(\mu_1 - \gamma \mathbb{E}_{\pi^{(0)}_{1,2}}[ Q_1 ]\right) \qquad\text{ and }\qquad \overline{C}_{2} := p\left(\mu_3 - \gamma \mathbb{E}_{\pi^{(1)}_{1,2}}[ Q_3 ]\right),
 \end{align*}
and lower bounded by 
 \begin{align*}
     \underline{C}_{1} &:= C_{1,2} - \overline{C}_{2} = p\left(\mu_1 - \gamma \mathbb{E}_{\pi^{(1)}_{1,2}}[ Q_1 ]\right) \qquad\text{ and }\qquad \underline{C}_{2} := C_{1,2} - \overline{C}_{1} = p\left(\mu_3 - \gamma \mathbb{E}_{\pi^{(0)}_{1,2}}[ Q_3 ]\right),
 \end{align*}
where $\pi^{(0)}_{1,2}$ and $\pi^{(1)}_{1,2}$ are the steady-state distributions of the process ${\bf Q}(\cdot)$ when $X(t)=0$ and $X(t)=1$ for all $t\geq 0$, respectively.
\end{corollary}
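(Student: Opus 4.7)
My plan is to chain Proposition \ref{prop:equivalence}, the stationary balance equations for the three qubit queues, and a sample-path coupling identifying the two deterministic priority policies $X(t) \equiv 0$ and $X(t) \equiv 1$ as the extremizers of the individual throughputs. First, I would invoke Proposition \ref{prop:equivalence} to identify $\tilde Q(\cdot)$ with a two-sided queue of the form studied in Subsection \ref{sec:bothSides}, with arrival rates $\mu_1 + \mu_3$ and $\mu_2$ and common abandonment probability $\gamma > 0$. Lemma \ref{lem:trivialStability} then gives positive recurrence and a unique stationary distribution $\pi_{1,2}$, which is policy-independent; because the non-idling rule forces $\tilde Q^+(\cdot) = Q_1(\cdot) + Q_3(\cdot)$ and $\tilde Q^-(\cdot) = Q_2(\cdot)$, both $\mathbb{E}_{\pi_{1,2}}[Q_1 + Q_3]$ and $\mathbb{E}_{\pi_{1,2}}[Q_2]$ are well defined. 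Taking expectations under $\pi_{1,2}$ in Equation \eqref{eq: w_state_vector_Q2} and setting the drift to zero yields $\mu_2 = \gamma \mathbb{E}_{\pi_{1,2}}[Q_2] + \mathbb{E}_{\pi_{1,2}}[M_1 + M_2]$. Since each attempted matching succeeds independently with probability $p$ and the sum $M_1 + M_2$ is policy-independent by Proposition \ref{prop:equivalence}, the total throughput equals $p\bigl(\mu_2 - \gamma \mathbb{E}_{\pi_{1,2}}[Q_2]\bigr)$. The alternative expression for $C_{1,2}$ follows by summing the stationary balance equations obtained from \eqref{eq: w_state_vector_Q1} and \eqref{eq: w_state_vector_Q3}.

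Applying stationary balance separately to \eqref{eq: w_state_vector_Q1} and \eqref{eq: w_state_vector_Q3} next gives the individual matching rates $\mathbb{E}[M_1] = \mu_1 - \gamma \mathbb{E}[Q_1]$ and $\mathbb{E}[M_2] = \mu_3 - \gamma \mathbb{E}[Q_3]$, with expectations taken under the ${\bf Q}(\cdot)$-stationary distribution induced by whichever non-idling policy is in force; the type-$i$ throughput is then $p$ times the corresponding expression. Specializing to $X(t) \equiv 0$ and $X(t) \equiv 1$, whose stationary distributions on ${\bf Q}(\cdot)$ are $\pi^{(0)}_{1,2}$ and $\pi^{(1)}_{1,2}$, recovers the four formulas $\overline{C}_1, \underline{C}_2, \underline{C}_1, \overline{C}_2$ in the statement. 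To obtain the bounds I would set up a sample-path coupling on a common probability space carrying the same arrivals, abandonment coin flips, and matching-success indicators. Starting from identical initial states, an induction on time using the definitions in \eqref{eq: w_matching_vector1}--\eqref{eq: w_matching_vector2} would yield $Q_1^{(0)}(t) \leq Q_1^\pi(t)$ almost surely for every non-idling policy $\pi$; passing to stationary expectations then bounds the type-1 throughput above by $\overline{C}_1$. The corresponding lower bound on the type-2 throughput follows from the conservation law $\text{Throughput}_1 + \text{Throughput}_2 = C_{1,2}$ already established, and a symmetric coupling handles $\overline{C}_2$ and $\underline{C}_1$.

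The main obstacle I anticipate is the induction step in the coupling: although the sum $Q_1(\cdot) + Q_3(\cdot)$ is policy-independent by Proposition \ref{prop:equivalence}, the marginal dynamics of $Q_1$ and $Q_3$ respond differently to the priority signal $X(t)$, and the joint abandonment-then-matching evolution in \eqref{eq: w_state_vector_Q1}--\eqref{eq: w_state_vector_Q3} must be unfolded carefully to verify that $Q_1^{(0)}(t) \leq Q_1^\pi(t)$ survives one transition, in particular at boundary states where $Q_1^{(0)}(t)$ or $Q_2(t)$ is small.
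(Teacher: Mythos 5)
Your proposal follows essentially the same route as the paper: Proposition \ref{prop:equivalence} together with Lemma \ref{lem:trivialStability} gives positive recurrence of $\tilde Q(\cdot)$ and, via the same steady-state balance argument as in Lemma \ref{lem:trivialStability}, the total-throughput identity \eqref{eq:thrBacklogged}; the individual bounds are then obtained exactly as in the paper, from the strict-priority policies $X(t)\equiv 0$ and $X(t)\equiv 1$ together with the conservation law $\underline{C}_{1}+\overline{C}_{2}=\overline{C}_{1}+\underline{C}_{2}=C_{1,2}$. The sample-path coupling you sketch to show that strict priority extremizes the type-$1$ throughput is actually more detail than the paper supplies (it asserts this monotonicity without proof), and the induction step you flag--which must be made consistent with the policy-independent evolution of $Q_2(\cdot)$ and $Q_1(\cdot)+Q_3(\cdot)$ when coupling the binomial abandonments--is indeed the only place where real work remains, but the direction of the inequality and the overall structure match the paper's argument.
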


Note that, Proposition \ref{prop:equivalence} and Lemma~\ref{lem:trivialStability} imply that the process $\tilde Q(\cdot)$ is positive recurrent. As for the throughputs, we make the following observations:
\begin{itemize}
    \item[-] Since the evolution of $\tilde Q(\cdot)$ does not depend on the control signal, the steady-state distributions of $Q_2(\cdot) = \tilde Q^-(\cdot)$ and $Q_1(\cdot)+Q_3(\cdot) = \tilde Q^+(\cdot)$ also do not depend on the control signal. Then, by the same argument as in Lemma \ref{lem:trivialStability}, Equation \eqref{eq:thrBacklogged} follows.
    \item[-] Even though the steady-state distribution of $Q_1(\cdot)+Q_3(\cdot)$ does does not depend on the control policy, the marginal distributions of $Q_1(\cdot)$ and $Q_3(\cdot)$ do. Among the non-idling policies, the throughput of $Q_1(\cdot)$ is maximized when it is given strict priority, that is when $X(t) = 0$ for all $t\geq 0$. This gives us the upper bound $\overline{C}_{1}$.
    \item[-] Similarly, the throughput for $Q_1(\cdot)$ is minimized when $Q_2(\cdot)$ is given strict priority, that is when $X(t) = 1$ for all $t\geq 0$. This gives us the lower bound $\underline{C}_{1}$.
    \item[-] Finally, as the total throughput remains the same for any policy, $\underline{C}_{1} + \overline{C}_{2} = \overline{C}_{1} + \underline{C}_{2} = C_{1,2}.$
\end{itemize}

\begin{remark}
Unsurprisingly, the upper and lower bounds for the individual throughputs are obtained when our control policy gives strict priority to one of them. However, the throughput of the non prioritized matching is not zero because our non-idling condition forces us to match low priority items when there are no high priority items to be matched, which is a frequent occurrence due to abandonments. 
\end{remark}

\subsection{Partially infinitely backlogged case}
Suppose that we have $N_1(t)=\infty$ and $N_2(t)<\infty$, for all $t\geq 0$. Since the first request queue is infinite, we set $X(t)=0$ for all $t\geq 0$, giving it strict priority. In this case, the process $(N_2(\cdot),{\bf Q}(\cdot))$ is defined recursively as in equations \eqref{eq: w_state_vector_N2}, \eqref{eq: w_state_vector_Q1}, \eqref{eq: w_state_vector_Q2}, and \eqref{eq: w_state_vector_Q3}. Further, as $N_1(t)=\infty$, we have that $U_1(t) = \infty$ in Equation \eqref{eq: w_neg_bin}. Then the three-way matchings attempted are given by $M_1(t)$ and $M_2(t)$ in Equation \eqref{eq: w_matching_vector1} and Equation \eqref{eq: w_matching_vector2}, respectively, by substituting $U_1(t) = \infty$.
Here, the second requests queue (and thus the whole system) can be unstable if its arrival rate is too large. Its stability region is provided in the following lemma.

\begin{lemma}\label{lem:halfStability}
    If $\lambda_2 < \underline{C}_{2}$, then the process $\big(N_2(\cdot),{\bf Q}(\cdot)\big)$ is positive recurrent, with unique invariant distribution $\pi_1$. Moreover, the throughputs of the first and second requests queues are $C_1(\lambda_2) := p\left(\mu_1 - \gamma_1 \mathbb{E}_{\pi_1}[Q_1]\right)$
    and $\lambda_2$, respectively.
\end{lemma}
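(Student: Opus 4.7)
I would prove this using a multi-step Foster--Lyapunov argument of the same flavor as the one used for Theorem \ref{thm:initialStability}. The key observation is that when $N_2$ is large, the qubit process $\mathbf{Q}(\cdot)$ in the partial system coincides exactly with the fully backlogged chain with $X = 0$, which by Corollary \ref{cor: w_throughput_bound} is ergodic with invariant distribution $\pi_{1,2}^{(0)}$ and yields average successful type-$2$ matching rate $\underline{C}_2 > \lambda_2$. Hence the long-run drift of $N_2$ should be $\lambda_2 - \underline{C}_2 < 0$ whenever $N_2$ stays large, which is what we exploit.

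More concretely, using $V(N_2, \mathbf{Q}) = N_2$, I would argue as follows. If $N_2(t) + A_2(t)$ exceeds the available type-$2$ matching attempts at time $t$ in the fully backlogged chain---a quantity pathwise bounded by $Q_2(t) + Q_3(t) + S_2(t) + S_3(t)$---then the constraint $U_2(t)$ in \eqref{eq: w_matching_vector2} is not binding, so $M_2(t)$ equals its fully backlogged counterpart. Consequently, if one fixes a horizon $T$ and starts from $N_2(t_0)$ large compared to the typical value of $\max_{s \in [t_0, t_0 + T]}\{Q_2(s) + Q_3(s) + S_2(s) + S_3(s)\}$ (which has moments uniformly bounded in $t_0$ due to abandonments), the trajectory of $\mathbf{Q}(\cdot)$ on $[t_0, t_0 + T]$ coincides with the fully backlogged one with high probability. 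Ergodicity of the latter then gives
$$\frac{1}{T} \sum_{s = t_0}^{t_0 + T - 1} \mathbb{E}\bigl[B_2(s) \bigm| \mathbf{Q}(t_0)\bigr] \xrightarrow[T \to \infty]{} \underline{C}_2,$$
where $B_2(s) = \sum_{\ell = 1}^{M_2(s)} Y_\ell^{(2)}(s)$; taking $T$ and then $N_2(t_0)$ large enough yields a $T$-step drift at most $\tfrac{T}{2}(\lambda_2 - \underline{C}_2) < 0$. Combined with irreducibility and aperiodicity (inherited from the binomial abandonments and Bernoulli successes, as in previous sections), the multi-step Foster--Lyapunov theorem gives positive recurrence of $(N_2, \mathbf{Q})$, and hence a unique invariant distribution $\pi_1$.

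The throughput identities then follow from steady-state balance. Taking expectations of \eqref{eq: w_state_vector_N2} under $\pi_1$ gives $\lambda_2 = p\mathbb{E}_{\pi_1}[M_2]$, i.e.\ the type-$2$ throughput equals $\lambda_2$; and taking expectations of \eqref{eq: w_state_vector_Q1} under $\pi_1$ gives $\mu_1 = \gamma \mathbb{E}_{\pi_1}[Q_1] + \mathbb{E}_{\pi_1}[M_1]$, so the type-$1$ throughput equals $p\mathbb{E}_{\pi_1}[M_1] = p(\mu_1 - \gamma \mathbb{E}_{\pi_1}[Q_1]) = C_1(\lambda_2)$.

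The main obstacle I anticipate is uniformity of the ergodic convergence in the initial qubit state $\mathbf{Q}(t_0)$: the $T$-step drift bound should hold essentially independently of $\mathbf{Q}(t_0)$, which requires handling the case when $\mathbf{Q}(t_0)$ is itself large. This can likely be addressed via the same dominating-chain technique used in Theorem \ref{thm:initialStability}, namely by bounding $\mathbf{Q}$ above by a pure-abandonment (no-matching) chain with finite steady-state moments, so that $\mathbf{Q}(t_0)$ is, with high probability, of moderate size on the relevant timescale.
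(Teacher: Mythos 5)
Your overall strategy -- a multi-step Foster--Lyapunov argument plus a dominating pure-abandonment chain to control the qubit state -- is exactly what the paper does (its stated proof is a pointer to the machinery of Theorem~\ref{thm:initialStability}, with $pC_Y$ replaced by $\underline{C}_2$). Your steady-state balance argument for the two throughput identities is also the right one and matches the paper's Lemma~\ref{lem:trivialStability} argument. However, there is a real gap in how you set up the Lyapunov function, and your claim about uniform moment bounds for $\mathbf{Q}(t_0)$ is not correct as stated.

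First, with $V(N_2,\mathbf{Q})=N_2$ the sub-level sets $\{V\le K\}$ are infinite (they place no constraint on $\mathbf{Q}$), so the multi-step Foster--Lyapunov theorem does not apply to that Lyapunov function no matter how good your drift estimate is. Second, the qubit state at the start of the drift window is not ``with high probability of moderate size'' uniformly over all initial conditions: $\mathbf{Q}(t_0)$ is part of the state and can be arbitrarily large, and the ergodic estimate
$\tfrac{1}{T}\sum_{s=t_0}^{t_0+T-1}\mathbb{E}[B_2(s)\mid\mathbf{Q}(t_0)]\to\underline{C}_2$
degrades as $\mathbf{Q}(t_0)\to\infty$ for fixed $T$. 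The paper resolves both issues simultaneously by constructing the dominating pure-abandonment chain $\overline{\mathbf{Q}}$ (with $\overline{Q}_i\ge Q_i$) and a dominating request chain $\overline{N}_2$ (always-match dynamics, so $\overline{N}_2\ge N_2$ and the qubit evolution decouples from the request queue), and then using a Lyapunov function of the form $L = \overline{N}_2 + \overline{Q}_1 + \overline{Q}_2 + \overline{Q}_3$. The drift analysis is then a two-case split: when $\|\overline{\mathbf{Q}}(t)\|_1\le KT$ and $\overline{N}_2(t)\ge 2KT$ the ergodic estimate yields a drift near $T(\lambda_2-\underline{C}_2)<0$; when $\|\overline{\mathbf{Q}}(t)\|_1\ge KT$ the strong geometric contraction from abandonments dominates the at-most-linear growth of $\overline{N}_2$. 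Only with that split are the sub-level sets of $L$ finite and the drift negative outside them. You gesture at the dominating chain, but you would need to actually fold it into the Lyapunov function and run the two-case argument, rather than relying on a high-probability claim about $\mathbf{Q}(t_0)$ that does not hold uniformly.

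One further small point: the balance identity $\mu_1=\gamma\mathbb{E}_{\pi_1}[Q_1]+\mathbb{E}_{\pi_1}[M_1]$ implicitly requires $\mathbb{E}_{\pi_1}[Q_1]<\infty$; this is indeed true because $Q_1$ is stochastically dominated by a GI/M/$\infty$ queue, but it should be stated.
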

This result can be established by using the same multi-step Lyapunov argument as for the single three-way matching system in the Y-topology (Theorem~\ref{thm:initialStability}). The only difference is that, instead of having the throughput of the Y-topology when it is infinitely backlogged ($p C_Y$), here we have the throughput of a completely backlogged W-topology with strict priority to the first type of requests ($\underline{C}_{2}$). Finally, the throughput is obtained using the same argument as in the proof of Lemma \ref{lem:trivialStability}.

\begin{remark}
It is easy to see that $C_1(\lambda_2)$ is a decreasing function of $\lambda_2$, and it is thus maximized when $\lambda_2=0$ (where we have $M_2(t)=0$ for all $t\geq 0$). In that case, the throughput of the first requests queue is equal to $C_1(0) := p\left(\mu_1 - \gamma_1 \mathbb{E}_{\overline{\pi}_1}[Q_1]\right),$
where $\overline{\pi}_1$ is the steady-state distribution of the system with $\lambda_2=0$. In particular, this is equal to $p$ times the throughput of the system in a Y-topology of Subsection \ref{sec:Ymodel} with $\lambda=\lambda_1$. Analogously, we can obtain $C_2(\lambda_1)$ and its upper bound $C_2(0)$.
\end{remark}

\subsection{Stability of the non infinitely backlogged case}

In this subsection we consider the case where none of the requests queues are infinitely backlogged. In this case, the control policy plays a central role in the stability of these queues. We consider the special case of the Max-Weight control policy with random tie-breakers, where the control signal is given by
\[ X(t) = \begin{cases}
0, & \text{ if } N_1(t)+A_1(t) > N_2(t)+A_2(t) \\
Ber(1/2), & \text{ if } N_1(t)+A_1(t) = N_2(t)+A_2(t) \\
1, & \text{ if } N_1(t)+A_1(t) < N_2(t)+A_2(t).
\end{cases}
\]
Note that our Max-Weight policy compares the size of the requests queues and ignores the qubits' queues. This is because, making an analogy with a server-based queueing system, qubits play the role of ``services'', while the requests play the role of ``jobs''. Furthermore, abandonments make the queues of qubits always stable, while only matching with qubits can stabilize the queues of requests. Max-Weight scheduling was also used in  \cite{TowsleyMaxWeight,TowsleyProtocolDesign}.

Recall that, in the Y-topology of Section \ref{sec:Y}, the throughput in the infinitely backlogged case determined the stability region of the non-infinitely backlogged case. In particular, the stability region was the set of all arrival rates lower than the throughput in the infinitely backlogged case. Analogously, here in the W-topology, the throughputs obtained in the either completely or partially infinitely backlogged case serve as upper bounds for the individual arrival rates in the stability region. With this in mind, we expect the stability region of the Max-Weight policy to roughly look like the green shaded part in Figure \ref{fig:stabilityRegion}. Moreover, since the upper bounds for the throughputs are independent of the control policy, we expect any non-idling policy to be unstable outside the green shaded part in Figure \ref{fig:stabilityRegion}. This is formalized in Theorem \ref{thm:stability}.

\begin{figure}[ht!]
    \centering
    \begin{tikzpicture}[yscale=1,xscale=2]
        \filldraw[color=green!30] (0,0) -- (0,4) to[out=-5,in=155] (1.5,3.5) -- (3.5,1.5) to[out=-60,in=100] (4,0) -- (0,0);

        \draw [->] (-0.5,0) -- (4.67,0);
        \draw [->] (0,-0.5) -- (0,5);
        \draw (4.87,0) node {$\lambda_1$};
        \draw (0,5.3) node {$\lambda_2$};

        \draw (0,4) to[out=-10,in=155] (1.5,3.5) -- (3.5,1.5) to[out=-60,in=100] (4,0);


        \draw[dotted, thick] (1.5,3.5) -- (1.5,0);
        \draw[dotted, thick] (1.5,3.5) -- (0,3.5);
        \draw[dotted, thick] (3.5,1.5) -- (3.5,0);
        \draw[dotted, thick] (3.5,1.5) -- (0,1.5);

        \draw (1.5,-0.3) node {$\underline{C}_1$};
        \draw (3.5,-0.3) node {$\overline{C}_1$};
        \draw (4,-0.3) node {$C_1(0)$};

        \draw (-0.2,1.5) node {$\underline{C}_2$};
        \draw (-0.2,3.5) node {$\overline{C}_2$};
        \draw (-0.25,4) node {$C_2(0)$};
        
        \draw (1,4.1) node {$\lambda_2=C_2(\lambda_1)$};
        \draw (4.3,1) node {$\lambda_1=C_1(\lambda_2)$};

        \draw (3,2.7) node {$\lambda_1+\lambda_2 = C_{1,2}$};
    \end{tikzpicture}
    \caption{Max-Weight is stable in the green shaded area, and all non-idling policies are unstable outside of it.}
    \label{fig:stabilityRegion}
\end{figure}
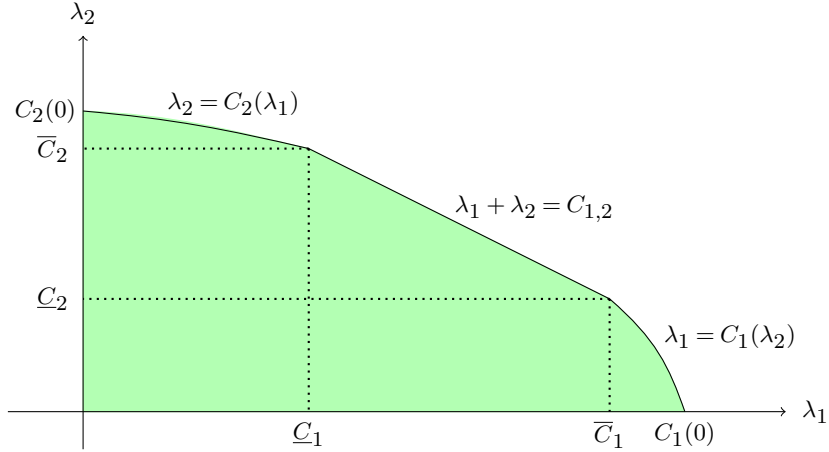

\begin{theorem}\label{thm:stability}
    We have the following.
    \begin{itemize}
        \item[(i)] If $\lambda_1+\lambda_2<C_{1,2}$, $\lambda_1<C_1(\lambda_2)$, and $\lambda_2<C_2(\lambda_1)$ (green shaded region in Figure \ref{fig:stabilityRegion}), then the process $\big({\bf N}(\cdot),{\bf Q}(\cdot)\big)$ is positive recurrent under the Max-Weight policy.
        \item[(ii)] If $\lambda_1+\lambda_2>C_{1,2}$, $\lambda_1>C_1(\lambda_2)$, or $\lambda_2>C_2(\lambda_1)$ (non shaded region in Figure \ref{fig:stabilityRegion}), then the process $\big({\bf N}(\cdot),{\bf Q}(\cdot)\big)$ is transient under any matching policy.
    \end{itemize}
\end{theorem}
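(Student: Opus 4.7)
The plan is to prove part (i) via a two-time-scale fluid-limit analysis and part (ii) by stochastic comparison against the backlogged systems of Corollary~\ref{cor: w_throughput_bound} and Lemma~\ref{lem:halfStability}. This fits the methodology already announced in the introduction.

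For part (i), I would define the scaled processes $\bar{\bf N}^{(n)}(t) := \tfrac{1}{n}{\bf N}(\lfloor nt\rfloor)$ and $\bar{\bf Q}^{(n)}(t) := \tfrac{1}{n}{\bf Q}(\lfloor nt\rfloor)$, with $\|\bar{\bf N}^{(n)}(0)\| + \|\bar{\bf Q}^{(n)}(0)\| \leq 1$. The key structural point is that abandonments at rate $\gamma > 0$ make the qubit chain uniformly geometrically ergodic under any fixed priority $X \equiv x$; hence $\bar{\bf Q}^{(n)}(t) \to 0$ and the qubit marginal equilibrates on the original (fast) time scale while the request vector moves on the slow (fluid) scale. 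The first step is to establish tightness of $\bar{\bf N}^{(n)}$ in $D^2[0,T]$ and Lipschitz regularity of any subsequential limit $\bar{\bf N}$ using $L^1$ bounds on the per-slot increments (these increments are controlled by the arrival and qubit-generation rates). The second step is to identify the ODE satisfied by $\bar{\bf N}$ by ergodic averaging of the fast qubit chain conditional on the current Max-Weight decision. In the region $\{n_1 > n_2 > 0\}$ Max-Weight fixes $X = 0$, giving drift $(\lambda_1 - \overline{C}_1,\, \lambda_2 - \underline{C}_2)$ by Corollary~\ref{cor: w_throughput_bound}; symmetrically, on $\{n_2 > n_1 > 0\}$ the drift is $(\lambda_1 - \underline{C}_1,\, \lambda_2 - \overline{C}_2)$; along the diagonal $\{n_1 = n_2 > 0\}$, a Filippov-style convex combination of these two regimes is forced so as to keep the trajectory on the diagonal, producing total drift $\lambda_1 + \lambda_2 - C_{1,2}$; on the axes the drift of the nonzero coordinate reduces to $\lambda_1 - C_1(\lambda_2)$ or $\lambda_2 - C_2(\lambda_1)$ by Lemma~\ref{lem:halfStability}. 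Taking $V({\bf n}) = n_1 + n_2$ as a Lyapunov function, the three hypotheses in (i) yield $\dot V \leq -\delta$ for some $\delta > 0$ on every region of the state space, so the fluid model drains to zero in finite time uniformly in the initial condition, and a standard fluid-stability criterion gives positive recurrence of $({\bf N}(\cdot),{\bf Q}(\cdot))$.

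The main obstacle is rigorously justifying the averaging of the fast qubit chain and identifying the limiting ODE across the discontinuities at the diagonal and on the axes. I would attack this with martingale-concentration arguments combined with a uniform geometric ergodicity bound for the qubit chain that is uniform in the control signal (available because abandonments give exponential mixing independent of $X$). On the diagonal the mixing weight between $X=0$ and $X=1$ is pinned down by requiring that the marginal drifts of $n_1$ and $n_2$ be equal, giving a unique Filippov selection; on the axes, the reduction to Lemma~\ref{lem:halfStability} requires showing that a non-Lipschitz reflection of the empty coordinate is well defined in the scaling limit, which I expect to obtain via a boundary-layer argument exploiting again the exponential mixing of the qubit chain.

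For part (ii), I handle the three sub-cases separately. If $\lambda_1 + \lambda_2 > C_{1,2}$, Proposition~\ref{prop:equivalence} shows that regardless of the control signal the aggregate successful matching rate is bounded above by $C_{1,2}$, so $N_1(t) + N_2(t)$ is bounded below by a random walk with positive drift $\lambda_1 + \lambda_2 - C_{1,2}$ and is transient. If $\lambda_1 > C_1(\lambda_2)$ (with $\lambda_2 < \underline{C}_2$, which the geometry of Figure~\ref{fig:stabilityRegion} forces in this sub-case), I couple the dynamics with the partially-backlogged system of Lemma~\ref{lem:halfStability} under strict priority to type-$1$ (the most favorable regime for $N_1$), using common arrival, qubit-generation, and matching-success sequences: a pathwise comparison shows that the type-$1$ departure rate of the original system is bounded above by $C_1(\lambda_2)$, so $N_1(t)$ has positive drift $\lambda_1 - C_1(\lambda_2) > 0$ and diverges. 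The case $\lambda_2 > C_2(\lambda_1)$ is symmetric, completing the characterization of the stability region.
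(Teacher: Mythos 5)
Your proposal follows the paper's proof essentially step for step: part~(i) via the same two-time-scale fluid limit (tightness of the scaled $\mathbf{N}$ paths, Lipschitz subsequential limits, ergodic averaging of the fast qubit chain to identify the piecewise drift of Definition~\ref{def:fluid}, a Filippov-type selection on the diagonal, and Dai's fluid criterion), and part~(ii) via pathwise coupling against fully- and partially-backlogged auxiliary systems in which matchings are always attempted. One minor inaccuracy worth flagging: on the axis $n_j=0$ with $\lambda_j\geq\underline{C}_j$ the averaged fluid rates are $\overline{C}_i$ and $\underline{C}_j$ rather than $C_i(\lambda_j)$ and $\lambda_j$ (the partially-backlogged system is unstable there and the trajectory re-enters the interior), but this does not affect your Lyapunov estimate $\dot V\leq-\delta$ with $V(\mathbf{n})=n_1+n_2$, since the sum-drift is still $\lambda_1+\lambda_2-C_{1,2}<0$ in that sub-case.
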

In particular, this theorem implies that Max-Weight is throughput optimal, as it is under the simplifying assumptions of \cite{TowsleyMaxWeight,TowsleyProtocolDesign}.

\begin{remark}
    In order to prove Theorem~\ref{thm:stability}, we use the stability of the Y-topology given by Theorem~\ref{thm:initialStability}. Therefore, we cannot obtain Theorem~\ref{thm:initialStability} as a special case of Theorem~\ref{thm:stability} (when $\lambda_2=0$), as this would create a circular argument.
\end{remark}

\begin{remark}
    It can be shown that the capacity region defined in Theorem~\ref{thm:stability} is convex by exploiting the same fluid limit approach as for this theorem. Indeed, given two arrival rates for which Max-Weight is stable, we can construct a policy that alternates between the stochastic behaviors of Max-Weight for each stable arrival rate (and use the fluid limit approach to prove its stability).
\end{remark}

\begin{proof}{Proof sketch of Theorem \ref{thm:stability}:}
To establish the positive recurrence of $\big({\bf N}(\cdot),\,{\bf Q}(\cdot)\big)$ under Max-Weight policy in its stability region, we use a fluid limit approach. This is obtained in six steps, with some intermediate results being of independent interest and stated later as standalone results.
\begin{enumerate}
    \item {\bf Re-scaling the process:} The first step is to define the scaling for the processes ${\bf N}(\cdot)$ and ${\bf Q}(\cdot)$ under which we take the fluid limit. Since our objective is to prove stability, we accelerate time by some scaling factor $k$ and scale down space by the same scaling factor $k$. To obtain a meaningful limit, the initial condition is also scaled up by $k$. The fluid limit for single server queues typically uses this scaling. Still, it is not appropriate for infinite server queues, as the sample paths would collapse to zero regardless of the initial condition. In our setting, since the qubit queues have abandonments (which gives them a similar behavior to infinite server queues), the re-scaled process of qubits collapses to zero. Therefore, we focus on the fluid limit of ${\bf N}(\cdot)$.
    \item {\bf Tightness of sample paths:} If the arrival processes to all queues were uniformly bounded, then a linear interpolation of the discrete time processes would yield uniformly Lipschitz functions. In that case, the tightness of sample paths would follow immediately by Arzel\`a-Ascoli's theorem. However, since we only have a bound on the second moments of the arrival processes, we use the fact that the sample paths are approximately uniformly Lipschitz to show that there exists a measurable set $\mathcal{C}$, with $\mathbb{P}(\mathcal{C}) = 1$, such that for all $\omega\in\mathcal{C}$, any sequence of sample paths of the re-scaled processes $\left\{{\bf N}^{(k)}(\omega,\cdot)\right\}_{k=1}^\infty$ contains a further subsequence that converges uniformly over compacts to a uniformly Lipschitz trajectory ${\bf n}(\cdot)$, as $k\to\infty$. In particular, this argument relies on the Lipschitz continuity of the limiting trajectories and does not need further regularity assumptions on the drift. Therefore, it allows us to obtain tightness even when the drift is discontinuous.
    \item {\bf Derivatives of the limit points:} We characterize the derivative of any limit point ${\bf n}(\cdot)$ at any regular point, and show that it is identical to the drift of a certain differential equation (cf. Definition \ref{def:fluid}). If $n_1(t)>0$ and $n_2(t)>0$, the derivatives are characterized by obtaining bounds on how much the process can change in an $\epsilon$ amount of (re-scaled) time, and taking the limit as $k\to\infty$. This gives us bounds on how much the fluid solution changes in an $\epsilon$ amount of time, which reflect the averaging effect of the qubit processes. Then, we divide this change by $\epsilon$ and take the limit as $\epsilon\to 0$ to obtain the derivatives. If either $n_1(t)=0$ or $n_2(t)=0$, the derivatives are obtained by exploiting the previous case, and the continuity of derivatives at the regular times. In particular, we obtain that the limit point ${\bf n}(\cdot)$ must be a solution of the differential equation, yielding also, as a corollary, the existence of its solutions.
    \item {\bf Uniqueness of limit points:} Since the drift is not continuous, the uniqueness of limit points cannot be obtained by Picard's theorem for differential equations. Therefore, this uniqueness needs to be established in an ad-hoc way, by concatenating a finite number of trajectories that are provably unique by Picard's Theorem.
    \item {\bf Stability of the fluid solutions:} We exploit the structure of the derivatives of the fluid limits to show that all fluid trajectories converge to $0$ in finite time. In particular, we show that this time is upper bounded by an affine function of the norm of the initial condition. This finite-time convergence is crucial for the purposes of obtaining positive recurrence of the original processes.
    \item {\bf From fluid stability to stochastic stability:} Finally, we obtain the positive recurrence of the original Markov chain $\big({\bf N}(\cdot),\,{\bf Q}(\cdot)\big)$ by combining the finite time convergence of the fluid limit of ${\bf N}(\cdot)$ to zero with the collapse of ${\bf Q}(\cdot)$ to zero in the limit, and applying known results~\cite{DaiHarrisonBook}.
\end{enumerate}
Step 1 is explained in more detail in Subsection \ref{sec:rescaling}. Steps 2-4 establish the almost sure convergence of the re-scaled processes $\left\{{\bf N}^{(k)}(\omega,\cdot) \right\}_{k=1}^\infty$, uniformly over compacts, to their fluid limit. This is stated as a standalone result in Theorem \ref{thm:fluid}. The stability of the fluid limits implied by Step 5 is stated in Theorem \ref{thm:uniquenessAndStability}. Finally, the details of Step 6 and the proof of the second part of Theorem \ref{thm:stability} are given in Appendix \ref{app:stability}.
\hfill $\square$\end{proof}

\subsection{Transient analysis}
In this subsection, we will study the transient behavior of the system through a fluid approximation. Moreover, this approximation is also used to establish the positive recurrence of the system when none of the requests queues are infinitely backlogged.

\subsubsection{Fluid approximation}
We first define the (deterministic) fluid model for our system.

\begin{definition}[Fluid model]\label{def:fluid}
    Given an initial condition ${\bf n^0}\in\mathbb{R}_+^2$ with $\left\|{\bf n^0}\right\|_1>0$, a continuous function ${\bf n}:[0,\infty)\to \mathbb{R}_+^2$ is said to be a solution to the fluid model (or fluid solution) if:
    \begin{itemize}
        \item [(i)] ${\bf n}(0)={\bf n^0}$.
        \item [(ii)] For all $t\geq 0$, outside of a set of Lebesgue measure zero, ${\bf n}(t)$ is differentiable and satisfies
        \begin{align*}
    \frac{dn_1(t)}{dt} = \lambda_1 - R_1({\bf n}(t)), \ \text{ and } \ \frac{dn_2(t)}{dt} = \lambda_2 - R_2({\bf n}(t)),
\end{align*}
where $R_i({\bf n})$ is as follows:
\begin{itemize}
    \item[(a)] If $n_i>n_j>0$, then $R_i({\bf n}) = \overline{C}_i \ \text{ and } \ R_j({\bf n}) = \underline{C}_j.$
    \item[(b)] If $n_i=n_j>0$, then
    \begin{align*}
        R_i({\bf n}) = \text{proj}_{\big[\underline{C}_i,\, \overline{C}_i\big]} \left( \frac{C_{1,2}+\lambda_i-\lambda_j}{2} \right) \ \text{ and } \
        R_j({\bf n}) = \text{proj}_{\big[\underline{C}_j,\, \overline{C}_j\big]} \left( \frac{C_{1,2}+\lambda_j-\lambda_i}{2} \right).
    \end{align*}
    \item[(c)] If $n_i>n_j=0$, then 
    \begin{align*}
        R_i({\bf n}) = \overline{C}_i \mathds{1}_{\{\lambda_j \geq \underline{C}_j\}} + C_i(\lambda_j) \mathds{1}_{\{\lambda_j < \underline{C}_j\}} \ \text{ and } \
        R_j({\bf n}) = \underline{C}_j \mathds{1}_{\{\lambda_j \geq \underline{C}_j\}} + \lambda_j \mathds{1}_{\{\lambda_j < \underline{C}_j\}}.
    \end{align*}
    \item[(d)] If $n_i=n_j=0$, then $R_i({\bf n}) = \lambda_i \ \text{ and } \
        R_j({\bf n}) = \lambda_j.$
\end{itemize}
\end{itemize}
\end{definition}

These rates have a very close connection with the throughputs in the infinitely backlogged systems, which we explore below.
\begin{itemize}
    \item[(a)] If $n_i(t)>n_j(t)>0$, then both queues will remain positive and the $i$-th queue will remain the largest one for some time after time $t$. As a result, both requests queues will behave as if they were infinitely backlogged during that period, while giving full priority to the $i$-th queue. Thus, the fluid rates $R_i({\bf n}(t))$ and $R_j({\bf n}(t))$ are exactly the throughputs of that case.
    \item[(b)] If $n_i(t)=n_j(t)>0$, then both queues will remain positive for some time after time $t$. As a result, both requests queues will behave as if they were infinitely backlogged. Since we have $n_1(t)=n_2(t)$ and we are using Max-Weight, the control signal gives priority to one or the other queue with the goal of equalizing them. In particular, the drifts of the queues are equalized if
    \begin{align*}
        R_i({\bf n}(t)) = \frac{C_{1,2}+\lambda_i-\lambda_j}{2} \ \text{ and } \
        R_j({\bf n}(t)) = \frac{C_{1,2}+\lambda_j-\lambda_i}{2}.
    \end{align*}
    However, the individual throughputs are lower and upper bounded by constants, so these rates might be infeasible. Hence, the fluid rates will be the projection of these ideal rates onto the set of possible rates.
    \item[(c)] If $n_i(t)>n_j(t)=0$, then the $i$-th queue will remain positive and will remain the largest one for some time after time $t$. Since the other request's queue length is zero (at least at the fluid scale), the fluid model will behave as the partially infinitely backlogged system. If $\lambda_j<\underline{C}_j$, then this system is stable, and the fluid rates are equal to the throughputs in that case. However, if $\lambda_j > \underline{C}_j$, then the partially backlogged system is not stable, and the smaller queue will become positive (and hence have the same fluid rate as if both were positive). In the case $\lambda_j = \underline{C}_j$ we also have $\overline{C}_i=C_i(\lambda_j)$, and the fluid rate follows by continuity of the rates as a function of ${\bf \lambda}$.
    \item[(d)] Since $\|{\bf n}(0)\|_1>0$, then the case $n_i(t)=n_j(t)=0$ can only occur for some $t>0$, and only if the overall system is stable. Hence, the fluid rates are what make ${\bf n}$ stay at zero.
\end{itemize}

\begin{remark}
As it was the case for the throughputs in the infinitely backlogged system, when $n_1,n_2>0$, it can be checked that we have $R_i({\bf n})\in \big[\underline{C}_i,\, \overline{C}_i\big]$ and $R_1({\bf n})+R_2({\bf n})= C_{1,2}$.
This means that the fluid rates are in a segment within a $2$-dimensional space, drawn in red in Figure \ref{fig:rateCases}. In the parlance of the Max-Weight policies, this means that the possible ``schedules'' for this policy lie on this segment when both queues are positive.
\end{remark}

\subsubsection{Convergence of sample paths to a fluid solution}\label{sec:rescaling}
We now formally justify the use of the fluid model as an approximation of the stochastic system. In order to obtain a fluid approximation, we accelerate time and contract the state by some common scaling factor $k$. That is, we consider the sequence of scaled processes $\left\{\left({\bf N}^{(k)}(\cdot),{\bf Q}^{(k)}(\cdot)\right)\right\}_{k=1}^\infty$ such that
\[ {\bf N}^{(k)}(t) = \frac{{\bf N}(kt)}{k} \qquad \text{and} \qquad {\bf Q}^{(k)}(t) = \frac{{\bf Q}(kt)}{k}, \]
for all $t\geq 0$ and $k\geq 1$. Moreover, to obtain a meaningful limit, the initial condition also needs to be of order $k$.

Since time is accelerated by a factor of $k$, then the scaled abandonment rate is initially $\mathcal{O}\big(k^2\big)$, emptying the queues in $o(1)$ units of (re-scaled) time. Formally, since the $Q_i(\cdot)$ processes are upper bounded by GI/M/$\infty$ queues with potentially correlated services (where qubits only depart due to abandonments), we have
\begin{align}
    \lim\limits_{k\to\infty} \frac{1}{k} \mathbb{E}\Big[ Q_i(kt) \Big] &\leq \lim\limits_{k\to\infty} \frac{1}{k} \left[ \left( kQ_i(0) - \frac{\mu_i}{\gamma_i}\right) (1-\gamma_i)^{kt} + \frac{\mu_i}{\gamma_i} \right] = 0, \label{eq:collapseOfQ}
\end{align}
for all $t>0$. Therefore, for the fluid limit, we only consider the case where $\left\|{\bf Q}^{(k)}(0)\right\|_1=0$, for all $k\geq 1$. In this case, we have the following sample path convergence result.

\begin{theorem}[Convergence of sample paths]\label{thm:fluid}
    Fix $T>0$, and ${\bf n^0}\in\mathbb{R}_+^2$ with $\left\|{\bf n^0}\right\|_1>0$. If
    \[ \left\|{\bf Q}^{(k)}(0)\right\|_1=0, \quad \forall\,k\geq 1, \quad \text{ and } \quad \lim\limits_{k\to\infty} \left\| {\bf N}^{(k)}(0) - {\bf n^0} \right\|_1 = 0, \quad a.s., \]
    then
    \[ \lim\limits_{k\to\infty}\sup\limits_{t\in[0,T]} \left\| {\bf N}^{(k)}(t) - {\bf n}(t) \right\|_1 = 0, \quad a.s., \]
    where ${\bf n}(\cdot)$ is a fluid solution with initial condition ${\bf n^0}$. Furthermore, such solution is unique.
\end{theorem}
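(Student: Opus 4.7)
The plan is to follow the three-phase strategy outlined in Steps 2--4 of the sketch of Theorem~\ref{thm:stability}: first establish pre-compactness of the sequence $\{{\bf N}^{(k)}(\omega,\cdot)\}$ on a full-measure set, then characterize the derivatives of any limit point at its regular times, and finally prove uniqueness so that the whole sequence converges to the single fluid solution.

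First I would deal with \textbf{pre-compactness}. Because the arrival processes have only a finite second moment, the piecewise linear interpolations of ${\bf N}^{(k)}$ are not uniformly Lipschitz in $k$, so Arzel\`a--Ascoli does not apply directly. The remedy is the functional strong law of large numbers: on a full-measure set $\mathcal{C}$, the scaled arrival processes $k^{-1}A_i(\lfloor kt\rfloor)\to\lambda_i t$ uniformly over compacts. Combined with the fact that the matching rates $M_1(t),M_2(t)$ in \eqref{eq: w_matching_vector1}--\eqref{eq: w_matching_vector2} are bounded by arrivals plus qubit queue sizes, this gives an ``$\varepsilon$-Lipschitz'' bound of the form $|N_i^{(k)}(t)-N_i^{(k)}(s)|\le L|t-s|+\varepsilon_k(\omega)$ with $\varepsilon_k(\omega)\to 0$, uniformly over compacts in $t,s$. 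A standard diagonal argument then extracts, from every subsequence, a further subsequence converging uniformly over compacts to a Lipschitz limit ${\bf n}(\cdot)$.

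The main obstacle is \textbf{identifying the derivative} of any such limit ${\bf n}(\cdot)$ at regular times $t$, because the qubit queues ${\bf Q}(\cdot)$ evolve on a faster time scale and must be averaged out. Fix a regular $t$ with ${\bf n}(t)$ in the interior of a region of Definition~\ref{def:fluid}, say $n_1(t)>n_2(t)>0$. For all $k$ large and $s\in[t,t+\varepsilon]$, the Max-Weight control signal forces $X(ks)=0$, so over a window of length $k\varepsilon$ the slow processes $N_1,N_2$ barely change at fluid scale while the fast processes ${\bf Q}(ks)$ see the requests as effectively infinitely backlogged with strict priority to queue $1$. By the ergodicity established in Corollary~\ref{cor: w_throughput_bound} (with invariant distribution $\pi_{1,2}^{(0)}$), together with standard two-time-scale averaging bounds, one obtains
\[
\frac{1}{k\varepsilon}\sum_{u=kt}^{k(t+\varepsilon)-1}\sum_{\ell=1}^{M_1(u)}Y_\ell^{(1)}(u)\to \overline{C}_1,
\quad
\frac{1}{k\varepsilon}\sum_{u=kt}^{k(t+\varepsilon)-1}\sum_{\ell=1}^{M_2(u)}Y_\ell^{(2)}(u)\to \underline{C}_2,
\]
almost surely as $k\to\infty$. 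Dividing the resulting increments by $\varepsilon$ and letting $\varepsilon\downarrow 0$ yields $\dot n_i(t)=\lambda_i-R_i({\bf n}(t))$ as in case~(a). The regions of case~(b)--(d) are handled analogously: case (b) by observing that Max-Weight forces randomization proportional to the drifts, case (c) by invoking the partially infinitely backlogged analysis of Lemma~\ref{lem:halfStability}, and the boundary case~(d) by continuity from cases (a)--(c) together with the collapse of ${\bf Q}$ recorded in \eqref{eq:collapseOfQ}. This shows that every limit point is a fluid solution.

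Finally, \textbf{uniqueness} of the fluid solution with given ${\bf n}^0$ must be established without appealing to Picard's theorem, because $R(\cdot)$ is discontinuous across the diagonal and the axes. The plan is piecewise: within each of the open regions of Definition~\ref{def:fluid} the drift is constant (cases (a), (c), (d)) or an affine projection that is Lipschitz in the relevant variables (case (b)), so on each such region Picard's theorem applies and the solution is uniquely determined until the first exit time. One then checks that only finitely many region changes can occur in any bounded time interval: case~(a) transitions to case~(b) only when the larger queue decreases to meet the smaller one, case~(b) persists or transitions cleanly to case~(c) or~(d), and case~(d) is absorbing until absorbed in $0$. Concatenating these finitely many uniquely determined pieces gives global uniqueness on $[0,T]$. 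Since every subsequential limit coincides with this unique ${\bf n}(\cdot)$, the full sequence converges uniformly on $[0,T]$ almost surely, completing the proof.
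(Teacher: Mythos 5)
Your architecture coincides with the paper's: approximate-Lipschitz tightness from the functional SLLN, identification of derivatives of limit points at regular times by freezing a small window and comparing with the (fully or partially) backlogged systems, and ad-hoc piecewise uniqueness by concatenating linear pieces (your uniqueness step is essentially the paper's, since coordinates that hit zero stay at zero and all pieces have constant drift). However, two steps are named rather than supplied, and they are precisely where the work lies. First, the appeal to ``standard two-time-scale averaging bounds'' invokes a tool that is not available here: the fast chain ${\bf Q}(\cdot)$ has an infinite state space and the drift is discontinuous, which is exactly why the paper avoids any generic averaging theorem. What the paper does instead is couple the original system path-wise to four auxiliary processes (both requests backlogged with $X\equiv 0$, both backlogged with $X\equiv 1$, and the two partially backlogged systems), prove in Lemma \ref{lem:nice_set} that on a \emph{single} full-measure set the scaled matching counts of these auxiliary processes converge over every window with rational (hence all) endpoints, and then show in Proposition \ref{prop:derivatives} that on $[kt,k(t+\epsilon)]$ the original matching counts coincide \emph{exactly} with the auxiliary ones, because $N^{(k)}_1>N^{(k)}_2>0$ throughout the window forces $X(j)=0$ and prevents the negative-binomial term in \eqref{eq: w_matching_vector1}--\eqref{eq: w_matching_vector2} from ever attaining the minimum. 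You state the right ingredients (ergodicity of the backlogged system, $X\equiv 0$ on the window), but without the exact-coincidence argument and the countable-window construction you do not get an almost-sure statement holding simultaneously for all $t$, $\epsilon$ and all subsequences; the null sets proliferate.

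Second, in case (b) the claim that ``Max-Weight forces randomization proportional to the drifts'' is not an observation one can make directly; taken literally it would require a state-space-collapse or local-time analysis of $N_1-N_2$ on the fast time scale, which neither you nor the paper carries out. The paper never computes a time-sharing fraction: when the equalized rate is feasible it argues at a regular point by contradiction (Lemma \ref{lem:absurd}) --- if $\dot n_1\neq \dot n_2$ the trajectory would leave the diagonal, but the one-sided case (a) drifts have the opposite sign there, contradicting regularity --- and then combines $\dot n_1=\dot n_2$ with $R_1+R_2=C_{1,2}$; when the equalized rate is infeasible the trajectory leaves the diagonal immediately and the derivative at $t$ is inherited from the one-sided case (a) drift by regularity of $t$ (the same device handles the boundary sub-case $\lambda_j=\underline{C}_j$ in case (c) and case (d)). With these two repairs, which fit naturally into your outline, the plan matches the paper's proof.
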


\begin{remark}
    In light of this theorem, a solution to the fluid model, ${\bf n}(\cdot)$, can be thought of as a deterministic approximation to the sample paths of the stochastic process ${\bf N}(\cdot)$ when all coordinates of ${\bf N}(0)$ are very large and time is accelerated proportionally to $\|{\bf N}(0)\|_1$. Note that the fluid model does not include a variable associated with the different number of qubits ${\bf Q}(\cdot)$. This is because the qubits process ${\bf Q}(\cdot)$ evolves on a faster time scale than the requests process ${\bf N}(\cdot)$. We thus have a process with two different time scales: on the one hand, the qubits' queues evolve on a fast time scale, and from its perspective, the requests process ${\bf N}(\cdot)$ appears static; on the other hand, the requests process ${\bf N}(\cdot)$ evolves on a slower time scale and from its perspective, the qubits' queues appear to be at stochastic equilibrium. This latter property is manifested in the drift of the fluid model through $R_1({\bf n})$ and $R_2({\bf n})$, which can be interpreted as the long-term average three-way matching rates of the first and second types of requests, respectively, when the requests processes are fixed at the state ${\bf n}$.
\end{remark}
On the technical side, the proof is challenging for two reasons:
\begin{enumerate}
    \item {\bf Discontinuity of the drift:} Since the drift is discontinuous, and in particular not Lipschitz continuous, we cannot apply well-known fluid limit theorems that rely the Lipschitz continuity of the drift (such as Kurtz's theorem for density dependent Markov processes). Furthermore, this discontinuity also prevents us from applying Picard's theorem to establish the existence and uniqueness of fluid solutions, which need to be proved in an ad-hoc way.
    \item {\bf Infinite-state averaging effect:} Since the ${\bf Q}(\cdot)$ process evolves on a faster time scale than ${\bf N}(\cdot)$, the evolution of the latter at the fluid scale depends on an averaged version of ${\bf Q}(\cdot)$, which is also non-standard. Moreover, the fact that the ${\bf Q}(\cdot)$ process has an infinite state space further complicates the proof.
\end{enumerate}
In order to overcome these two challenges, we prove this theorem by adapting the approach used in \cite{positiveResult}. The complete proof is given in Appendix \ref{app:fluid}.

\subsubsection{Transient analysis of the fluid solutions}

We now analyze the transient behavior of the fluid solutions. In particular we show that, if the arrival rates are small enough, then fluid solutions converge to zero in finite time. This is formalized in the following statement.

\begin{theorem}[Stability of fluid solutions]\label{thm:uniquenessAndStability}
    Let ${\bf n}(t)$ be the unique fluid solution with initial condition ${\bf n^0}$. If $\lambda_1+\lambda_2<C_{1,2}$, $\lambda_1<C_1(\lambda_2)$, and $\lambda_2<C_2(\lambda_1)$, then there exists a constant $B$ that does not depend on the initial condition ${\bf n^0}$ such that
    $ \|{\bf n}(t)\|_1=0$ for all $t\geq B \left\|{\bf n^0}\right\|_1. $
\end{theorem}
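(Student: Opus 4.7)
The plan is to take $V(\mathbf{n}):=n_1+n_2=\|\mathbf{n}\|_1$ as a Lyapunov function and show that it satisfies a uniformly negative drift bound on $\{\mathbf{n}:\|\mathbf{n}\|_1>0\}$. Concretely, I would define
\[
\delta \;:=\; \min\bigl\{C_{1,2}-\lambda_1-\lambda_2,\; C_1(\lambda_2)-\lambda_1,\; C_2(\lambda_1)-\lambda_2\bigr\},
\]
which is strictly positive by the three stability hypotheses, and aim to establish that at every regular point $t$ with $\|\mathbf{n}(t)\|_1>0$ one has $\tfrac{d}{dt}\|\mathbf{n}(t)\|_1 \le -\delta$. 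Since the fluid solution is absolutely continuous (indeed Lipschitz, by the proof of Theorem~\ref{thm:fluid}), integrating this bound yields $\|\mathbf{n}(t)\|_1 \le \|\mathbf{n^0}\|_1 - \delta t$ until the solution first reaches the origin, hence $\|\mathbf{n}(t)\|_1 = 0$ for all $t \ge B\|\mathbf{n^0}\|_1$ with $B := 1/\delta$. I would then invoke case (d) of Definition~\ref{def:fluid}, for which $R_i(\mathbf{0})=\lambda_i$ and so $\dot n_i = 0$, to conclude that the origin is absorbing so the bound persists for all larger times.

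The key algebraic point to verify is that $R_1(\mathbf{n})+R_2(\mathbf{n}) = C_{1,2}$ whenever both coordinates are strictly positive. In case (a) this is the identity $\overline{C}_i + \underline{C}_j = C_{1,2}$ already appearing in Corollary~\ref{cor: w_throughput_bound}. In case (b), the two ideal rates $(C_{1,2}\pm(\lambda_i-\lambda_j))/2$ sum to $C_{1,2}$, and a short case analysis on the projection shows the sum is preserved: since $\underline{C}_j = C_{1,2} - \overline{C}_i$, whenever one ideal rate is projected up to $\overline{C}_i$ the other is necessarily projected down to $\underline{C}_j$, and symmetrically. Thus in cases (a) and (b) the drift equals $\lambda_1+\lambda_2 - C_{1,2} \le -\delta$.

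For the boundary case (c) with $n_i(t) > n_j(t) = 0$, I would split according to whether $\lambda_j \ge \underline{C}_j$ or $\lambda_j < \underline{C}_j$. In the first subcase the rates are once again $\overline{C}_i$ and $\underline{C}_j$, giving the same bound $\dot V = \lambda_1+\lambda_2 - C_{1,2} \le -\delta$. In the second subcase $R_i = C_i(\lambda_j)$ and $R_j = \lambda_j$, so
\[
\frac{d}{dt}\|\mathbf{n}(t)\|_1 \;=\; \lambda_i - C_i(\lambda_j) \;\le\; -\delta,
\]
by the relevant individual stability hypothesis; nonnegativity of $n_j$ is also preserved along the boundary, since $\dot n_j = \lambda_j - R_j \ge 0$ in both subcases.

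The main obstacle is really just bookkeeping: one has to track which of the four cases of Definition~\ref{def:fluid} is active, verify the identity $R_1+R_2=C_{1,2}$ through the projection in case (b), and handle the two subcases of (c) separately. No subtle properties of the individual rates beyond this sum identity and the definition of $\delta$ are required. It is worth emphasizing that the argument does not show that the largest queue is monotonically decreasing---indeed, in case (a) the drift of $n_i$ can be positive when $\lambda_i > \overline{C}_i$, which is precisely the counter-intuitive transient behavior flagged in the introduction---but that pathology is absorbed by the faster decrease of $n_j$, so the sum $V$ still decreases linearly at rate at least $\delta$.
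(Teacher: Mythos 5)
Your proof is correct, and it takes a genuinely different (and cleaner) route than the paper. The paper's argument is a two-phase one: it first shows that while both coordinates are positive, the sum decreases at rate $C_{1,2}-\lambda_1-\lambda_2$, so one coordinate reaches zero by a time $T_1$; it then bounds the size of the remaining coordinate at $T_1$ and shows it reaches zero after an additional time $T_2$, yielding a constant $B$ that is a compound expression in $T_1$ and $T_2$. You instead observe that the single Lyapunov function $V=\|\mathbf{n}\|_1$ has a uniformly negative drift $\dot V\le-\delta$ at every regular point with $V>0$, with $\delta$ being the minimum of the three stability margins, after which integration gives $B=1/\delta$ immediately. The algebraic facts you use -- that $R_1+R_2=C_{1,2}$ whenever both coordinates are positive (including through the projection of case (b), which you verify correctly via $\overline{C}_i+\underline{C}_j=C_{1,2}$), and that $\dot V=\lambda_i-C_i(\lambda_j)$ in the absorbed-boundary subcase of (c) -- are the same identities the paper uses, but marshalled into a single one-shot drift bound rather than a concatenated trajectory argument. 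Your version buys a simpler constant and avoids the paper's need to separately argue that a coordinate stays at zero once it gets there (which you get for free since you only track the sum); the paper's phase decomposition gives a slightly finer picture of the trajectory shape, but for the statement as posed your argument is arguably the more natural proof.
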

The proof involves exploiting the fact that the drift is piece-wise constant to obtain bounds on the times when the first and second coordinates of ${\bf n}(\cdot)$ hit zero, and it is given in Appendix \ref{app:uniquenessAndStability}.

When the assumptions of Theorem \ref{thm:uniquenessAndStability} hold (i.e., when the system is stable under Max-Weight), there are three qualitatively different cases for the transient behavior of the fluid solutions, depending on the values of $\lambda_1$ and $\lambda_2$. These are depicted in Figure \ref{fig:rateCases} in the shaded green (Case 1), turquoise (Case 2), and blue (Case 3) areas.

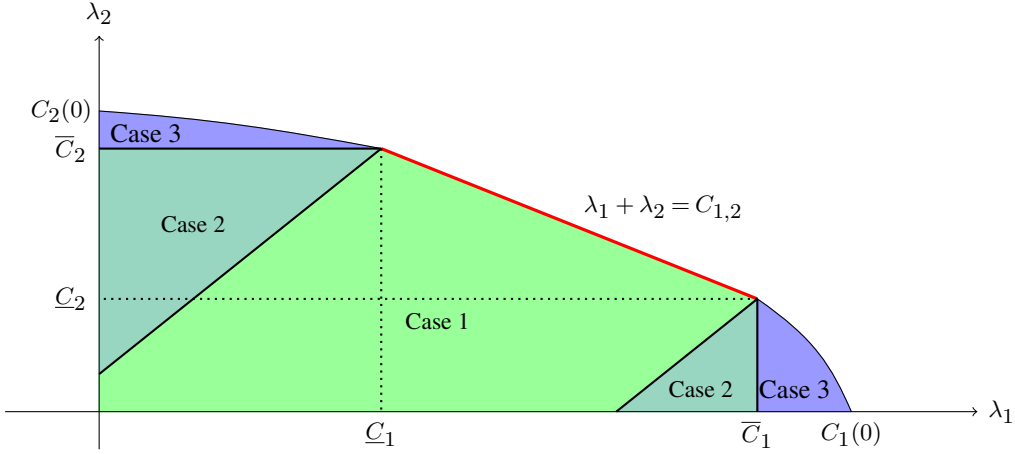
\begin{figure}[ht!]
    \centering
    \begin{tikzpicture}[yscale=1,xscale=2.5]
        

        \filldraw[color=green!40] (0,0) -- (0,0.5) -- (1.5,3.5) -- (3.5,1.5) -- (2.75,0) -- (0,0);
        \filldraw[color=blue!40!green!40] (0,0.5) -- (1.5,3.5) -- (0,3.5) -- (0,0);
        \filldraw[color=blue!40!green!40] (2.75,0) -- (3.5,0) -- (3.5,1.5) -- (2.75,0);
        \filldraw[color=blue!40] (3.5,0) -- (3.5,1.5) to[out=-60,in=100] (4,0) -- (3.5,0);
        \filldraw[color=blue!40] (0,3.5) -- (0,4) to[out=-10,in=155] (1.5,3.5) -- (0,3.5);

        \draw [->] (-0.5,0) -- (4.67,0);
        \draw [->] (0,-0.5) -- (0,5);
        \draw (4.8,0) node {$\lambda_1$};
        \draw (0,5.3) node {$\lambda_2$};

        \draw (0,4) to[out=-10,in=155] (1.5,3.5) -- (3.5,1.5) to[out=-60,in=100] (4,0);
        \draw[very thick, color=red] (1.5,3.5) -- (3.5,1.5);

        \draw[thick] (2.75,0) -- (3.5,1.5);
        \draw[thick] (0,0.5) -- (1.5,3.5);

        \draw[thick] (1.5,3.5) -- (0,3.5);
        \draw[thick] (3.5,1.5) -- (3.5,0);
        \draw[dotted, thick] (3.5,1.5) -- (0,1.5);
        \draw[dotted, thick] (1.5,3.5) -- (1.5,0);

        \draw (1.5,-0.3) node {$\underline{C}_1$};
        \draw (3.5,-0.3) node {$\overline{C}_1$};
        \draw (4,-0.3) node {$C_1(0)$};

        \draw (-0.15,1.5) node {$\underline{C}_2$};
        \draw (-0.15,3.5) node {$\overline{C}_2$};
        \draw (-0.2,4) node {$C_2(0)$};

        \draw (3,2.7) node {$\lambda_1+\lambda_2 = C_{1,2}$};
        
        \draw (1.8,1.2) node {Case 1};
        
        \draw (0.5,2.5) node {Case 2};
        \draw (3.2,0.3) node {Case 2};
        
        \draw (0.25,3.7) node {\small Case 3};
        \draw (3.7,0.3) node {\small Case 3};
    \end{tikzpicture}
    \caption{Pictorial representation of the three different cases.}
    \label{fig:rateCases}
\end{figure}

\noindent{\bf Case 1:} Suppose that the assumptions of Theorem \ref{thm:uniquenessAndStability} hold, and that $\lambda_1 + \lambda_2 - C_{1,2} \geq 2\max\left\{ \lambda_1 - \overline{C}_1,\,\, \lambda_2 - \overline{C}_2 \right\}.$
This corresponds to the case where the arrival rates $\lambda_1$ and $\lambda_2$ are relatively close to each other (area shaded in green in Figure \ref{fig:rateCases}), and implies
\begin{align*}
    \frac{\lambda_1+\lambda_2-C_{1,2}}{2} \geq \lambda_1 - \overline{C}_1, \quad \text{ and } \quad \frac{\lambda_1+\lambda_2-C_{1,2}}{2} \geq \lambda_2 - \overline{C}_2.
\end{align*}
        
        
        

\begin{figure}
    \centering
    \begin{subfigure}{.45\textwidth}
    \hspace{-35pt} \includegraphics[scale = 0.3]{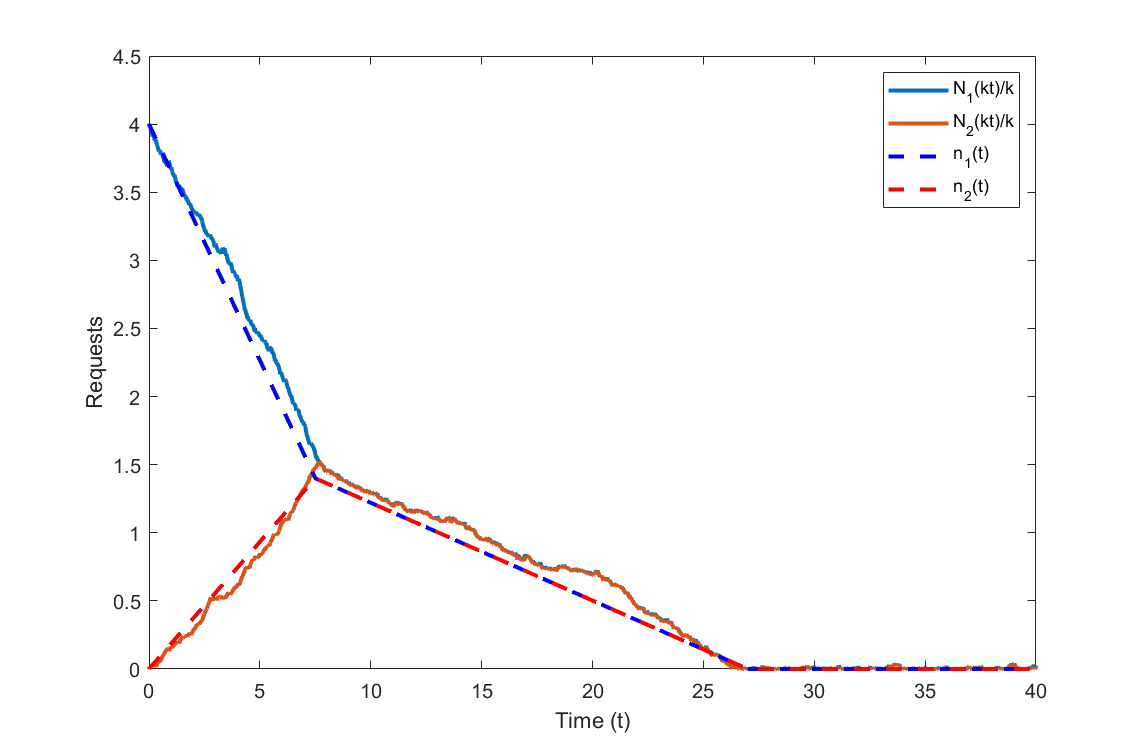}
    \caption{Case 1}
    \label{fig:case1}
    \end{subfigure}
    \hspace{-10pt} \begin{subfigure}{.45\textwidth}
        \centering
    \includegraphics[scale=0.3]{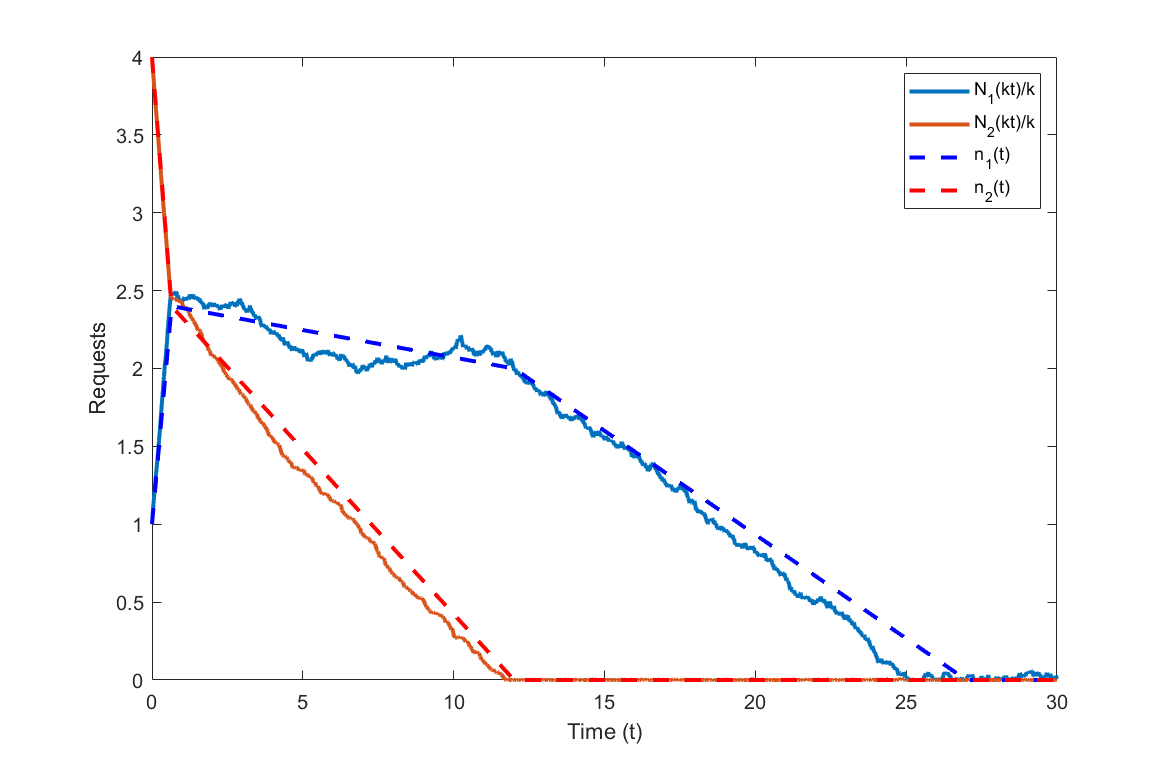}
    \caption{Case 2}
    \label{fig:case2}
    \end{subfigure}
    \caption{Sample trajectories of the number of requests in Case 1 and Case 2, both for the fluid limit and for the appropriately scaled stochastic system (with $k=1000$)}
\end{figure}

Note that the right-hand side in the two equations above are the minimum drifts that $n_1(\cdot)$ and $n_2(\cdot)$ can ever achieve while they are both positive, and the left hand side is the average of their drifts when they are both positive. Thus, when $n_1(t)=n_2(t)$, these conditions imply that both coordinates have the same drift, so
\begin{align*}
\quad \quad \frac{dn_1(t)}{dt}=\frac{dn_2(t)}{dt} = \frac{\lambda_1+\lambda_2-C_{1,2}}{2}. 
\end{align*}
While the largest coordinate will always decrease in this case, the smaller one can either increase or decrease. The key feature of this case is that if/when they intersect, they will remain together until they hit zero (as in Figure~\ref{fig:case1}). This is perhaps the typical behavior that one would expect of a pair of queues scheduled by a Max-Weight policy, which tends to equalize queues and exhibits state space collapse in heavy traffic, as seen in \cite{MW}.

\noindent{\bf Case 2:} Suppose that the assumptions of Theorem \ref{thm:uniquenessAndStability} hold, and that either for $i=1$ or for $i=2$, $\lambda_1 + \lambda_2 - C_{1,2} < 2\left(\lambda_i - \overline{C}_i\right) \text{ and } \lambda_i < \overline{C}_i.$
This corresponds to the case where one of the arrival rates is relatively larger than the other, but not extremely large (area shaded in turquoise in Figure \ref{fig:rateCases}). The first equation implies that both coordinates can never have the same drift while they are both positive. Therefore, if the coordinates become equal and positive at some point, they will not remain equal. Moreover, the second equation implies that the minimum drift while they are both positive is still negative, and thus the largest queue will always be decreasing. The trajectories of the processes $n_1(\cdot)$ and $n_2(\cdot)$ in this case look like the ones in Figure~\ref{fig:case2}. Since $n_1(0)>n_2(0)$, at the beginning $n_1(\cdot)$ will be decreasing and $n_2(\cdot)$ will be increasing. When they intersect, instead of remaining together as in Case 1, $n_2(\cdot)$ decreases at a slower rate than $n_1(\cdot)$. Finally, after $n_1(\cdot)$ hits zero, $n_2(\cdot)$ decreases at a higher rate until it hits zero as well.

\noindent{\bf Case 3:} Suppose that the assumptions of Theorem \ref{thm:uniquenessAndStability} hold, and that for $i=1$ or $i=2$, $\lambda_1 + \lambda_2 - C_{1,2} < 2\left(\lambda_i - \overline{C}_i\right) \text{ and } \lambda_i \geq \overline{C}_i.$

\begin{wrapfigure}{R}{0.5\textwidth}
    \centering
\includegraphics[scale=0.3]{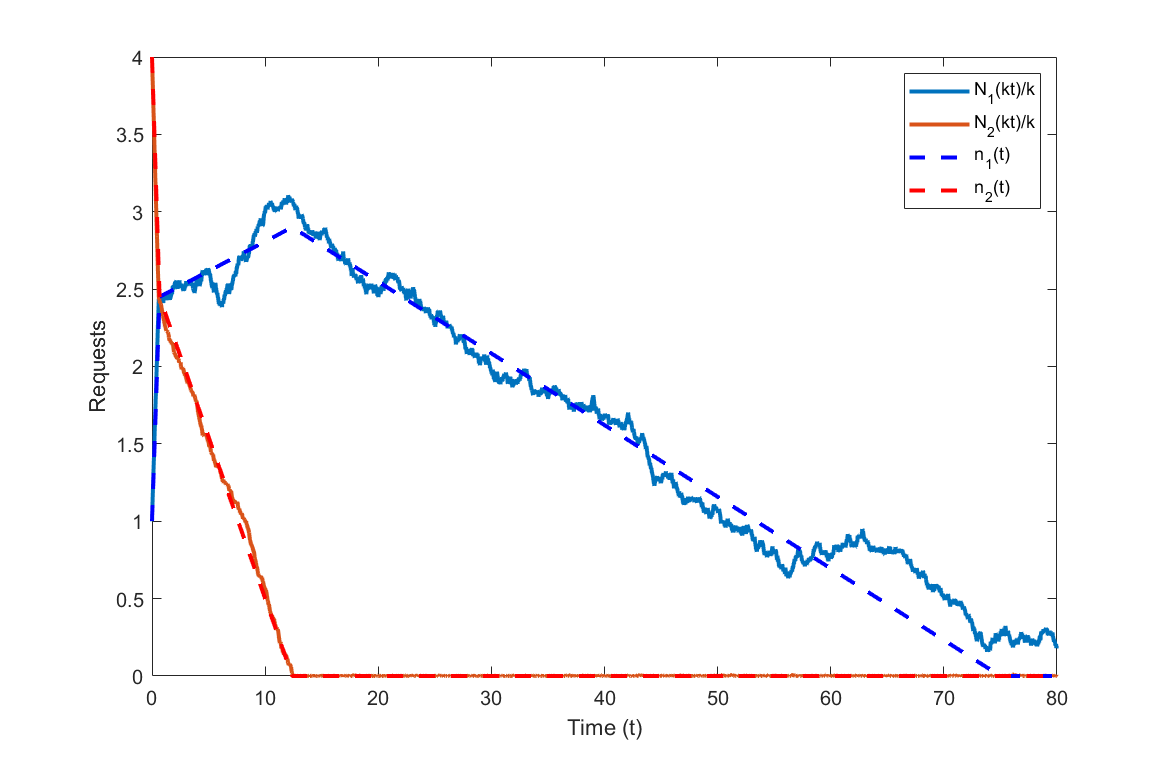}
    \caption{Sample trajectories of the number of requests in Case 3, both for the fluid limit and for the appropriately scaled stochastic system (with $k=1000$).}
    \label{fig:case3}
\end{wrapfigure}

This corresponds to the case where one of the arrival rates is almost as large as it can be while being stable, and therefore also relatively larger than the other arrival rate (area shaded in blue in Figure \ref{fig:rateCases}). These equations imply that both coordinates cannot have the same drift while they are both positive, and that the minimum drift for either $n_2(\cdot)$ or $n_1(\cdot)$ is non-negative while both coordinates are positive. Therefore, even if the coordinates become equal and positive at some point, they will not remain equal. Moreover, one of the coordinates is non-decreasing as long as the other is also positive, even when it is the largest one, and thus given the highest priority by our Max-Weight policy. For example. the trajectories of the processes $n_1(\cdot)$ and $n_2(\cdot)$ in this case could look like the ones in Figure~\ref{fig:case3}. Until they intersect, the trajectories behave the same as in cases 1 and 2. However, after they intersect, $n_2(\cdot)$ does not decrease, and in fact it increases if the inequality in the second equation is strict. This goes on until $n_1(\cdot)$ hits zero, at which point $n_1(\cdot)$ starts to decrease until it hits zero as well. This happens because, even when the first requests queue has strict priority, the non-idling condition forces the system into performing matchings involving the second request queue when there are not enough qubits to match with the first one, thus reducing the overall throughput of the latter. This is surprising because, when Max-Weight is used, the largest queue is usually decreasing instead of increasing as in this case.

\section{Conclusions and future work}\label{sec:conclusions}

In this paper we presented a general queueing model for a quantum switch. Due to its complexity, we only analyzed it for a couple of simple topologies. We showed that the Max-Weight policy is throughput optimal, and that the closure of its stability region is larger than the convex hull of the throughputs obtained when the requests queues are infinitely backlogged (which correspond to only cases 1 and 2 in Figure \ref{fig:rateCases}), a setting that was studied in the literature. Moreover, using a fluid limit approximation we showed that the decoherence of the qubits can produce the counter-intuitive phenomenon where the largest queue has positive drift under a Max-Weight policy, even when the overall system is stable.

There are several avenues of future work that we may pursue. The main one would be to extend our analysis to more complex networks. We conjecture that, in such networks, the stability region can still be obtained as the convex hull of all throughputs obtained by partially infinitely backlogging the system, in all possible ways. Therefore, this presents a combinatorial challenge that needs to be solved. The other directions would involve generalizing our model to include other features, such as the following.
\begin{itemize}
    \item[(i)] Heterogeneous abandonment and/or multi-way matching success probabilities.
    \item[(ii)] The possibility for more than two nodes to be simultaneously connected through a more complex entanglement process with more than two qubits.
    \item[(iii)] More general abandonment (decoherence) distributions, and the impact of different scheduling disciplines in that case.
\end{itemize}

\section*{Acknowledgements}
This work was partially supported by NSF Grant CMMI-2140534.

\bibliographystyle{imsart-nameyear}
\bibliography{references}



\begin{appendix}

\section{Proof of Lemma \ref{lem:basicStability}}\label{app:basicStability}

\begin{itemize}
    \item [(i)] First note that, when $\lambda<\mu$, a negative state is accessible from any positive state. Moreover, abandonments make sure that $0$ is accessible from any negative state. Therefore, the Markov chain $Q(\cdot)$ is irreducible. Since abandonments can also make the chain go from any negative state to $0$ in one step, $Q(\cdot)$ is also aperiodic.\\
    
    Consider the Lyapunov function $L(Q)=|Q|$. Then, using the triangular inequality and the law of total probability, we have
    \begin{align*}
        &\mathbb{E}[L(Q(t+1))-L(Q(t)) \mid Q(t)=q ] \\
        &\qquad\qquad\qquad = \mathbb{E}[ | Q(t) + A_1(t) - A_2(t) + D_2(t) | - |Q(t)| \mid Q(t)=q ] \\
        &\qquad\qquad\qquad \leq \mathbb{E}[ A_1(t) + | Q(t) - A_2(t) + D_2(t) | - |Q(t)| \mid Q(t)=q ] \\
        &\qquad\qquad\qquad = \lambda + \mathbb{E}[ | Q(t) - A_2(t) | - |Q(t)| \mid Q(t)=q ]\mathds{1}_{\{q\geq 0\}} \\
        &\qquad\qquad\qquad\qquad\qquad + \mathbb{E}[ | Q(t) - A_2(t) + D_2(t) | - |Q(t)| \mid Q(t)=q ]\mathds{1}_{\{q < 0\}} \\
        &\qquad\qquad\qquad \leq \lambda + \mathbb{E}[ | Q(t) - A_2(t) | - |Q(t)| \mid Q(t)=q ]\mathds{1}_{\{q\geq 0\}} \\
        &\qquad\qquad\qquad\qquad\qquad + \mathbb{E}[ A_2(t) + | Q(t) + D_2(t) | - |Q(t)| \mid Q(t)=q ]\mathds{1}_{\{q < 0\}} \\
        &\qquad\qquad\qquad = \lambda + \Big( \mathbb{E}[ - A_2(t) \mid Q(t)=q,\,\, A_2(t) \leq q ] \mathbb{P}(A_2(t) \leq q) \\
        &\qquad\qquad\qquad\qquad + \mathbb{E}[ A_2(t) - 2 Q(t) \mid Q(t)=q, \,\, A_2(t) > q ] \mathbb{P}(A_2(t) > q) \Big) \mathds{1}_{\{q\geq 0\}} \\
        &\qquad\qquad\qquad\qquad\qquad + \mathbb{E}[ A_2(t) - D_2(t) \mid Q(t)=q ]\mathds{1}_{\{q < 0\}} \\
        &\qquad\qquad\qquad \leq \lambda - \Big( \mathbb{E}[A_2(t) \mid A_2(t) \leq q ] \mathbb{P}\big(A_2(t) \leq q\big) - \mathbb{E}[A_2(t) \mid A_2(t) > q ] \mathbb{P}\big(A_2(t) > q\big) \Big)\mathds{1}_{\{q\geq 0\}} \\
        &\qquad\qquad\qquad\qquad\qquad\qquad\qquad\qquad\qquad\qquad\qquad\qquad\qquad\qquad\qquad\qquad\qquad + (\mu-\gamma|q|)\mathds{1}_{\{q< 0\}}.
    \end{align*}
    Since $\mathbb{E}[A_2(t)]=\mu>\lambda$, there exists $\bar{A}$ large enough such that
    \[ \mathbb{E}\left[A_2(t) \mid A_2(t) \leq \bar{A} \right] \mathbb{P}\left(A_2(t) \leq \bar{A}\right) - \mathbb{E}\left[A_2(t) \mid A_2(t) > \bar{A} \right] \mathbb{P}\left(A_2(t) > \bar{A}\right) \geq \frac{\mu+\lambda}{2}. \]
    If $|q|\geq\max\left\{\bar{A}, 2\mu/\gamma\right\}$, we have
    \begin{align*}
        \mathbb{E}\big[L(Q(t+1))-L(Q(t)) \mid Q(t)=q \big] &\leq -\left( \frac{\mu-\lambda}{2} \right) < 0.
    \end{align*}
    Combining this with the facts that there are only finitely many states with $|q|<\max\left\{\bar{A}, 2\mu/\gamma\right\}$, and that $Q(\cdot)$ is irreducible and aperiodic, we conclude that $Q(\cdot)$ is positive recurrent by the Foster-Lyapunov theorem \cite{meyn_tweedie_1992}.
    
    Finally, since $Q(\cdot)$ is positive recurrent, and since the throughput is equal to the time average departure rate of the queue with no abandonments, then the throughput is also equal to its arrival rate  $\lambda$.
    \item [(ii)] Note that 
    \begin{align*}
        Q(t) &= Q(0) + \sum\limits_{k=0}^{t-1} \big[ A_1(k) - A_2(k) + D_2(k) \big] \\
        &\geq Q(0) + \sum\limits_{k=0}^{t-1} \big[ A_1(k) - A_2(k) \big].
    \end{align*}
    Therefore, $Q(\cdot)$ is lower bounded by a random walk. Since $\lambda>\mu$, this is a positively biased random walk that diverges to $+\infty$ a.s., and thus so does $Q(\cdot)$. In particular, this implies that for almost every sample path $\omega$, there exists a time $T(\omega)$ such that $Q(\omega,t)>0$ for all $t\geq T(\omega)$. It follows that $M(\omega,t)=A_2(\omega,t)$ for all $t\geq T(\omega)$, and thus
    \begin{align*}
        \text{Throughput} &= \liminf\limits_{t\to\infty} \frac{1}{t} \sum\limits_{k=1}^t M(\omega,k) \\
        &= \liminf\limits_{t\to\infty} \frac{1}{t} \sum\limits_{k=1}^{T(\omega)} M(\omega,k) + \frac{1}{t} \sum\limits_{k=T(\omega)+1}^t A_2(\omega,k),
    \end{align*}
    which converges to $\mu_2$ for almost every sample path $\omega$.
    \end{itemize}

\section{Proof of Lemma \ref{lem:trivialStability}} \label{app:trivialStability}
The irreducibility and aperiodicity of this chain follows from the same argument as in the proof of Lemma \ref{lem:basicStability}. Consider the Lyapunov function $L(Q)=|Q|$. Then, using the triangular inequality, we get
    \begin{align*}
        \mathbb{E}[L(Q(t+1))-L(Q(t)) \mid Q(t)=q ] &\leq (\mu_1+\mu_2-\gamma_1|q|)\mathds{1}_{\{q\geq 0\}} + (\mu_1+\mu_2-\gamma_2|q|)\mathds{1}_{\{q< 0\}}.
    \end{align*}
    If
    \begin{equation} 
        |q|\geq \frac{\mu_1+\mu_2 + \epsilon}{\min\{\gamma_1,\gamma_2\}}, \label{eq:fewStates}
    \end{equation}
    for some $\epsilon>0$, we have
    \begin{align*}
        \mathbb{E}\big[L(Q(t+1))-L(Q(t)) \,\big|\, Q(t)=q \big] &\leq -\epsilon < 0.
    \end{align*}
    Combining this with the fact that there are finitely many states where Equation \eqref{eq:fewStates} does not hold, and with the irreducibility and aperiodicity of $Q(\cdot)$, it follows that $Q(\cdot)$ is positive recurrent with unique invariant distribution $\pi$ by the Foster-Lyapunov theorem \cite{meyn_tweedie_1992}. Finally, the recursive definition of $Q(\cdot)$ implies that
    \[ M(t) = Q^+(t) - Q^+(t+1) + A_1(t) - D_1(t). \]
    Moreover, when $Q(\cdot)$ starts in steady-state, we have $\mathbb{E}_\pi[Q^+(t)]=\mathbb{E}_\pi[Q^+(t+1)]<\infty$. Therefore, in steady-state, we have
    \begin{align*}
        \mathbb{E}_\pi[M(t)] &= \mathbb{E}_\pi[Q^+(t) - Q^+(t+1) + A_1(t) - D_1(t)] \\
        &= \mathbb{E}_\pi[A_1(t) - D_1(t)] \\
        &= \mu_1 - \gamma_1 \mathbb{E}_\pi[Q^+].
    \end{align*}
    It follows that
    \begin{align*}
        \text{Throughput} &= \liminf\limits_{t\to\infty} \frac{1}{t} \sum\limits_{k=1}^t M(k) \\
        &= \mathbb{E}_\pi[M(1)] \\
        &= \mu_1 - \gamma_1 \mathbb{E}_\pi[Q^+].
    \end{align*}
    The argument for $Q^-$ is analogous.

\section{Proof of Theorem \ref{thm:initialStability}} \label{app:initialStability}

We present the proof of each part of Theorem \ref{thm:initialStability} separately.

\subsection{Positive recurrence}
The positive recurrence is established in two steps. First, we construct a pair of coupled stochastic processes, and show that these new processes stochastically dominate the original one. Second, we establish the positive recurrence of these new processes through a multi-step Lyapunov drift argument. Combining the two steps, the positive recurrence of the original process follows. The two processes used for coupling are defined below. 

\textbf{Process I:}  We first consider a coupled process $\left(\overline N(\cdot), \underline{\bf  Q}(\cdot)\right)$, where qubits are always matched regardless of whether there are requests or not. In particular, this means that the qubit queues behave as if the requests were infinitely backlogged, regardless of the number of requests in their queue. Thus, the qubits' queues act as a two-way matching system. The coupling with the original processes is done in a way so that these new ones have the same initial condition, i.e., with $\left(\overline N(0), \underline{\bf  Q}(0)\right)=\big( N(0), {\bf Q}(0) \big)$, and so that they have the same arrival processes $A(\cdot)$, $S_1(\cdot)$ and $S_2(\cdot)$, and the same abandonment primitives $Z_\ell^{(i)}(\cdot)$. These new processes are defined recursively as
\begin{align*}
    \overline N(t+1) &= \overline N(t) + A(t) - \sum\limits_{\ell=1}^{\overline M(t)} Y_\ell(t) \\
    \underline Q_1(t+1) &= \underline Q_1(t) - \underline D_1(t) + S_1(t) - \overline{\overline{M}}(t) \\
    \underline Q_2(t+1) &= \underline Q_2(t) - \underline D_2(t) +S_2(t) - \overline{\overline{M}}(t),
\end{align*}
where
\[ \overline{\overline{M}}(t) = \min\left\{\underline Q_1(t)-\underline D_1(t)+S_1(t),\,\, \underline Q_2(t)-\underline D_2(t)+S_2(t)\right\}, \]
is the total number of two-way and three-way matchings made in the $t$-th slot (including the ones that do not involve a request) and
\[ \overline M(t) = \min\left\{ \overline{\overline{M}}(t) , \min\left\{ m :\sum\limits_{\ell=1}^m Y_\ell(t) \geq \overline N(t)+A(t) \right\}\right\} \]
is the number of successful three-way matchings in the $t$-th slot. Also, we couple the abandonment of this process with the original one using
\[ \underline{D}_i(t) = \sum\limits_{\ell=1}^{\underline{Q}_i(t)} Z^{(i)}_\ell(t). \]

Note that this new system effectively discards qubits that could have been matched to a request in the future if there are no pending requests in the present. This stands in contrast with the original system, where the qubits are stored until a new request arrives. Intuitively, there should be less qubits in the queues for this new system, and the number of queued requests should higher in this new system. This intuition is formalized in the following two lemmas.

\begin{lemma}\label{lem:Qdominance}
We have $Q_i(t) \geq \underline Q_i(t)$, for all $t\geq 0$ and $i\in\{1,2\}$.
\end{lemma}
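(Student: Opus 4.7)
The plan is to prove the inequality sample-path-wise by induction on $t$, exploiting the specified coupling that has $\underline{\bf Q}(0)={\bf Q}(0)$ and shares the arrival processes and the abandonment primitives $Z_\ell^{(i)}(\cdot)$ across the two systems. The base case $t=0$ is immediate from the coupled initial conditions.

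For the inductive step, assuming $Q_i(t)\geq \underline{Q}_i(t)$ for $i\in\{1,2\}$, I would first handle the easy components (arrivals and abandonments). Since $S_i(t)$ is identical for both systems, it cancels out. For abandonments, the index-based coupling gives
\[ D_i(t) - \underline{D}_i(t) = \sum_{\ell=\underline{Q}_i(t)+1}^{Q_i(t)} Z_\ell^{(i)}(t) \geq 0, \]
and setting $a_i := Q_i(t)-D_i(t)+S_i(t)$ and $b_i := \underline{Q}_i(t)-\underline{D}_i(t)+S_i(t)$ yields $a_i \geq b_i$ for both $i$. The inequality $Q_i(t+1)\geq \underline{Q}_i(t+1)$ then reduces to verifying that the difference in matchings does not exceed the inherited slack, namely
\[ M(t) - \overline{\overline{M}}(t) \leq a_i - b_i, \qquad i=1,2. \]

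The heart of the proof is therefore the matching comparison, which I would approach by a case analysis on the three-way minimum $M(t)=\min\{a_1,a_2,U(t)\}$ versus the two-way minimum $\overline{\overline{M}}(t)=\min\{b_1,b_2\}$. The intuition is that whenever the original system fails to match (because $U(t)$, driven by the requests, is small), the coupled system still matches, which strictly depletes $\underline{\bf Q}$ while leaving ${\bf Q}$ intact and thereby preserving dominance. When $U(t)$ is large enough that $M(t)=\min\{a_1,a_2\}$, one would argue by checking the subcases depending on which of $a_1,a_2$ and which of $b_1,b_2$ is the minimum, using the structural fact that any qubits present in $\underline{\bf Q}$ but absent from ${\bf Q}$ cannot exist (by induction), so that the imbalance $a_i-b_i$ that built up in prior slots exactly absorbs the extra matching $M(t)-\overline{\overline{M}}(t)$ happening in the current slot.

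The main obstacle I anticipate is the case in which a large burst of requests makes $U(t)$ very large and simultaneously the bottleneck qubit species in the original system differs from the bottleneck in the coupled one; in that regime the naive inequality $M(t)-\overline{\overline{M}}(t)\leq a_i-b_i$ is not immediate and requires carefully tracking of which inductive-hypothesis slack is used. I would handle this by expressing $Q_i(t)-\underline{Q}_i(t)$ as a telescoping sum capturing the cumulative history of missed matchings and extra abandonments, showing that this cumulative slack is always available to absorb any single-slot burst of matchings in the original system.
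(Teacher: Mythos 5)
Your route is fundamentally different from the paper's, and it has a genuine gap at exactly the step you flag as delicate. The paper proves the lemma in one line: it reinterprets $\underline Q_i(\cdot)$ as the same queue in which the qubits' sojourn (``service'') times have been shortened so that they depart earlier, and invokes Proposition~2 of the cited monotonicity reference; it does no slot-by-slot bookkeeping at all. Your induction, by contrast, needs the single-slot inequality $M(t)-\overline{\overline{M}}(t)\le a_i-b_i$ for \emph{both} $i$, and this inequality is false in general. Concretely, consider a slot $t$ with $Q_1(t)=5$, $Q_2(t)=3$, $N(t)=0$ in the original system and $\underline Q_1(t)=0$, $\underline Q_2(t)=3$ in the coupled one, with no qubit arrivals and no abandonment successes in slot $t$, and a large batch of requests $A(t)$ with all $Y_\ell(t)=1$. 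Then $a_1=5$, $b_1=0$, $a_2=b_2=3$, and $U(t)$ is large, so $M(t)=\min\{5,3,U(t)\}=3$ while $\overline{\overline{M}}(t)=\min\{0,3\}=0$; the required bound $M(t)-\overline{\overline{M}}(t)\le a_2-b_2=0$ fails, and indeed $Q_2(t+1)=0<3=\underline Q_2(t+1)$.

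Your proposed repair via a telescoped ``cumulative slack'' does not close this either, because the configuration above is reachable under the stated coupling with the slack in coordinate $2$ already exhausted: start both systems empty; bring $5$ qubits into queue $1$ with no requests; next slot bring $8$ qubits into queue $2$ (the coupled system matches $5$, the original matches none, since $U=0$); then realize the abandonment primitives as $Z^{(2)}_\ell=0$ for $\ell\le 3$ and $Z^{(2)}_\ell=1$ for $4\le\ell\le 8$, so the original's queue $2$ drops from $8$ to $3$ while the coupled one stays at $3$. At that point $Q_2-\underline Q_2=0$, yet in the next slot the original matches $3$ more than the coupled system, so no amount of accumulated per-coordinate slack is available to absorb the burst. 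The underlying phenomenon is that greedy early matching in the coupled system consumes queue-$1$ qubits and can thereby \emph{delay} the departure of the remaining queue-$2$ qubits, which is precisely what your per-coordinate accounting (and any purely slot-wise induction) cannot handle. This is the issue the paper's argument sidesteps by appealing to the departure-time/service-time monotonicity result; if you want a self-contained proof you should either verify carefully that the hypotheses of that cited proposition really apply to this coupled construction (note that in the scenario above the three surviving queue-$2$ qubits depart \emph{later} in the coupled system), or restrict attention to a weaker statement (e.g.\ dominance of suitable aggregate quantities) that is actually preserved slot by slot.
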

\begin{proof}{Proof:}
Note that, for each sample path, $\underline Q_i(\cdot)$ can be obtained from $Q_i(\cdot)$ by simply shortening the ``service time'' of qubits so that they depart earlier. Therefore, \cite[Proposition 2]{monotonicity} implies that $Q_i(t) \geq \underline Q_i(t)$, for all $t\geq 0$.
  \end{proof}

Suppose that $p=1$, that is, suppose that three-way matchings succeed with probability one. Then, it can be checked that, for each sample path,
$\overline N(\cdot)$ can be obtained from $N(\cdot)$ by simply adding $\overline{\overline{M}}(\cdot)-\overline{M}(\cdot)$ request arrivals so that they match with the extra qubits that were matched in the $\overline{N}(\cdot)$ system and not in the $N(\cdot)$ system. In this case, \cite[Proposition 2]{monotonicity} implies that $\overline N(t) \geq N(t)$, for all $t\geq 0$. However, when $p<1$ this is no longer the case, as added requests could remain in the system after exhausting all extra qubits. As a result, we prove this stochastic dominance in the following lemma.

\begin{lemma}\label{lem:Ndominance}
We have $\overline N(t) \geq N(t)$, for all $t\geq 0$.
\end{lemma}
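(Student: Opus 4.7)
The plan is to prove $\overline N(t)\ge N(t)$ by induction on $t$, exploiting the tight coupling that is already in place (common arrivals $A(\cdot)$, $S_i(\cdot)$, Bernoulli primitives $Y_\ell(t)$, and abandonment primitives $Z^{(i)}_\ell(t)$). The base case is immediate since $\overline N(0)=N(0)$. For the inductive step, I would write
\[ \overline N(t+1) - N(t+1) = \big(\overline N(t) - N(t)\big) \;-\; \Big(\textstyle\sum_{\ell=1}^{\overline M(t)} Y_\ell(t) \;-\; \sum_{\ell=1}^{M(t)} Y_\ell(t)\Big), \]
so it suffices to control the second term, using the inductive hypothesis as a non-negative buffer on the first.

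I would split on which constraint is binding in $M(t)$. When $M(t)$ is qubit-limited, i.e.\ $M(t)=\min_i\{Q_i(t)-D_i(t)+S_i(t)\}<U(t)$, the key observation is that $Q_i(t)-D_i(t)=\sum_{\ell=1}^{Q_i(t)}(1-Z^{(i)}_\ell(t))$ is monotone non-decreasing in $Q_i(t)$ under the shared $Z^{(i)}_\ell$'s. Combined with Lemma~\ref{lem:Qdominance}, this gives $Q_i(t)-D_i(t)\ge \underline Q_i(t)-\underline D_i(t)$ for $i\in\{1,2\}$, so $\overline{\overline M}(t)\le M(t)$ and hence $\overline M(t)\le M(t)$. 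Monotonicity of partial sums of the common $Y_\ell(t)$'s then yields $\sum_{\ell=1}^{\overline M(t)}Y_\ell(t)\le \sum_{\ell=1}^{M(t)}Y_\ell(t)$, and the inductive step follows.

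In the complementary case $M(t)=U(t)$, the definition of $U(t)$ implies $\sum_{\ell=1}^{M(t)}Y_\ell(t)=N(t)+A(t)$. Using the inductive hypothesis $\overline N(t)\ge N(t)$, the target of successes defining $\overline U(t)$ dominates that of $U(t)$, so $\overline U(t)\ge U(t)$; since $\overline M(t)\le \overline U(t)$, I would conclude $\sum_{\ell=1}^{\overline M(t)}Y_\ell(t)\le \overline N(t)+A(t)$. Subtracting gives
\[ \textstyle\sum_{\ell=1}^{\overline M(t)}Y_\ell(t) \;-\; \sum_{\ell=1}^{M(t)}Y_\ell(t) \;\le\; \overline N(t)-N(t), \]
which is precisely the amount that the inductive buffer can absorb, proving $\overline N(t+1)\ge N(t+1)$.

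The main obstacle is the regime $p<1$, which breaks the clean sample-path comparison argument of \cite{monotonicity} that worked when all matching attempts were deterministic: in the coupled process $\overline M(t)$ may strictly exceed $M(t)$, and since the $Y_\ell$'s are shared, the extra attempts can produce additional successes that threaten to push $\overline N(t+1)$ below $N(t+1)$. The essential point of the argument above is that whenever this can happen, $M(t)$ must have been request-limited ($M(t)=U(t)$), and the negative-binomial stopping rule then caps the additional successes by exactly $\overline N(t)-N(t)$, preserving the dominance.
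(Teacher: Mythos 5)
Your proposal is correct, and it takes a genuinely different route from the paper's. The paper strengthens the induction hypothesis to the two-coordinate invariant $\overline N(t)-\underline Q_i(t)\ge N(t)-Q_i(t)$ for $i\in\{1,2\}$, establishes it through a fairly heavy inductive computation (bounding $M(t)$ by $\overline{\overline{M}}(t)$ plus correction terms of the form $Q_i(t)-D_i(t)-\underline Q_i(t)+\underline D_i(t)$), and then recovers $\overline N(t)\ge N(t)$ from the state-space constraint that $N(t)>0$ forces one of the qubit queues to be empty. You instead run the induction directly on $\overline N(t)\ge N(t)$, splitting on which constraint binds in $M(t)$: in the qubit-limited case, Lemma~\ref{lem:Qdominance} together with the shared $Z^{(i)}_\ell$'s gives $\overline M(t)\le \overline{\overline{M}}(t)\le \min_i\{Q_i(t)-D_i(t)+S_i(t)\}=M(t)$, so the coupled system makes no extra successful matchings; and the only way it can over-serve, $\overline M(t)>M(t)$, forces the request-limited case $M(t)=U(t)$, where $\sum_{\ell=1}^{M(t)}Y_\ell(t)=N(t)+A(t)$ exactly (so $N(t+1)=0$), while the cap in the definition of $\overline M(t)$ guarantees $\sum_{\ell=1}^{\overline M(t)}Y_\ell(t)\le \overline N(t)+A(t)$ and hence $\overline N(t+1)\ge 0$. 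Both arguments rest on Lemma~\ref{lem:Qdominance} and the shared primitives; yours is shorter and makes the $p<1$ obstruction transparent (the potential over-service is exactly absorbed by the negative-binomial stopping rule), whereas the paper's invariant carries a bit more information, quantifying the request-queue gap in terms of the qubit-queue gaps. A minor remark: in your second case the inequality $\overline U(t)\ge U(t)$ is superfluous, since $\sum_{\ell=1}^{\overline M(t)}Y_\ell(t)\le \overline N(t)+A(t)$ already follows from $\overline M(t)\le \overline U(t)$ and the definition of $\overline U(t)$.
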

\begin{proof}{Proof:}
We prove this using induction. At $t =0$, the condition holds by the fact that we have $\left(\overline N(0), \underline{\bf  Q}(0)\right)=\big( N(0), {\bf Q}(0) \big)$ by construction. Now, assume 
$\overline N(t) \geq N(t)$. Then,
\begin{align}
    \overline N(t+1) - N(t+1) \nonumber & = \overline N(t) - \sum\limits_{\ell=1}^{\overline M(t)} Y_\ell(t) - \left[N(t) - \sum\limits_{\ell=1}^{M(t)} Y_\ell(t)  \right] \nonumber \\
    & \geq \sum\limits_{\ell=1}^{M(t)} Y_\ell(t) - \sum\limits_{\ell=1}^{\overline M(t)} Y_\ell(t) \nonumber \\
    & \geq \sum\limits_{\ell=1}^{M(t)} Y_\ell(t) - \sum\limits_{\ell=1}^{\overline{\overline{M}}(t)} Y_\ell(t), \label{eq:lower_bound} 
\end{align}
where the last inequality follows by using $\overline{M}(t) \leq \overline{\overline{M}}(t)$ by construction. Moreover, we have
\begin{align}
    M(t) &= \min\left\{Q_1(t)-D_1(t)+S_1(t),\,\, Q_2(t)-D_2(t)+S_2(t),\,\, U(t) \right\} \nonumber \\
    &\leq \min\left\{Q_1(t)-D_1(t)+S_1(t),\,\, Q_2(t)-D_2(t)+S_2(t) \right\} \nonumber \\
    &= \min\left\{\underline Q_1(t)-\underline D_1(t)+S_1(t),\,\, \underline Q_2(t)-\underline D_2(t)+S_2(t) \right\} \nonumber \\
    & + \left[\min\left\{Q_1(t)-D_1(t)+S_1(t),\,\, Q_2(t)-D_2(t)+S_2(t) \right\} - \underline Q_1(t) + \underline D_1(t)  \right]\mathds{1}_{\left\{\underline{Q}_1(t)- \underline{D}_1(t)+S_1(t) < \underline{Q}_2(t)- \underline{D}_2(t)+S_2(t)\right\}} \nonumber \\
    &+ \left[\min\left\{Q_1(t)-D_1(t)+S_1(t),\,\, Q_2(t)-D_2(t)+S_2(t) \right\} - \underline Q_2(t) + \underline D_2(t) \right]\mathds{1}_{\left\{ \underline{Q}_1(t)- \underline{D}_1(t)+S_1(t) \geq \underline{Q}_2(t)- \underline{D}_2(t) +S_2(t)\right\}} \nonumber \\
    &\leq \min\left\{\underline Q_1(t)-\underline D_1(t)+S_1(t),\,\, \underline Q_2(t)-\underline D_2(t)+S_2(t) \right\} \nonumber \\
    &\qquad\qquad\qquad + \left[Q_1(t) -D_1(t) - \underline Q_1(t) + \underline D_1(t)  \right]\mathds{1}_{\left\{\underline{Q}_1(t)- \underline{D}_1(t)+S_1(t) < \underline{Q}_2(t)- \underline{D}_2(t)+S_2(t)\right\}} \nonumber \\
    &\qquad\qquad\qquad\qquad\qquad+ \left[Q_2(t) -D_2(t) - \underline Q_2(t) + \underline D_2(t) \right]\mathds{1}_{\left\{ \underline{Q}_1(t)- \underline{D}_1(t)+S_1(t) \geq \underline{Q}_2(t)- \underline{D}_2(t) +S_2(t)\right\}} \nonumber \\
    &= \overline{\overline{M}}(t) + \left[Q_1(t) -D_1(t) - \underline Q_1(t) + \underline D_1(t)  \right]\mathds{1}_{\left\{\underline{Q}_1(t)- \underline{D}_1(t)+S_1(t) < \underline{Q}_2(t)- \underline{D}_2(t)+S_2(t)\right\}} \nonumber \\
    &\qquad\qquad\qquad+ \left[Q_2(t) -D_2(t) - \underline Q_2(t) + \underline D_2(t) \right]\mathds{1}_{\left\{ \underline{Q}_1(t)- \underline{D}_1(t)+S_1(t) \geq \underline{Q}_2(t)- \underline{D}_2(t) +S_2(t)\right\}}, \label{eq:split}
\end{align}
Note that Lemma~\ref{lem:Qdominance} implies that $Q_i(t) \geq \underline Q_i(t)$, and thus
\begin{equation*} \label{eq:q-d}
Q_i(t) -D_i(t) - \underline Q_i(t) + \underline D_i(t) = \sum\limits_{\ell=1}^{Q_i(t)} 1-Z^{(i)}_\ell(t) - \sum\limits_{\ell=1}^{\underline Q_i(t)} 1-Z^{(i)}_\ell(t) \geq 0.
\end{equation*}
Combining this with Equation~\eqref{eq:split}, it follows that $M(t)\geq \overline{\overline{M}}(t)$. Substituting this in Equation~\eqref{eq:lower_bound} concludes the induction step, and the proof.

\end{proof}

\textbf{Process II:} We now consider a pair of coupled processes $\overline Q_1(\cdot)$ and $\overline Q_2(\cdot)$, where qubits are never matched and so they only depart when they abandon. In particular, these processes are GI/M/$\infty$ queues with potentially correlated services. Fortunately, the potential service correlations do not affect the drift of the queues. The coupling with the original processes is done in a way so that these new ones have the same initial conditions, i.e., with $\overline Q_1(0)=Q_1(0)$ and $\overline Q_2(0)=Q_2(0)$, and so that they have the same arrival processes $S_1(\cdot)$ and $S_2(\cdot)$, and the same abandonment primitives $Z_\ell^{(i)}(\cdot)$. These new processes are defined recursively as
\begin{align*}
    \overline Q_1(t+1) &= \overline Q_1(t) - \overline D_1(t) + S_1(t) \\
    \overline Q_2(t+1) &= \overline Q_2(t) - \overline D_2(t) +S_2(t),
\end{align*}
where
\[ \overline D_i(t) = \sum\limits_{\ell=1}^{\overline Q_i(t)} Z^{(i)}_\ell(t). \]

\begin{lemma}\label{lem:infinityQueuesBound}
We have $\overline Q_i(t) \geq Q_i(t)$ and $\overline Q_i(t) \geq \underline Q_i(t)$, for all $t\geq 0$, for all $i$.
\end{lemma}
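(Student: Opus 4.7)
The plan is to prove both inequalities simultaneously by a straightforward induction on $t$, using the fact that the abandonment primitives $Z_\ell^{(i)}(\cdot)$ are shared across all three coupled systems. The base case $t=0$ is immediate from the chosen initial conditions $\overline Q_i(0) = Q_i(0) = \underline Q_i(0)$.

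For the inductive step, suppose $\overline Q_i(t) \geq Q_i(t)$. I would expand the one-step evolutions using the shared primitives, write
\[
\overline Q_i(t+1) - Q_i(t+1) = \bigl(\overline Q_i(t) - Q_i(t)\bigr) - \biggl(\sum_{\ell=1}^{\overline Q_i(t)} Z^{(i)}_\ell(t) - \sum_{\ell=1}^{Q_i(t)} Z^{(i)}_\ell(t)\biggr) + M(t),
\]
and note that the inductive hypothesis lets me rewrite the bracketed difference of abandonments as $\sum_{\ell=Q_i(t)+1}^{\overline Q_i(t)} Z^{(i)}_\ell(t)$, which is at most $\overline Q_i(t) - Q_i(t)$ since each $Z^{(i)}_\ell(t) \in \{0,1\}$. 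Thus the first two terms telescope to a nonnegative quantity, and since $M(t)\geq 0$, the whole right-hand side is nonnegative.

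The argument for $\overline Q_i(t) \geq \underline Q_i(t)$ is entirely analogous, with $M(t)$ replaced by $\overline{\overline M}(t)\geq 0$ and $Q_i$ replaced by $\underline Q_i$ everywhere; the same inductive step goes through. Conceptually, the result just expresses that both $Q_i(\cdot)$ and $\underline Q_i(\cdot)$ can be obtained from $\overline Q_i(\cdot)$ by adding extra departures (the matchings), so one could alternatively cite \cite[Proposition 2]{monotonicity} used in the proof of Lemma \ref{lem:Qdominance}, but the direct induction is cleaner and self-contained.

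There is no real obstacle here; the only subtle point is verifying that the coupling of the binomial abandonments through the $Z^{(i)}_\ell(t)$ primitives is indeed well-defined for both the larger and smaller queue simultaneously (so that the partial sum $\sum_{\ell=Q_i(t)+1}^{\overline Q_i(t)} Z^{(i)}_\ell(t)$ makes sense). This is exactly why the abandonments were rewritten in terms of $Z^{(i)}_\ell(t)$ at the start of the proof of Theorem \ref{thm:initialStability}.
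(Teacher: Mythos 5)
Your proof is correct and follows essentially the same route as the paper: a one-step induction using the shared abandonment primitives $Z^{(i)}_\ell(t)$, bounding the difference of abandonment sums by the difference of queue lengths and using $M(t)\geq 0$ (resp. $\overline{\overline{M}}(t)\geq 0$). The paper's argument is exactly this induction, so there is nothing to add.
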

\begin{proof}{Proof:}
Suppose that the inequalities hold at time $t$. Then
\begin{align*}
    \overline Q_i(t+1) - Q_i(t+1) &= \left[\overline Q_i(t) - \overline D_i(t)\right] - \big[ Q_i(t) - D_i(t) \big] + M(t) \\
    &\geq \left[\overline Q_i(t) - Q_i(t) \right] +  \left[ \sum\limits_{\ell=1}^{Q_i(t)} Z^{(i)}_\ell(t) - \sum\limits_{\ell=1}^{\overline Q_i(t)} Z^{(i)}_\ell(t) \right] \\
    &\geq 0.
\end{align*}
Moreover,
\begin{align*}
    \overline Q_i(t+1) - \underline Q_i(t+1) &= \left[\overline Q_i(t) - \overline D_i(t)\right] - \left[ \underline Q_i(t) - \underline D_i(t) \right] + \overline{\overline{M}}(t) \\
    &\geq \left[\overline Q_i(t) - \underline Q_i(t) \right] + \left[ \sum\limits_{\ell=1}^{\underline Q_i(t)} Z^{(i)}_\ell(t) - \sum\limits_{\ell=1}^{\overline Q_i(t)} Z^{(i)}_\ell(t) \right] \\
    &\geq 0.
\end{align*}
\end{proof}

\textbf{Lyapunov-Drift:} It follows from the previous two lemmas that the positive recurrence of $\big(\overline N(\cdot), \underline{\bf  Q}(\cdot), \overline{\bf  Q}(\cdot)\big)$ implies the positive recurrence of $\big(N(\cdot), {\bf Q}(\cdot)\big)$. To establish the positive recurrence of this new process, we define the Lyapunov function
\[ L\left(\overline n, \underline{\bf q}, \overline{\bf q} \right) = \overline n + \overline q_1 + \overline q_2. \]

\begin{lemma}
  There exists constants $K,T,\delta>0$ such that, if
  \[ L\left(\overline N(t), \underline{\bf Q}(t), \overline{\bf Q}(t) \right) > 3KT. \]
  then
  \[ \mathbb{E}\left[ L\left(\overline N(t+T), \underline{\bf Q}(t+T), \overline{\bf Q}(t+T) \right)- L\left(\overline N(t), \underline{\bf Q}(t), \overline{\bf Q}(t) \right) \,\Big|\, \overline N(t), \underline{\bf Q}(t), \overline{\bf Q}(t) \right] \leq -\delta T. \]
\end{lemma}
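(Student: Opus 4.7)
The plan is to bound the $T$-step drift of $L = \overline N + \overline Q_1 + \overline Q_2$ by controlling the contributions of $\overline N$ (via successful three-way matchings) and $\overline{\bf Q}$ (via abandonments) separately, and then combining them through a suitable case split.

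First I would bound the drift of each $\overline Q_i$. Since $\overline Q_i(\cdot)$ is a GI/M/$\infty$ queue, a direct computation gives $\mathbb{E}[\overline Q_i(t+T) - \overline Q_i(t) \mid \overline Q_i(t)] = -[1-(1-\gamma_i)^T][\overline Q_i(t) - \mu_i/\gamma_i]$. Choosing $T$ large enough that $(1-\gamma_i)^T \leq 1/2$ for $i=1,2$, this drift is at most $-\overline Q_i(t)/2 + \mu_i/\gamma_i$, which is strongly negative when $\overline Q_i(t)$ is large.

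Next, for $\overline N$, I would use the representation $\overline N(t+T) - \overline N(t) = \sum_{s=0}^{T-1}(A(t+s) - B(t+s))$, where $B(t+s)$ is the number of successful three-way matchings in slot $t+s$. When $\overline N$ stays positive throughout $[t,t+T]$, $B(t+s) = \sum_{\ell=1}^{\overline{\overline{M}}(t+s)} Y_\ell(t+s)$, so taking expectations yields
\[ \mathbb{E}[\overline N(t+T) - \overline N(t)] \leq T\lambda - p\,\mathbb{E}\Bigl[\sum_{s=0}^{T-1} \overline{\overline{M}}(t+s) \,\Big|\, \underline{\bf Q}(t)\Bigr]. \]
To lower bound the cumulative matching rate I would use the recursion $\sum_s \overline{\overline{M}}(t+s) = \underline Q_i(t) - \underline Q_i(t+T) + \sum_s[S_i(t+s) - \underline D_i(t+s)]$, take expectations, and exploit the geometric ergodicity of $\underline{\bf Q}(\cdot)$ (which follows from the drift condition established in the proof of Lemma \ref{lem:trivialStability} via standard Meyn--Tweedie theory), obtaining
\[ \mathbb{E}\Bigl[\sum_{s=0}^{T-1} \overline{\overline{M}}(t+s) \,\Big|\, \underline{\bf Q}(t)\Bigr] \geq TC_Y - C'(1+\|\underline{\bf Q}(t)\|_1), \]
for some constant $C'$ independent of $T$ and of the starting state.

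To conclude I would split $L > 3KT$ into two cases. In case (a), $\|\overline{\bf Q}(t)\|_1 \leq 2KT$ and hence $\overline N(t) > KT$; since $\underline Q_i(t) \leq \overline Q_i(t)$ by Lemma \ref{lem:infinityQueuesBound}, the matching-rate error $C'(1+2KT)$ is negligible compared to $T(pC_Y-\lambda)$ for $T$ large (with $K$ fixed beforehand), giving a drift of $\overline N$ at most $-\tfrac{1}{2}(pC_Y-\lambda)T$; moreover, $\overline N(t)>KT$ ensures that $\overline N$ stays positive throughout $[t,t+T]$ with overwhelming probability, controlling the ``excess successful matchings'' correction. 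In case (b), $\overline Q_i(t) > KT$ for some $i$, so its drift is at most $-KT/2 + \mu_i/\gamma_i$, dominating the worst-case $\lambda T$ increase of $\overline N$ provided $K > 2\lambda$. Picking $\delta>0$ small enough, both cases yield drift $\leq -\delta T$. The main obstacle is the matching-rate error being linear in $\|\underline{\bf Q}(t)\|_1$, which could overwhelm the negative drift if $\underline{\bf Q}$ were unbounded; the case split is designed precisely to avoid this, with case (a) keeping $\underline{\bf Q}$ bounded via Lemma \ref{lem:infinityQueuesBound}, and case (b) leveraging the strong abandonment-driven drift of $\overline{\bf Q}$.
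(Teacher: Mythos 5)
Your high-level plan — split the Lyapunov drift into the abandonment-driven part of $\overline{\bf Q}$ and the matching-driven part of $\overline N$, then case-split on which part is large — matches the paper's. However, the quantitative estimate of the matching rate and the subsequent case analysis have a genuine gap. Your lower bound $\mathbb{E}\big[\sum_s \overline{\overline{M}}(t+s)\big]\geq TC_Y - C'(1+\|\underline{\bf Q}(t)\|_1)$ carries a state-dependent error. In case (a) you allow $\|\overline{\bf Q}(t)\|_1\leq 2KT$ and hence $\|\underline{\bf Q}(t)\|_1 \leq 2KT$, so the error becomes $C'(1+2KT)=\Theta(T)$. You then claim this is ``negligible compared to $T(pC_Y-\lambda)$ for $T$ large (with $K$ fixed beforehand),'' but both quantities scale linearly in $T$, so the ratio is $\approx 2C'K/(pC_Y-\lambda)$ and does not vanish as $T\to\infty$. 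Making the error dominated would force $K < (pC_Y-\lambda)/(2C')$, which is in tension with the requirement $K>2\lambda$ you impose in case (b); there is no choice of $(K,T)$ that satisfies both unless $C'$ happens to be tiny. The paper sidesteps this entirely: its Equation~\eqref{eq:lowerBound} gives a lower bound $T_\epsilon(pC_Y-\epsilon)$ that is \emph{uniform in the starting state} $\underline{\bf Q}(t)$ (obtained directly from the ergodicity of the two-sided queue, not from a telescoping recursion with a geometric-ergodicity error term), so there is no $\|\underline{\bf Q}(t)\|_1$ to trade off against $K$. The paper then fixes $\epsilon$ first, then picks $T_\epsilon$ and $K_\epsilon$ as large as needed, so the case-split threshold $K_\epsilon T_\epsilon$ never has to be small.

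A second, related issue is the treatment of the event where $\overline N$ may hit zero. You write $\mathbb{E}[\overline N(t+T)-\overline N(t)]\leq T\lambda - p\,\mathbb{E}[\sum_s \overline{\overline{M}}(t+s)]$ by asserting $B(t+s)=\sum_{\ell=1}^{\overline{\overline{M}}(t+s)}Y_\ell(t+s)$ on the event that $\overline N$ stays positive. But unconditionally $\overline M(t+s)\leq\overline{\overline{M}}(t+s)$, so $\mathbb{E}[\overline N(t+T)-\overline N(t)] = T\lambda - p\,\mathbb{E}[\sum_s \overline{M}(t+s)]\geq T\lambda - p\,\mathbb{E}[\sum_s \overline{\overline{M}}(t+s)]$ — the inequality you need points the other way. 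Appealing to ``overwhelming probability'' does not close this: you need a quantitative bound on how much $\sum_s \overline{M}$ falls short of $\sum_s\overline{\overline{M}}$. The paper does this by inserting the truncation $\min\{\cdot,2K_\epsilon T_\epsilon\}$ (Equation~\eqref{eq:upperBound}) — whose expected loss is at most $T_\epsilon\epsilon$ when $\overline Q_1(t)+\overline Q_2(t)\leq K_\epsilon T_\epsilon$ — together with the observation that once $\overline N(t)\geq 2K_\epsilon T_\epsilon$ exceeds that cap, the negative-binomial term in $\overline M$ can be dropped (step (a) before Equation~\eqref{eq:drift1}). Your proposal would need an analogous quantitative truncation step to be made rigorous.
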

\begin{proof}{Proof:}
Since $\underline{\bf Q}(\cdot)$ behaves path-wise the same as a two-sided queue with abandonments (which is ergodic and has throughput $C_Y$ in steady-state), for every $\epsilon>0$ there exists a constant $T_\epsilon$ large enough such that
\begin{align}
    \mathbb{E}\left[ \left. \sum\limits_{k=0}^{T_\epsilon-1} \sum\limits_{\ell=1}^{\overline{\overline{M}}(t+k)} Y_\ell(t+k) \,\right|\, \overline N(t), \underline{\bf Q}(t), \overline{\bf Q}(t) \right] &= p\mathbb{E}\left[ \left. \sum\limits_{k=0}^{T_\epsilon-1} \overline{\overline{M}}(t+k) \,\right|\, \left(\underline Q_1(t), \underline Q_2(t)\right) \right] \nonumber \\
    &\geq T_\epsilon \Big( p C_Y - \epsilon \Big). \label{eq:lowerBound}
\end{align} 
Moreover, we have
\begin{align*}
    \sum\limits_{k=0}^{T-1} \sum\limits_{\ell=1}^{\overline{\overline{M}}(t+k)} Y_\ell(t+k) &\leq \sum\limits_{k=0}^{T-1} \overline{\overline{M}}(t+k) \\
    &= \sum\limits_{k=0}^{T-1} \min\left\{\underline Q_1(t+k)-\underline D_1(t+k)+S_1(t+k),\,\, \underline Q_2(t+k)-\underline D_2(t+k)+S_2(t+k)\right\},
\end{align*}
where the right-hand side is the total number of matchings made between the qubits in our coupled process, from time $t$ to time $t+T-1$. Therefore, this must be less than or equal to the number of qubits at time $t$ plus all the qubits that arrived during that period of time. That is, we have
\begin{align*}
    \sum\limits_{k=0}^{T-1} \sum\limits_{\ell=1}^{\overline{\overline{M}}(t+k)} Y_\ell(t+k) &\leq \underline Q_1(t) + \underline Q_2(t) + \sum\limits_{k=0}^{T-1} S_1(t+k) + S_2(t+k) \\
    &\leq \overline Q_1(t) + \overline Q_2(t) + \sum\limits_{k=0}^{T-1} S_1(t+k) + S_2(t+k),
\end{align*}
where the last inequality is due to Lemma \ref{lem:infinityQueuesBound}. In particular, if $\overline Q_1(t) + \overline Q_2(t)\leq KT$, we have
\begin{align*}
    \frac{1}{T} \sum\limits_{k=0}^{T-1} \sum\limits_{\ell=1}^{\overline{\overline{M}}(t+k)} Y_\ell(t+k) &\leq K+ \frac{1}{T} \sum\limits_{k=0}^{T-1} S_1(t+k) + S_2(t+k).
\end{align*}
Since the right-hand converges in probability to $K+\mu_1+\mu_2$ as $T\to\infty$, if $\overline Q_1(t) + \overline Q_2(t)\leq KT$ we have that
\begin{align*}
    &\mathbb{E}\left[ \left. \min\left\{ \frac{1}{T} \sum\limits_{k=0}^{T-1} \sum\limits_{\ell=1}^{\overline{\overline{M}}(t+k)} Y_\ell(t+k), \,\, 2K \right\} \,\right|\, \overline N(t), \underline{\bf Q}(t), \overline{\bf Q}(t) \right] \nonumber \\
    &\qquad\qquad\qquad\qquad\qquad\qquad\qquad\qquad - \mathbb{E}\left[ \left. \frac{1}{T} \sum\limits_{k=0}^{T-1} \sum\limits_{\ell=1}^{\overline{\overline{M}}(t+k)} Y_\ell(t+k) \,\right|\, \overline N(t), \underline{\bf Q}(t), \overline{\bf Q}(t) \right] \nonumber \\
    &= 2K \mathbb{P}\left( \left. \frac{1}{T} \sum\limits_{k=0}^{T-1} \sum\limits_{\ell=1}^{\overline{\overline{M}}(t+k)} Y_\ell(t+k) > 2K \,\right|\, \overline N(t), \underline{\bf Q}(t), \overline{\bf Q}(t) \right)
\end{align*}
converges to zero as $T\to\infty$ for all $K>\mu_1+\mu_2$. Therefore, for every $\epsilon>0$ there exist constants $K$ and $T_\epsilon$ large enough such that, if $\overline Q_1(t) + \overline Q_2(t)\leq KT_\epsilon$, we have
\begin{align}
    &\mathbb{E}\left[ \left. \min\left\{ \sum\limits_{k=0}^{T_\epsilon-1} \sum\limits_{\ell=1}^{\overline{\overline{M}}(t+k)} Y_\ell(t+k), \,\, 2K T_\epsilon \right\} \,\right|\, \overline N(t), \underline{\bf Q}(t), \overline{\bf Q}(t) \right] \nonumber \\
    &\qquad\qquad\qquad\qquad\qquad\qquad\qquad\qquad \geq \mathbb{E}\left[ \left. \sum\limits_{k=0}^{T_\epsilon-1} \sum\limits_{\ell=1}^{\overline{\overline{M}}(t+k)} Y_\ell(t+k) \,\right|\, \overline N(t), \underline{\bf Q}(t), \overline{\bf Q}(t) \right] - T_\epsilon\epsilon. \label{eq:upperBound}
\end{align}
It follows that, if $\overline Q_1(t) + \overline Q_2(t)\leq K T_\epsilon$ and $\overline N(t)\geq 2K T_\epsilon$, for all $\epsilon$ small enough, we have 
\begin{align}
    &\mathbb{E} \left[\overline N(t+T_\epsilon)- \overline N(t) \,\Big|\, \overline N(t), \underline{\bf Q}(t), \overline{\bf Q}(t) \right] \nonumber \\
    &\qquad\qquad\qquad\qquad\qquad = \sum\limits_{k=0}^{T_\epsilon-1} \mathbb{E}\left[ A(t+k) - \sum\limits_{\ell=1}^{\overline{M}(t+k)} Y_\ell(t+k) \,\Big|\, \overline N(t), \underline{\bf Q}(t), \overline{\bf Q}(t) \right] \nonumber \\
    &\qquad\qquad\qquad\qquad\qquad = \lambda T_\epsilon - \mathbb{E}\left[ \left. \sum\limits_{k=0}^{T_\epsilon-1} \sum\limits_{\ell=1}^{\overline{M}(t+k)} Y_\ell(t+k) \,\right|\, \overline N(t), \underline{\bf Q}(t), \overline{\bf Q}(t) \right] \nonumber \\
    &\qquad\qquad\qquad\qquad\qquad \leq \lambda T_\epsilon - \mathbb{E}\left[ \left. \min\left\{ \sum\limits_{k=0}^{T_\epsilon-1} \sum\limits_{\ell=1}^{\overline{M}(t+k)} Y_\ell(t+k), \overline N(t) \right\} \,\right|\, \overline N(t), \underline{\bf Q}(t), \overline{\bf Q}(t) \right]. \label{eq:interm}
\end{align}
Note that, when
\[ \sum\limits_{k=0}^{T_\epsilon-1} \sum\limits_{\ell=1}^{\overline{M}(t+k)} Y_\ell(t+k) \geq \overline N(t), \]
then we also have
\[ \sum\limits_{k=0}^{T_\epsilon-1} \sum\limits_{\ell=1}^{\overline{\overline{M}}(t+k)} Y_\ell(t+k) \geq \overline N(t), \]
and thus
\[ \min\left\{ \sum\limits_{k=0}^{T_\epsilon-1} \sum\limits_{\ell=1}^{\overline{\overline{M}}(t+k)} Y_\ell(t+k), \overline N(t) \right\} = \min\left\{ \sum\limits_{k=0}^{T_\epsilon-1} \sum\limits_{\ell=1}^{\overline{M}(t+k)} Y_\ell(t+k), \overline N(t) \right\}. \]
On the other hand, when
\[ \sum\limits_{k=0}^{T_\epsilon-1} \sum\limits_{\ell=1}^{\overline{M}(t+k)} Y_\ell(t+k) <\overline N(t), \]
we have that
\[ \overline{N}(t+k) = \overline{N}(t) + \sum\limits_{i=0}^{k-1} A(t+i) - \sum\limits_{i=0}^{k-1} \sum\limits_{\ell=1}^{\overline{M}(t+i)} Y_\ell(t+i) \geq \overline{N}(t) - \sum\limits_{i=0}^{k-1} \sum\limits_{\ell=1}^{\overline{M}(t+i)} Y_\ell(t+i) > 0 \]
for all $k\leq T_\epsilon$. Combining this with the fact that
\[ \overline M(t+i) = \min\left\{ \overline{\overline{M}}(t+i) , \min\left\{ m :\sum\limits_{\ell=1}^m Y_\ell(t+i) \geq \overline N(t+i)+A(t+i) \right\}\right\}, \]
it follows that $N(t+i)>0$ for all $i\leq T_\epsilon$ implies that $\overline{M}(t+i-1)=\overline{\overline{M}}(t+i-1)$ for all $k\leq T_\epsilon$. Therefore, we have
\[ \min\left\{ \sum\limits_{k=0}^{T_\epsilon-1} \sum\limits_{\ell=1}^{\overline{\overline{M}}(t+k)} Y_\ell(t+k), \overline N(t) \right\} = \min\left\{ \sum\limits_{k=0}^{T_\epsilon-1} \sum\limits_{\ell=1}^{\overline{M}(t+k)} Y_\ell(t+k), \overline N(t) \right\}. \]
Combining this with Equation \eqref{eq:interm} we get that
\begin{align}
    &\mathbb{E} \left[\overline N(t+T_\epsilon)- \overline N(t) \,\Big|\, \overline N(t), \underline{\bf Q}(t), \overline{\bf Q}(t) \right] \nonumber\\
    &\qquad\qquad\qquad\qquad\qquad\qquad\qquad \leq \lambda T_\epsilon - \mathbb{E}\left[ \left. \min\left\{ \sum\limits_{k=0}^{T_\epsilon-1} \sum\limits_{\ell=1}^{\overline{\overline{M}}(t+k)} Y_\ell(t+k), \overline N(t) \right\} \,\right|\, \overline N(t), \underline{\bf Q}(t), \overline{\bf Q}(t) \right] \nonumber \\
    &\qquad\qquad\qquad\qquad\qquad\qquad\qquad \leq \lambda T_\epsilon - \mathbb{E}\left[ \left. \min\left\{ \sum\limits_{k=0}^{T_\epsilon-1} \sum\limits_{\ell=1}^{\overline{\overline{M}}(t+k)} Y_\ell(t+k), 2K T_\epsilon \right\} \,\right|\, \overline N(t), \underline{\bf Q}(t), \overline{\bf Q}(t) \right] \nonumber \\
    &\qquad\qquad\qquad\qquad\qquad\qquad\qquad \overset{(a)}{\leq} \lambda T_\epsilon - \mathbb{E}\left[ \left. \sum\limits_{k=0}^{T_\epsilon-1} \sum\limits_{\ell=1}^{\overline{\overline{M}}(t+k)} Y_\ell(t+k) \,\right|\, \overline N(t), \underline{\bf Q}(t), \overline{\bf Q}(t) \right] + T_\epsilon\epsilon \nonumber \\
    &\qquad\qquad\qquad\qquad\qquad\qquad\qquad \overset{(b)}{\leq} \lambda T_\epsilon - T_\epsilon(p C_Y-\epsilon) + T_\epsilon \epsilon \nonumber \\
    &\qquad\qquad\qquad\qquad\qquad\qquad\qquad = -(p C_Y-2\epsilon - \lambda)T_\epsilon, \label{eq:drift1}
\end{align}
where (a) is due to Equation \eqref{eq:upperBound}, and (b) is due to Equation \eqref{eq:lowerBound}. Moreover, we always have
\begin{align}
    \mathbb{E}\left[\overline N(t+T_\epsilon)- \overline N(t) \,\Big|\, \overline N(t), \underline{\bf Q}(t), \overline{\bf Q}(t) \right] &= \sum\limits_{k=0}^{T_\epsilon-1} \mathbb{E}\left[ A(t+k) - \sum\limits_{\ell=1}^{\overline{M}(t+k)} Y_\ell(t+k) \,\Big|\, \overline N(t), \underline{\bf Q}(t), \overline{\bf Q}(t) \right] \nonumber \\
    &\leq \sum\limits_{k=0}^{T_\epsilon-1} \mathbb{E}\left[ A(t+k) \,\Big|\, \overline N(t), \underline{\bf Q}(t), \overline{\bf Q}(t) \right] \nonumber \\
    &= \lambda T_\epsilon. \label{eq:drift2}
\end{align}

On the other hand, recall that $\overline{Q}_1(\cdot)$ and $\overline{Q}_2(\cdot)$ are GI/M/$\infty$ queues with potentially correlated services. However, it is easily checked that these correlations do not affect the drift, and thus the expectation, of the queues. Therefore, we have
\begin{align}
    &\mathbb{E}\Big[\overline Q_1(t+T) + \overline Q_2(t+T) - \overline Q_1(t) - \overline Q_2(t) \,\Big|\, \overline N(t), \underline{\bf Q}(t), \overline{\bf Q}(t) \Big] \nonumber \\
    &\qquad\qquad\qquad\qquad\qquad\qquad = \left(\overline Q_1(t) - \frac{\mu_1}{\gamma_1}\right) \left[(1-\gamma_1)^T-1\right] + \left(\overline Q_2(t) - \frac{\mu_2}{\gamma_2}\right)\left[(1-\gamma_2)^T-1\right] \nonumber \\
    &\qquad\qquad\qquad\qquad\qquad\qquad \leq \frac{\mu_1}{\gamma_1} + \frac{\mu_2}{\gamma_2} - \overline Q_1(t)\left[1- (1-\gamma_1)^T\right] - \overline Q_2(t) \left[1- (1-\gamma_2)^T\right] \nonumber \\
    &\qquad\qquad\qquad\qquad\qquad\qquad \leq \frac{\mu_1}{\gamma_1} + \frac{\mu_2}{\gamma_2} - \frac{\overline Q_1(t)+ \overline Q_2(t)}{2}, \label{eq:drift3}
\end{align}
for all $T$ large enough.

Finally, let $\epsilon$ be small enough, and $T$ and $K$ be large enough so that equations \eqref{eq:drift1}, \eqref{eq:drift2}, and \eqref{eq:drift3} hold. If $\overline Q_1(t) + \overline Q_2(t)\leq K T$ and $\overline N(t)\geq 2K T$, then equations \eqref{eq:drift1} and \eqref{eq:drift3} imply
\begin{align*}
    &\mathbb{E}\left[ L\left(\overline N(t+T), \underline{\bf Q}(t+T), \overline{\bf Q}(t+T) \right)- L\left(\overline N(t), \underline{\bf Q}(t), \overline{\bf Q}(t) \right) \,\Big|\, \overline N(t), \underline{\bf Q}(t), \overline{\bf Q}(t) \right] \\
    &\qquad\qquad\qquad\qquad\qquad\qquad\qquad\qquad\qquad\qquad\qquad \leq -(p C_Y-2\epsilon - \lambda) T + \frac{\mu_1}{\gamma_1} + \frac{\mu_2}{\gamma_2} - \frac{\overline Q_1(t)+ \overline Q_2(t)}{2} \\
    &\qquad\qquad\qquad\qquad\qquad\qquad\qquad\qquad\qquad\qquad\qquad \leq -(p C_Y-2\epsilon - \lambda) T + \frac{\mu_1}{\gamma_1} + \frac{\mu_2}{\gamma_2},
\end{align*} 
which is negative for all $T$ large enough and $\epsilon$ small enough. Furthermore, if $\overline Q_1(t) + \overline Q_2(t) \geq K T$, equations \eqref{eq:drift2} and \eqref{eq:drift3} imply
\begin{align*}
    &\mathbb{E}\left[ L\left(\overline N(t+T), \underline{\bf Q}(t+T), \overline{\bf Q}(t+T) \right)- L\left(\overline N(t), \underline{\bf Q}(t), \overline{\bf Q}(t) \right) \,\Big|\, \overline N(t), \underline{\bf Q}(t), \overline{\bf Q}(t) \right] \\
    &\qquad\qquad\qquad\qquad\qquad\qquad\qquad\qquad\qquad\qquad\qquad\qquad\qquad\qquad \leq \lambda T + \frac{\mu_1}{\gamma_1} + \frac{\mu_2}{\gamma_2} - \frac{\overline Q_1(t)+ \overline Q_2(t)}{2} \\
    &\qquad\qquad\qquad\qquad\qquad\qquad\qquad\qquad\qquad\qquad\qquad\qquad\qquad\qquad \leq \lambda T + \frac{\mu_1}{\gamma_1} + \frac{\mu_2}{\gamma_2} - \frac{KT}{2},
\end{align*}
which is also negative for all $T$ large enough.
\end{proof}

Note that, since $\overline Q_i(t) \geq \underline Q_i(t)$ for all $t\geq 0$ (Lemma \ref{lem:infinityQueuesBound}), there are finitely many states such that $\overline n + \overline q_1 + \overline q_2\leq 3KT$. Moreover, the $T$ time-step drift of $L$ is negative for all $T$ and $K$ large enough, and affine in $T$, outside this finite set. Thus, the positive recurrence follows from a multi-step version of the Foster-Lyapunov theorem \cite[Theorem C.27]{DaiHarrisonBook}.

\subsection{Transience}
Let us define the process $\underline N(\cdot)$, with $\underline N(0)=N(0)$, and defined recursively as
\[ \underline N(t+1) = \underline N(t) + A(t) - \sum\limits_{\ell=1}^{\overline{\overline{M}}(t)} Y_\ell(t). \]
This is equivalent to always attempting three-way matchings, even if the requests queue is not positive. Therefore, the process $\underline{N}(\cdot)$ can become negative. It is easy to check that $N(t) \geq \underline N(t)$ for all $t\geq 0$. We then have
\begin{align*}
    N(t) &\geq \underline N(t) \\
    & = N(0) + \sum\limits_{k=0}^{t-1} A_1(k) - \sum\limits_{k=0}^{t-1} \sum\limits_{\ell=1}^{\overline{\overline{M}}(k)} Y_\ell(k)
\end{align*}
Since $\lambda > p C_Y$, the right hand side diverges to $+\infty$ almost surely as $t\to+\infty$, and thus so does $N(t)$. In particular, this means that $N(\cdot)$ is transient. Finally, the throughput is shown to be $p C_Y$ in this case using the same argument as in the proof of Lemma \ref{lem:trivialStability}.

\section{Proof of Proposition \ref{prop:equivalence}}\label{app:equivalence}

We have
\begin{align*}
    Q_1(t+1) + Q_3(t+1) &= Q_1(t) + Q_3(t) - D_1(t) - D_3(t) + S_1(t) + S_3(t) - M_1(t) - M_2(t) \\
    Q_2(t+1) &= Q_2(t) - D_2(t) +S_2(t) - M_1(t) - M_2(t),
\end{align*}
where
\begin{align*}
    M_1(t) &= \min\big\{Q_1(t)-D_1(t)+S_1(t),\,\, Q_2(t)-D_2(t)+S_2(t) - X(t)M_2(t) \big\} \\
    M_2(t) &= \min\big\{Q_2(t)-D_2(t)+S_2(t) - [1-X(t)]M_1(t),\,\, Q_3(t)-D_3(t)+S_3(t) \big\}.
\end{align*}
If $X(t)=0$, we have
\begin{align*}
    M_1(t) &= \min\big\{Q_1(t)-D_1(t)+S_1(t),\,\, Q_2(t)-D_2(t)+S_2(t) \big\} \\
    M_2(t) &= \min\big\{Q_2(t)-D_2(t)+S_2(t) - M_1(t),\,\, Q_3(t)-D_3(t)+S_3(t) \big\},
\end{align*}
and thus
\begin{align*}
    M_1(t) + M_2(t) & = \min\big\{Q_2(t)-D_2(t)+S_2(t),\,\, Q_3(t)-D_3(t)+S_3(t) + M_1(t) \big\} \\
    &\qquad = \min\Big\{Q_2(t)-D_2(t)+S_2(t),\,\, Q_3(t)-D_3(t)+S_3(t) \\
    &\qquad\qquad\qquad\qquad\qquad + \min\big\{Q_1(t)-D_1(t)+S_1(t),\,\, Q_2(t)-D_2(t)+S_2(t) \big\} \Big\} \\
    &\qquad = \min\big\{ Q_2(t)-D_2(t)+S_2(t),\,\, Q_1(t)+Q_3(t)-D_1(t)-D_3(t)+S_1(t)+S_3(t) \big\}.
\end{align*}
The same holds if $X(t)=1$. Finally, since the indicator of the abandonments of qubits from the first and third queues are exchangeable, we can write
\[ D_1(t)+D_3(t) = \sum\limits_{\ell=1}^{Q_1(t)+Q_3(t)} Z^{(4)}_\ell(t) \]
where $\{Z^{(4)}_\ell(t): \ell,t\in\mathbb{Z}_+\}$ is a set of exchangeable Bernoulli random variables with probability $\gamma_1$ such that, for each $\ell\geq 1$, $\{Z^{(4)}_\ell(t): t\geq 0\}$ is i.i.d..

\section{Proof of Theorem \ref{thm:uniquenessAndStability}} \label{app:uniquenessAndStability}

Suppose that $n_1(0)>0$ and $n_2(0)>0$. Since $\lambda_1+\lambda_2<C_{1,2}$, cases (a) and (b) in the drift imply that
    \[ \frac{dn_1(t)}{dt} + \frac{dn_2(t)}{dt} = \lambda_1 + \lambda_2 - C_{1,2} <0. \]
    Thus, at least one of the coordinates hits zero at a time
    \[ T_1 \leq \frac{\|{\bf n}(0)\|_1}{C_{1,2} - \lambda_1 - \lambda_2}. \]
    Note that, if both coordinates hit zero at the same time, they stay there forever (case (d) in the drift). If only the $i$-th trajectory hits zero while the other one is positive, then we must have $\lambda_i<\underline{C}_i$ (otherwise it would not had hit zero while the other one was positive). Therefore, the $i$-th trajectory stays at zero forever.
    
    At time $T_1$, we have
    \[ \max\big\{ n_1(T_1),\, n_2(T_1) \big\} \leq \|{\bf n}(0)\|_\infty + \max\left\{\lambda_1 - \underline{C}_1,\, \lambda_2 - \underline{C}_2\right\} T_1. \]
    Moreover, when one of the coordinates is zero, we have
    \[ \frac{dn_1(t)}{dt} + \frac{dn_2(t)}{dt} \leq \max\big\{\lambda_1 - C_1(\lambda_2),\, \lambda_2 - C_2(\lambda_1) \big\}<0. \]
    Then, the second coordinate hits zero (and also stays there forever) after
    \[ T_2 \leq \frac{\|{\bf n}(0)\|_\infty + \max\left\{\lambda_1 - \underline{C}_1,\, \lambda_2 - \underline{C}_2\right\} T_1}{\min\big\{C_1(\lambda_2)-\lambda_1,\, C_2(\lambda_1)-\lambda_2\big\}}. \]
    units of time after $T_1$. We conclude that $\|{\bf n}(t)\|_1=0$ for all $t$ larger than
    \begin{align*}
        T_1 + T_2 &\leq \frac{\|{\bf n}(0)\|_1}{C_{1,2} - \lambda_1 - \lambda_2} + \frac{\|{\bf n}(0)\|_\infty + \max\left\{\lambda_1 - \underline{C}_1,\, \lambda_2 - \underline{C}_2\right\} T_1}{\min\big\{C_1(\lambda_2)-\lambda_1,\, C_2(\lambda_1)-\lambda_2\big\}} \\
        &\leq \|{\bf n}(0)\|_1 \left[ \frac{1}{C_{1,2} - \lambda_1 - \lambda_2} + \frac{1}{\min\big\{C_1(\lambda_2)-\lambda_1,\, C_2(\lambda_1)-\lambda_2\big\}}\left( 1 + \frac{\max\left\{\lambda_1 - \underline{C}_1,\, \lambda_2 - \underline{C}_2\right\}}{C_{1,2} - \lambda_1 - \lambda_2} \right) \right].
    \end{align*}

\section{Proof of Theorem \ref{thm:fluid}}\label{app:fluid}
Without loss of generality, we assume that $T\in\mathbb{Z}_+$. Moreover, we use a slight abuse of notation and assume that our stochastic processes are c\`adl\`ag functions over the real line, piece-wise constant over all intervals $[t,t+1)$.

\subsection{Tightness of sample paths}
First, using a simple comparison with appropriately coupled discrete-time GI/M/$\infty$ queues (cf. Equation \eqref{eq:collapseOfQ}), it follows that
\begin{equation}
\mathbb{P} \left(\lim\limits_{k\to\infty} \left\| {\bf Q}^{(k)}(\omega,t) \right\|_1 = 0, \,\, \forall\, t \in [0,T] \right) = 1. \label{eq:Qcollapse}
\end{equation}

We define the family of coupled processes $\left\{\left( {\bf Q}^{[x]}(\cdot),{\bf N}^{[x]}(\cdot),X^{[x]}(\cdot),{\bf M}^{[x]}(\cdot) \right) : x \in\{0,1,2,3\}\right\}$ such that they are constructed using the same stochastic primitives as the original processes ${\bf Q}(\cdot)$, ${\bf N}(\cdot)$, $X(\cdot)$, and ${\bf M}(\cdot)$, except the following modifications:
\begin{itemize}
    \item[-] {\bf Case $x=0$:} We have $N^{[0]}_1(t)=N^{[0]}_2(t)=\infty$ and $X^{[0]}(t)=0$ for all $t\geq 0$.
    \item[-] {\bf Case $x=1$:} We have $N^{[1]}_1(t)=N^{[1]}_2(t)=\infty$ and $X^{[1]}(t)=1$ for all $t\geq 0$.
    \item[-] {\bf Case $x=2$:} We have $N^{[2]}_1(t)=\infty$ for all $t\geq 0$.
    \item[-] {\bf Case $x=3$:} We have $N^{[3]}_2(t)=\infty$ for all $t\geq 0$.
\end{itemize}
Cases $2$ and $3$ are only considered when $\lambda_2<\underline{C}_2$ and $\lambda_1<\underline{C}_1$, respectively.

\begin{lemma}\label{lem:nice_set}
    Fix $T>0$. There exists a measurable set $\mathcal{C}$ such that $\mathbb{P}(\mathcal{C})=1$ and such that, for all $\omega\in\mathcal{C}$, we have
    \begin{align}
        &\lim\limits_{k\to\infty} \left\| Q^{(k)}(\omega,t) \right\|_1 = 0, \qquad \forall\, t \in [0,T], \\
        &\lim\limits_{k\to\infty} \sup\limits_{t\in[0,T]} \left| \frac{1}{k} \sum\limits_{s=1}^{\lfloor kt \rfloor} A_i(\omega,s) - \lambda_i t \right| = 0 \\
        &\lim\limits_{k\to\infty} \sup\limits_{t\in[0,T]} \left| \frac{1}{k} \sum\limits_{s=1}^{\lfloor kt \rfloor} S_j(\omega,s) - \mu_j t \right| = 0,
    \end{align}
    and, for any $0\leq t_1<t_2\leq T$, we have
    \begin{align}
        &\lim\limits_{k\to\infty} \frac{1}{k} \sum\limits_{j=\lfloor k t_1 \rfloor + 1}^{\lfloor k t_2 \rfloor} \sum\limits_{\ell=1}^{M^{[x]}_1(\omega,j)} Y_\ell^{(1)}(\omega,t) = (t_2-t_1)\Big(\overline{C}_1 \mathds{1}_{\{x=0\}} + C_1(\lambda_2) \mathds{1}_{\left\{x=2\right\}} + \underline{C}_1 \mathds{1}_{\{x=1\}} + \lambda_1 \mathds{1}_{\{x=3\}} \Big) \label{eq:convergence1} \\
        &\lim\limits_{k\to\infty} \frac{1}{k} \sum\limits_{j=\lfloor k t_1 \rfloor + 1}^{\lfloor k t_2 \rfloor} \sum\limits_{\ell=1}^{M^{[x]}_2(\omega,j)} Y_\ell^{(2)}(\omega,t) = (t_2-t_1)\Big(\overline{C}_2 \mathds{1}_{\{x=1\}} + C_2(\lambda_1) \mathds{1}_{\left\{x=3\right\}} + \underline{C}_2 \mathds{1}_{\{x=0\}} + \lambda_2 \mathds{1}_{\{x=2\}} \Big), \label{eq:convergence2}
    \end{align}
    for all $x\in\{0,1\}$, and for $x=2$ with $\lambda_2<\underline{C}_2$ and $x=3$ with $\lambda_1<\underline{C}_1$.
\end{lemma}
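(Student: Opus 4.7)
The plan is to construct $\mathcal{C}$ as the (finite) intersection of probability-one events, one for each required convergence across all relevant indices $i,j,x$, and then to upgrade each pointwise-in-$t$ statement to the stated uniform-in-$t$ (or ``all-pairs'' $(t_1,t_2)$) statement by exploiting monotonicity of the cumulative processes involved. Since only finitely many such events appear, their intersection has probability one, so the real work lies in proving each piece individually.

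First I would dispatch items (2) and (3), which are the functional SLLN for i.i.d. arrival processes. By the classical SLLN applied at each $t$ in a countable dense subset of $[0,T]$, the scaled partial sums converge a.s., and since they are nondecreasing in $t$ with a continuous linear limit $\lambda_i t$, pointwise convergence on a dense set upgrades in the standard way to uniform convergence on $[0,T]$. Item~(1), the collapse of the qubits, follows from the stochastic domination of each $Q_i^{(k)}(\cdot)$ by an appropriately coupled GI/M/$\infty$ queue $\bar Q_i(\cdot)$ with arrival process $S_i$ and abandonment probability $\gamma$. Equation~\eqref{eq:collapseOfQ} already yields the expected collapse; I would strengthen it to an a.s. uniform-in-$t$ statement using tail estimates for $\bar Q_i$ combined with a Borel--Cantelli argument, exploiting the fact that the stationary distribution of a GI/M/$\infty$ queue with geometric abandonment has light (in fact exponential) tails. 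The same domination works for each of the coupled processes ${\bf Q}^{[x]}$.

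For the matching-rate convergences \eqref{eq:convergence1}--\eqref{eq:convergence2}, I would apply the ergodic theorem to the relevant positive recurrent Markov chain. For $x\in\{0,1\}$, Proposition~\ref{prop:equivalence} and Corollary~\ref{cor: w_throughput_bound} give positive recurrence of ${\bf Q}^{[x]}$ with the stationary per-slot successful matching rate equal to $\overline{C}_i$ or $\underline{C}_i$, paired as appropriate by the choice of control. For $x\in\{2,3\}$, under the stated stability condition, Lemma~\ref{lem:halfStability} gives positive recurrence of the joint request/qubit chain and identifies its stationary throughputs as $C_i(\lambda_j)$ on the infinitely backlogged side and $\lambda_j$ on the finite side. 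Birkhoff's ergodic theorem then yields, for each fixed $t$, a.s. convergence of the scaled cumulative number of successful matchings to $\rho^{(x)}_i \cdot t$ for the appropriate rate $\rho^{(x)}_i$, and the same dense-set/monotonicity upgrade as in the previous step extends this to uniform convergence on $[0,T]$ a.s. Taking differences then gives \eqref{eq:convergence1} and \eqref{eq:convergence2} simultaneously for every pair $(t_1,t_2)$ on a single probability-one event.

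The main obstacle, in my view, is the uniformization: the conclusions must hold for \emph{all} pairs $(t_1,t_2)\in[0,T]^2$ simultaneously, not just on a countable dense collection, so the monotone-convergence-to-uniform-convergence step is essential and must be applied to the correct nondecreasing cumulative process. A secondary technical challenge is strengthening the mean-level collapse in~\eqref{eq:collapseOfQ} to an almost-sure, uniform-in-$t$ collapse, which requires nontrivial tail control of the supremum of the dominating GI/M/$\infty$ queue rather than merely its mean, and should follow from a union bound over $t = 0,1,\dots,kT$ combined with the exponential tail of the stationary occupancy.
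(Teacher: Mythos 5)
Your proposal is correct and follows essentially the same route as the paper's proof: functional SLLN for the arrivals, domination of the qubit queues by coupled GI/M/$\infty$ queues for the collapse, ergodicity of the coupled processes (via Corollary~\ref{cor: w_throughput_bound} for $x\in\{0,1\}$ and Lemma~\ref{lem:halfStability} for $x\in\{2,3\}$) to get a.s.\ convergence of the cumulative successful matchings for fixed times, and then a countable (rational) intersection upgraded to all pairs $(t_1,t_2)$ by monotonicity of the cumulative sums and continuity of the linear limit. The only place you elaborate beyond the paper is the almost-sure strengthening of the qubit collapse, which the paper handles with the same coupling argument but states more briefly.
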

\begin{proof}{Proof:}
The first three equations hold in a set of probability one due to Equation \eqref{eq:Qcollapse}, and the Functional Strong Law of Large Numbers.\\

We have
\begin{align*}
    M^{[x]}_1(t) &= \min\left\{Q^{[x]}_1(t)-\sum\limits_{\ell=1}^{Q^{[x]}_1(j)} Z_\ell^{(1)}(j)+S_1(t),\,\, Q^{[x]}_2(t)-\sum\limits_{\ell=1}^{Q^{[x]}_2(j)} Z_\ell^{(2)}(j)+S_2(t) - X^{[x]}(t)M^{[x]}_2(t) \right\}, \\
    M^{[x]}_2(t) &= \min\left\{Q^{[x]}_2(t)-\sum\limits_{\ell=1}^{Q^{[x]}_2(j)} Z_\ell^{(2)}(j)+S_2(t) - [1-X^{[x]}(t)]M^{[x]}_1(t),\,\, Q^{[x]}_3(t)-\sum\limits_{\ell=1}^{Q^{[x]}_3(j)} Z_\ell^{(3)}(j)+S_3(t) \right\}.
\end{align*}
For $x\in\{1,2\}$, since each of the coordinates of ${\bf Q}^{[x]}(\cdot)$ is upper bounded by an ergodic GI/M/$\infty$ queue with correlated services (Lemma \ref{lem:infinityQueuesBound}), then the process $\big({\bf N}^{[x]}(\cdot),{\bf Q}^{[x]}(\cdot)\big)$ is ergodic. For $x=2$ with $\lambda_2<\underline{C}_2$ and $x=3$ with $\lambda_1<\underline{C}_1$, Lemma \ref{lem:halfStability} implies that the process $\big({\bf N}^{[x]}(\cdot),{\bf Q}^{[x]}(\cdot)\big)$ is ergodic. 

Fix $t_1,t_2\in\mathbb{Q}\cap[0,T]$. Since $\big({\bf N}^{[x]}(\cdot),{\bf Q}^{[x]}(\cdot)\big)$ is ergodic, for all sample paths in a set $\mathcal{C}_{\{t_1,t_2\}}$ such that $\mathbb{P}\big(\mathcal{C}_{\{t_1,t_2\}}\big)=1$, equations \eqref{eq:convergence1} and \eqref{eq:convergence2} hold for times $t_1$ and $t_2$. Thus, equations \eqref{eq:convergence1} and \eqref{eq:convergence2} hold for all $t_1,t_2\in\mathbb{Q}\cap[0,T]$ for all sample paths in the set
\[ \mathcal{C}_{\mathbb{Q}\cap[0,T]} = \bigcap\limits_{t_1,t_2\in \mathbb{Q}\cap[0,T]} \mathcal{C}_{\{t_1,t_2\}}, \]
which also has probability one because it is a countable intersection. Finally, it is easily checked that these equations \eqref{eq:convergence1} and \eqref{eq:convergence2} also hold for all $t_1,t_2\in[0,T]$ over the same set by the monotonicity of the sum and the continuity of the limit.   \end{proof}

Let us fix an arbitrary ${\bf x^0}\in\mathbb{R}_+^2$, sequences $B_k \downarrow 0$ and $\beta_k\downarrow 0$, and a constant $L>0$. For $k\geq 1$, we define the following subsets of $D^2[0,T]$. Let
\[  E_k(B_k,\beta_k) = \Big\{ {\bf x}\in D^2[0,T]:\,\left\|{\bf x}(0)-{\bf x^0}\right\|_1\leq B_k, \text{ and } \|{\bf x}(a)-{\bf x}(b)\|_1\leq L|a-b| +\beta_k, \,\,\forall \, a,b\in[0,T]\Big\} \]
be the set of all c\`adl\`ag, piece-wise constant functions that are $\beta_k$-approximate $L$-Lipschitz continuous, and that their initial conditions are at most $B_k$ away from ${\bf x^0}$. We also define
\[ E_c = \Big\{ {\bf x}\in D^2[0,T]:\, {\bf x}(0)={\bf x^0},\,\, \|{\bf x}(a)-{\bf x}(b)\|_1\leq L|a-b|, \,\,\forall \, a,b\in[0,T]\Big\}, \]
which is the set of $L$-Lipschitz continuous functions with initial condition ${\bf x^0}$, which is known to be sequentially compact by the Arzel\`a-Ascoli theorem.

\begin{lemma}\label{lem:sequences}
  Fix $T>0$,  $\omega\in\mathcal{C}$, and some ${\bf n^0}\in\mathbb{R}_+^2$. Suppose that
  \[ \left\|{\bf N}^{(k)}(\omega,0)-{\bf n^0}\right\|_1 \leq B_k, \]
  for some sequence $B_k \downarrow 0$. Then, there exists a sequence $\beta_k \downarrow 0$ such that
  \[ {\bf N}^{(k)}(\omega,\cdot) \in E_k\left(B_k,\beta_k\right), \quad \forall\ k\geq 1, \]
with the constant $L$ in the definition of $E_k$ equal to $\lambda_1+\lambda_2+\mu_1+2\mu_2+\mu_3$.
\end{lemma}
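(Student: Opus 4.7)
The plan is to combine the recursive definition of the requests process with the telescoping structure that the three qubit evolutions impose on the matching counts, and then invoke the uniform convergence guarantees on $\omega \in \mathcal{C}$ provided by Lemma~\ref{lem:nice_set} to obtain an approximate Lipschitz estimate with a slack $\beta_k$ that vanishes independently of $a, b \in [0,T]$. Writing $u(t) := \lfloor kt \rfloor$, the recursion \eqref{eq: w_state_vector_N1}, the triangle inequality, and $\sum_{\ell=1}^{M_i(s)} Y_\ell^{(i)}(s) \le M_i(s)$ yield, for any $0 \le a \le b \le T$ and $i \in \{1,2\}$,
\begin{align*}
\bigl| N_i^{(k)}(\omega,b) - N_i^{(k)}(\omega,a) \bigr| \le \frac{1}{k} \sum_{s=u(a)}^{u(b)-1} A_i(\omega,s) + \frac{1}{k} \sum_{s=u(a)}^{u(b)-1} M_i(\omega,s).
\end{align*}
The first sum converges, uniformly in $(a,b)\in[0,T]^2$, to $\lambda_i (b-a)$ by the uniform FSLLN guaranteed on $\mathcal{C}$, so the task reduces to a uniform control of $\tfrac{1}{k}\sum M_i$.

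For the matching sums, I would use that the qubit evolutions \eqref{eq: w_state_vector_Q1}--\eqref{eq: w_state_vector_Q3} rearrange, via $Q_j(t+1), D_j(t) \ge 0$, into the telescoping bounds
\begin{align*}
\sum_{s=t_1}^{t_2-1} M_1(s) &\le Q_1(t_1) + \sum_{s=t_1}^{t_2-1} S_1(s), \\
\sum_{s=t_1}^{t_2-1} M_2(s) &\le Q_3(t_1) + \sum_{s=t_1}^{t_2-1} S_3(s), \\
\sum_{s=t_1}^{t_2-1} \bigl(M_1(s) + M_2(s)\bigr) &\le Q_2(t_1) + \sum_{s=t_1}^{t_2-1} S_2(s).
\end{align*}
Since $M_1, M_2 \ge 0$, the third bound is also an upper bound on $\sum M_1$ and on $\sum M_2$ separately. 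Each of these right-hand sides is non-negative, so summing any two valid upper bounds on $\sum M_i$ still gives a valid (albeit looser) upper bound. Combining the first and third estimates for $\sum M_1$, and the second and third estimates for $\sum M_2$, produces unified bounds whose scaled limits are $(\mu_1+\mu_2)(b-a)$ and $(\mu_3+\mu_2)(b-a)$, respectively. Adding the two coordinate contributions together with the arrival contributions produces the announced $L = \lambda_1 + \lambda_2 + \mu_1 + 2\mu_2 + \mu_3$.

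To conclude, let $k \to \infty$ in the above estimates along $\omega\in\mathcal{C}$: by Lemma~\ref{lem:nice_set} the initial qubit terms $Q_j(u(a))/k$ vanish uniformly in $a\in[0,T]$ thanks to the collapse of $\|{\bf Q}^{(k)}(\omega,\cdot)\|_1$; the scaled sums of $A_i(\omega,\cdot)$ and $S_j(\omega,\cdot)$ converge uniformly in $(a,b)\in[0,T]^2$ to $\lambda_i(b-a)$ and $\mu_j(b-a)$; and the $\mathcal{O}(1/k)$ mismatch between continuous times $a,b$ and their floors $u(a),u(b)$ is negligible. Defining $\beta_k$ as the sum of the suprema of these (deterministic, once $\omega$ is fixed) error terms over all pairs $(a,b)\in[0,T]^2$ gives $\beta_k \downarrow 0$, and combining with the initial-condition hypothesis yields ${\bf N}^{(k)}(\omega,\cdot) \in E_k(B_k,\beta_k)$. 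The main technical point is precisely the \emph{uniformity} of all these convergences over the two-parameter family $(a,b)$: this is what the construction of $\mathcal{C}$ in Lemma~\ref{lem:nice_set} delivers, and without it one could only produce an $a,b$-dependent slack that would be useless for the subsequent Arzel\`a--Ascoli-type tightness argument.
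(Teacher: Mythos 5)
Your proposal is correct and follows essentially the same route as the paper: bound the increments of ${\bf N}^{(k)}_i$ by the scaled arrival sums plus the matching sums, control the latter by telescoping the qubit recursions (yielding $S$-sums plus queue-length terms), and then invoke the uniform FSLLN and the collapse of ${\bf Q}^{(k)}$ on $\mathcal{C}$ from Lemma~\ref{lem:nice_set} to extract the constants $\lambda_1+\mu_1+\mu_2$ and $\lambda_2+\mu_2+\mu_3$ and a vanishing slack $\beta_k$. The only difference is that you spell out explicitly the telescoping bounds on $\sum_s M_i(s)$ that the paper leaves implicit behind ``by construction, and the triangular inequality,'' arriving at the same $L=\lambda_1+\lambda_2+\mu_1+2\mu_2+\mu_3$.
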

\begin{proof}{Proof:}
  By construction, and the triangular inequality, we have
  \begin{align*}
   \left|N^{(k)}_1(\omega,a)-N^{(k)}_1(\omega,b)\right| &\leq \frac{1}{k}  \left| \sum\limits_{s=\lfloor kb \rfloor + 1}^{\lfloor ka \rfloor} A_1(s) \right| + \frac{1}{k}\left| \sum\limits_{s=\lfloor kb \rfloor + 1}^{\lfloor ka \rfloor} S_1(s) + S_2(s) \right| \\
   &\qquad\qquad\qquad\qquad\qquad\qquad\qquad + \frac{1}{k}\Big| Q_1(ka) - Q_1(kb) + Q_2(ka) - Q_2(kb) \Big|
    \end{align*}
  Since $\omega\in\mathcal{C}$, Lemma \ref{lem:nice_set} implies that there exists sequences $\beta_k^1\downarrow 0$, $\beta_k^2 \downarrow 0$, and $\beta_k^3 \downarrow 0$ (which depend on $\omega$) such that for all $k\geq 1$,
  \[ \frac{1}{k} \left| \sum\limits_{s=\lfloor kb \rfloor + 1}^{\lfloor ka \rfloor} A_1(s) \right| \leq \lambda_1|a-b|+\beta_k^1, \]
  \[ \frac{1}{k}\left| \sum\limits_{s=\lfloor kb \rfloor + 1}^{\lfloor ka \rfloor} S_1(s) + S_2(s) \right| \leq (\mu_1+\mu_2)|a-b|+\beta_k^2, \]
  and
  \[ \frac{1}{k}\Big| Q_1(ka) - Q_1(kb) + Q_2(ka) - Q_2(kb) \Big| \leq \beta_k^3, \]
  which imply
  \[ \left|N^{(k)}_1(\omega,a)-N^{(k)}_1(\omega,b)\right| \leq (\lambda_1+\mu_1+\mu_2)|a-b|+\left(\beta_k^1+\beta_k^2+\beta_k^3\right).
 \]
 Analogously we can show that
 \[ \left|N^{(k)}_2(\omega,a)-N^{(k)}_2(\omega,b)\right| \leq (\lambda_2+\mu_2+\mu_3)|a-b|+\left(\beta_k^4+\beta_k^5+\beta_k^6\right).
 \]
  The proof is completed by setting $\beta_k=\beta_k^1+\beta_k^2+\beta_k^3+\beta_k^4+\beta_k^5+\beta_k^6$, and $L=\lambda_1+\lambda_2+\mu_1+2\mu_2+\mu_3$.
  \end{proof}

We are now ready to prove the existence of convergent subsequences of the process of interest.

\begin{proposition}\label{prop:tightness}
  Fix $T>0$, $\omega\in\mathcal{C}$, and some ${\bf n^0}\in\mathbb{R}_+^2$. Suppose (as in Lemma \ref{lem:sequences}) that
  \[ \left\|{\bf N}^{(k)}(\omega,0)-{\bf n^0}\right\|_1\leq B_k, \]
  where $B_k\downarrow 0$. Then, every subsequence of $\left\{{\bf N}^{(k)}(\omega,\cdot)\right\}_{k=1}^{\infty}$ contains a further subsequence $\left\{{\bf N}^{(k_\ell)}(\omega,\cdot)\right\}_{\ell=1}^{\infty}$ that converges to a coordinate-wise Lipschitz continuous function ${\bf n}(\cdot)$ with ${\bf n}(0)={\bf n^0}$ and
  \[ |n_i(a)-n_i(b)| \leq L|a-b|, \quad\quad \forall \, a,b\in[0,T], \]
  where $L$ does not depend on $T$, $\omega$, nor on ${\bf n}(\cdot)$.
\end{proposition}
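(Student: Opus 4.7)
The plan is to adapt the classical Arzelà--Ascoli argument to the approximate Lipschitz bound provided by Lemma~\ref{lem:sequences}, since the scaled processes ${\bf N}^{(k)}(\omega,\cdot)$ are piecewise constant (hence discontinuous) and only satisfy equicontinuity up to a vanishing error. By Lemma~\ref{lem:sequences}, there exist sequences $B_k\downarrow 0$ and $\beta_k\downarrow 0$ (depending on $\omega$) such that ${\bf N}^{(k)}(\omega,\cdot)\in E_k(B_k,\beta_k)$ for every $k\geq 1$, with $L=\lambda_1+\lambda_2+\mu_1+2\mu_2+\mu_3$. Crucially, $L$ depends only on the arrival rates, so it is independent of $T$, $\omega$, and the limit ${\bf n}(\cdot)$. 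This approximate Lipschitz bound is the substitute for equicontinuity in Arzelà--Ascoli; the main (and only) subtlety of the proof will be handling the slack $\beta_k$ throughout.

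First I would extract a pointwise-convergent subsequence by a diagonal argument. The approximate Lipschitz bound combined with $\|{\bf N}^{(k)}(\omega,0)-{\bf n^0}\|_1\leq B_k$ yields the uniform bound $\|{\bf N}^{(k)}(\omega,t)\|_\infty\leq \|{\bf n^0}\|_1+\sup_k B_k+LT+\sup_k\beta_k$ for all $t\in[0,T]$ and $k\geq 1$. Fix a countable dense set $\mathcal{D}\subset[0,T]$ containing $0$ (e.g.\ $\mathbb{Q}\cap[0,T]$ together with the endpoints) and enumerate it as $\{t_j\}_{j\geq 1}$. Given any subsequence of $\{{\bf N}^{(k)}(\omega,\cdot)\}$, apply Bolzano--Weierstrass coordinate-wise at $t_1$, then along the resulting subsequence at $t_2$, and so on; take the diagonal subsequence $(k_\ell)$ so that $\lim_{\ell\to\infty}{\bf N}^{(k_\ell)}(\omega,t)$ exists for every $t\in\mathcal{D}$. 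Call this limit ${\bf n}(t)$.

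Next I would show that ${\bf n}$ extends to an $L$-Lipschitz function on $[0,T]$ with ${\bf n}(0)={\bf n^0}$. For $a,b\in\mathcal{D}$, the inequality $|N_i^{(k_\ell)}(\omega,a)-N_i^{(k_\ell)}(\omega,b)|\leq L|a-b|+\beta_{k_\ell}$ passes to the limit as $\ell\to\infty$ (since $\beta_{k_\ell}\to 0$), giving $|n_i(a)-n_i(b)|\leq L|a-b|$ for both coordinates $i\in\{1,2\}$. Thus ${\bf n}$ is uniformly continuous on the dense set $\mathcal{D}$ and admits a unique $L$-Lipschitz extension to all of $[0,T]$, which I keep denoting by ${\bf n}(\cdot)$. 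The initial condition ${\bf n}(0)={\bf n^0}$ follows from $0\in\mathcal{D}$ and $\|{\bf N}^{(k_\ell)}(\omega,0)-{\bf n^0}\|_1\leq B_{k_\ell}\to 0$.

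Finally I would upgrade pointwise convergence on $\mathcal{D}$ to uniform convergence on $[0,T]$ by a three-epsilon argument. Given $\varepsilon>0$, choose finitely many points $s_1,\dots,s_m\in\mathcal{D}$ forming an $\varepsilon/(3L)$-net of $[0,T]$, and then pick $\ell_0$ large enough so that $\beta_{k_\ell}\leq \varepsilon/3$ and $|N_i^{(k_\ell)}(\omega,s_j)-n_i(s_j)|\leq \varepsilon/3$ for every $\ell\geq\ell_0$, every $i\in\{1,2\}$, and every $j\in\{1,\dots,m\}$. For an arbitrary $t\in[0,T]$, picking $s_j$ with $|t-s_j|\leq\varepsilon/(3L)$ gives
\[
|N_i^{(k_\ell)}(\omega,t)-n_i(t)|\leq \bigl(L|t-s_j|+\beta_{k_\ell}\bigr)+|N_i^{(k_\ell)}(\omega,s_j)-n_i(s_j)|+L|t-s_j|\leq \varepsilon,
\]
so $\sup_{t\in[0,T]}\|{\bf N}^{(k_\ell)}(\omega,t)-{\bf n}(t)\|_1\to 0$. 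The entire argument only uses that $\beta_k\downarrow 0$ and $B_k\downarrow 0$, so the bookkeeping between the càdlàg nature of ${\bf N}^{(k)}$ and the continuity of the limit ${\bf n}$ is absorbed cleanly into the $\beta_k$ slack; this is the main (and essentially the only) technical point of the proof.
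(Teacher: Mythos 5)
Your proof is correct in substance and provides a self-contained argument where the paper simply cites \cite{bramson98} and \cite{powerOfLittle}. The paper's stated route is slightly different in flavor: it passes to continuous linear interpolations of the piece-wise constant processes, shows that $E_k(B_k,\beta_k)$ is asymptotically ``close'' to the sequentially compact set $E_c$, and then invokes compactness of $E_c$. You instead bypass interpolation entirely: you obtain a pointwise limit on a countable dense set by diagonalization, observe that the approximate Lipschitz bound passes to the limit (killing $\beta_{k_\ell}$), extend uniquely by Lipschitz continuity, and upgrade to uniform convergence by a finite-net argument that once more absorbs the vanishing slack. These are two equivalent ways to do Arzel\`a--Ascoli for approximately-equicontinuous families; yours is more elementary and more explicit, while the paper's interpolation route is the one that generalizes most cleanly to $D$-space arguments in the fluid-limit literature.

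One small bookkeeping slip: with an $\varepsilon/(3L)$-net and $\beta_{k_\ell}\leq\varepsilon/3$ you actually control four terms of size $\varepsilon/3$ each (two from $L|t-s_j|$, one from $\beta_{k_\ell}$, one from the pointwise convergence at $s_j$), so the bound comes out to $4\varepsilon/3$ per coordinate, not $\varepsilon$. Replacing $\varepsilon/3$ by $\varepsilon/4$ (or by $\varepsilon/8$ if you want the $\|\cdot\|_1$ sum over the two coordinates to be at most $\varepsilon$) fixes it. This is purely a constant adjustment and does not affect the validity of the argument.
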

\begin{proof}{Proof:}
As in Lemma \ref{lem:sequences}, let $L=\lambda_1+\lambda_2+\mu_1+2\mu_2+\mu_3$. Note that, if the trajectories were $L$-Lipschitz continuous, then the Arzel\`a-Ascoli would immediately give us this result. However, since trajectories are only approximately $L$-Lipschitz continuous (Lemma \ref{lem:sequences}), a more involved argument is required. This argument was introduced in \cite{bramson98} and further developed in \cite{powerOfLittle} (and it is therefore omitted), and involves doing a continuous linear interpolation of the piece-wise constant to prove the ``closeness'' of $E_k\big(B_k,\beta_k\big)$ to the sequentially compact set $E_c$.
  \end{proof}

\subsection{Derivatives of the limit points}

In this subsection we will show that the limit points of the sequence of sample paths (which we showed that they are Lipschitz continuous in Proposition \ref{prop:tightness}) have the same derivatives as a fluid solution (cf. Definition \ref{def:fluid}) for all regular points.

\begin{proposition}\label{prop:derivatives}
  Fix $\omega\in\mathcal{C}$ and $T>0$. Let ${\bf n}(\cdot)$ be a limit point of some subsequence of $\left\{{\bf N}^{(k)}(\omega,\cdot)\right\}_{k=1}^\infty$. Then, ${\bf n}(\cdot)$ has the same derivatives as a fluid solution (cf. Definition \ref{def:fluid}) for all regular points.
\end{proposition}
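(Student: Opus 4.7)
The plan is to fix $\omega \in \mathcal{C}$ (from Lemma \ref{lem:nice_set}) and a convergent subsequence $\{\mathbf{N}^{(k_\ell)}(\omega, \cdot)\}$ with Lipschitz limit $\mathbf{n}(\cdot)$ (from Proposition \ref{prop:tightness}), and then, for each regular point $t$ of $\mathbf{n}$, to compute $\mathbf{n}'(t)$ by first passing to the limit $k_\ell \to \infty$ in the finite difference $\mathbf{N}^{(k_\ell)}(t+\epsilon) - \mathbf{N}^{(k_\ell)}(t)$ and then sending $\epsilon \downarrow 0$. The $k_\ell$--limit is controlled via path-wise coupling with the fully/partially backlogged processes $\mathbf{M}^{[x]}$ defined before Lemma \ref{lem:nice_set}, whose ergodic matching rates are exactly the constants $\overline{C}_i$, $\underline{C}_i$, $C_i(\lambda_j)$, $\lambda_i$ appearing in Definition \ref{def:fluid}. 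The argument naturally splits into the four cases (a)--(d) of that definition.

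\textbf{Case (a), $n_i(t) > n_j(t) > 0$.} By continuity of $\mathbf{n}$, I pick $\epsilon_0, \delta > 0$ such that $n_i(s) - n_j(s) > 2\delta$ and $n_j(s) > 2\delta$ on $[t, t+\epsilon_0]$; uniform convergence of $\mathbf{N}^{(k_\ell)}$ transfers these inequalities (with margin $\delta$) to the pre-limit for $\ell$ sufficiently large. On this window, Max-Weight uses a single persistent priority $X \equiv 0$ or $X \equiv 1$, and neither request queue depletes, so $U_1, U_2$ in \eqref{eq: w_matching_vector1}--\eqref{eq: w_matching_vector2} are never binding; consequently the dynamics of $(\mathbf{Q}, \mathbf{M})$ in the original process agree path-wise with those of the fully backlogged coupled process $\mathbf{M}^{[x]}$ on this window (modulo the initial qubit transient, which is negligible at the fluid scale since $\|\mathbf{Q}^{(k_\ell)}(0)\|_1 \to 0$). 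Lemma \ref{lem:nice_set} then yields
\[ n_i(t+\epsilon) - n_i(t) = (\lambda_i - \overline{C}_i)\epsilon, \qquad n_j(t+\epsilon) - n_j(t) = (\lambda_j - \underline{C}_j)\epsilon \]
for every $\epsilon \in (0, \epsilon_0)$, and dividing by $\epsilon$ identifies $R_i, R_j$ as in Definition \ref{def:fluid}(a).

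\textbf{Cases (c) and (d).} For Case (c), $n_i(t) > n_j(t) = 0$, I use the same coupling idea but with the partially backlogged process of index $x \in \{2, 3\}$: when $\lambda_j < \underline{C}_j$, Lemma \ref{lem:halfStability} gives positive recurrence of the coupled process and Lemma \ref{lem:nice_set} delivers rates $C_i(\lambda_j)$ and $\lambda_j$; when $\lambda_j \geq \underline{C}_j$ the partially backlogged system is unstable, so $t$ must be an instant at which $n_j$ leaves zero and the right derivative is governed by the Case (a) rates $\overline{C}_i$ and $\underline{C}_j$. Continuity of $C_i$ at $\lambda_j = \underline{C}_j$ glues the two subcases to Definition \ref{def:fluid}(c). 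Case (d) is immediate: since $\|\mathbf{n}(0)\|_1 > 0$ and $\mathbf{n}$ is Lipschitz, $t > 0$, and once $\mathbf{n}$ reaches the origin it stays there on a neighborhood, so $\mathbf{n}'(t) = 0 = \lambda - \mathbf{R}(\mathbf{0})$.

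\textbf{Case (b), $n_1(t) = n_2(t) > 0$ — the main obstacle.} Here $n_1 + n_2$ stays bounded away from zero on a neighborhood, so the request queues are effectively infinitely backlogged in the pre-limit; however, Max-Weight can now switch priorities arbitrarily often as $N_1^{(k_\ell)} - N_2^{(k_\ell)}$ oscillates near zero, and it is this \emph{averaging effect} that must be made rigorous. Two ingredients will pin down the derivative. First, by Proposition \ref{prop:equivalence} and Corollary \ref{cor: w_throughput_bound}, the total matching rate on the window $[\lfloor k_\ell t \rfloor, \lfloor k_\ell(t+\epsilon)\rfloor]$ does not depend on the priority, giving in the limit
\[ [n_1(t+\epsilon) - n_1(t)] + [n_2(t+\epsilon) - n_2(t)] = (\lambda_1 + \lambda_2 - C_{1,2})\epsilon, \]
which forces $R_1(\mathbf{n}(t)) + R_2(\mathbf{n}(t)) = C_{1,2}$. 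Second, in the subcase where $n_1 \equiv n_2$ on a neighborhood of $t$, differentiability at $t$ imposes $n_1'(t) = n_2'(t)$, and combined with the previous identity this uniquely determines $R_i = (C_{1,2} + \lambda_i - \lambda_j)/2$ provided this value lies in $[\underline{C}_i, \overline{C}_i]$. If it falls outside this feasible interval, $n_1 - n_2$ cannot remain identically zero even for an instant after $t$, and the derivative is computed by continuity with Case (a) to yield the extreme value $\overline{C}_i$ or $\underline{C}_j$, which is exactly the image of the projection in Definition \ref{def:fluid}(b). The delicate step is showing that the empirical fraction of time Max-Weight uses each priority in a window of macroscopic length $\epsilon$ realizes the required mixture of $\overline{C}_i$ and $\underline{C}_i$; I plan to do this by combining the $x \in \{0,1\}$ upper and lower bounds from Lemma \ref{lem:nice_set} with pre-limit estimates on $N_1^{(k_\ell)} - N_2^{(k_\ell)}$ that rule out sustained imbalance, thereby extracting the projected rate uniquely.
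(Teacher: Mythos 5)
Your treatment of Cases (a), (c), and (d) matches the paper's argument essentially verbatim: localize so the ordering and positivity of $n_1,n_2$ persist over a short window, couple path-wise with the appropriate backlogged process $\mathbf{M}^{[x]}$, and invoke Lemma~\ref{lem:nice_set}. For Case~(b), your first ingredient (total rate $C_{1,2}$ via Proposition~\ref{prop:equivalence}, individual rate bounds $[\underline{C}_i,\overline{C}_i]$) is exactly what the paper establishes in equations~\eqref{eq:rateBounds} and~\eqref{eq:rateTotal}. The gap is in how you close Case~(b).

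You explicitly flag the ``delicate step'' of quantifying the empirical fraction of time Max-Weight spends in each priority over a macroscopic $\epsilon$-window, and you propose to prove it via pre-limit estimates on $N_1^{(k_\ell)}-N_2^{(k_\ell)}$. That route is not filled in, and it is not the one the paper takes: the paper avoids any quantitative averaging estimate entirely. The key observation (Lemma~\ref{lem:absurd}) is that at a \emph{regular} point $t$ with $n_1(t)=n_2(t)>0$, the derivative of $n_1-n_2$ cannot be strictly positive, because regularity would propagate this to a right-neighborhood where $n_1>n_2$, and there Case~(a) forces $\tfrac{d}{du}(n_1-n_2)=(\lambda_1-\overline{C}_1)-(\lambda_2-\underline{C}_2)\leq 0$ (under the balanced hypothesis $\tfrac{\lambda_1+\lambda_2-C_{1,2}}{2}\geq\max\{\lambda_1-\overline{C}_1,\lambda_2-\overline{C}_2\}$), contradicting continuity of the derivative at $t$. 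By symmetry $n_1'(t)=n_2'(t)$, and with the total-rate identity this pins down $R_i(\mathbf{n}(t))=(C_{1,2}+\lambda_i-\lambda_j)/2$. When instead $\lambda_i-\overline{C}_i>\lambda_j-\underline{C}_j$ the paper shows the trajectories separate immediately to the right of $t$ and regularity gives the Case~(a) value, which is the projection endpoint. Your sub-split between ``$n_1\equiv n_2$ on a neighborhood'' and ``they separate immediately'' is not how the paper argues, and the first branch assumes more than is available (you only know $n_1(t)=n_2(t)$ at the single regular point $t$, not persistence on a neighborhood). The contradiction argument handles the general case without needing that dichotomy and without the averaging estimate. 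In short: the structure of your proposal is sound, but the remaining hard step you earmark is not actually needed; replace it with the regularity-based contradiction and Case~(b) closes cleanly.
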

\begin{proof}{Proof:}
We fix some $\omega\in\mathcal{C}$ and for the rest of this proof we suppress the dependence on $\omega$ in our notation. Let $\left\{{\bf N}^{(k_\ell)}(\cdot)\right\}_{\ell=1}^\infty$ be a subsequence such that
\[ \lim\limits_{\ell\to\infty} \sup\limits_{0\leq t\leq T}\left\|{\bf N}^{(k_\ell)}(t)-{\bf n}(t)\right\|_1=0, \qquad a.s. \]
After possibly restricting, if necessary, to a further subsequence, we can define Lipschitz continuous functions $a_i(\cdot)$ and $r_i(\cdot)$ as the limits of the subsequences of cumulative arrivals and cumulative matching processes $\big\{A_i^{(k_\ell)}(\cdot)\big\}_{\ell=1}^\infty$ and $\big\{R_i^{(k_\ell)}(\cdot)\big\}_{\ell=1}^\infty$ respectively, where
\[ R_i^{(k_\ell)}(t) = \frac{1}{k_\ell} \sum\limits_{j=1}^{\lfloor k_\ell t \rfloor} \sum\limits_{\ell=1}^{M_1(j)} Y_\ell^{(1)}(j). \]
Because of the relation $N_i^{(k)}(t)=N_i^{(k)}(0)+A_i^{(k)}(t)-R_i^{(k)}(t)$, it is enough to prove the following relations, for almost all $t$:
\begin{align*}
    \frac{da_i(t)}{dt}=&\lambda_i \qquad \text{and} \qquad \frac{dr_i(t)}{dt}= R_i(t).
  \end{align*}
We will provide a proof only for the second one, as the first one is immediate by the Functional Strong Law of Large Numbers.\\

Let us fix some time $t\in[0,T)$, which is a regular time for both $r_1(\cdot)$ and $r_2(\cdot)$. Let $\epsilon>0$ be small enough so that $t+\epsilon \leq T$ and so that it also satisfies a condition to be introduced later. By construction, we have
\begin{align*}
  R^{(k_\ell)}_1(t+\epsilon)-R^{(k_\ell)}_1(t) &= \frac{1}{k_\ell} \sum\limits_{j=\lfloor k_\ell t \rfloor + 1}^{\lfloor k_\ell(t+\epsilon)\rfloor} \sum\limits_{\ell=1}^{M_1(j)} Y_\ell^{(1)}(j), \\
  R^{(k_\ell)}_2(t+\epsilon)-R^{(k_\ell)}_2(t) &= \frac{1}{k_\ell} \sum\limits_{j=\lfloor k_\ell t \rfloor + 1}^{\lfloor k_\ell(t+\epsilon)\rfloor} \sum\limits_{\ell=1}^{M_2(j)} Y_\ell^{(2)}(j),
\end{align*}
where
\begin{align}
  M_1(j) &= \min\Bigg\{Q_2(j)-D_2(j)+S_2(j) - X(j)M_2(j),\,\, Q_1(j)-D_1(j)+S_1(j), \nonumber \\
  &\qquad\qquad\qquad\quad\qquad\qquad\qquad\qquad\qquad\quad \left. \min\left\{ m_1 : \sum\limits_{\ell=1}^{m_1} Y_\ell^{(1)}(j) \geq N_1(j)+A_1(j) \right\} \right\} \label{eq:R1} \\
  M_2(j) &= \min\Bigg\{Q_3(j)-D_3(j)+S_3(j) - [1-X(j)]M_1(j),\,\,  Q_2(j)-D_2(j)+S_2(j), \nonumber \\
  &\qquad\qquad\qquad\quad\qquad\qquad\qquad\qquad\qquad\quad \left. \min\left\{ m_2 : \sum\limits_{\ell=1}^{m_2} Y_\ell^{(2)}(j) \geq N_2(j)+A_2(j) \right\} \right\}. \label{eq:R2}
\end{align}
By Lemma \ref{lem:sequences}, there exists a sequence $\beta_{k_\ell} \downarrow 0$ and a constant $L$ such that
 \begin{equation*}
 N_i^{(k_\ell)}(u) \in \big[n_i(t)-\left(\epsilon L+\beta_{k_\ell}\right),\ n_i(t)+\left(\epsilon L+\beta_{k_\ell}\right)\big), \quad \forall\,u\in[t,t+\epsilon]
\end{equation*}
Then, for all sufficiently large $\ell$, we have
 \begin{equation}
 N_i^{(k_\ell)}(u) \in \big[n_i(t)-2\epsilon L,\ n_i(t)+2\epsilon L)\big), \quad  \forall\,u\in[t,t+\epsilon] \label{eq:set_inclusion}
\end{equation}
and
\begin{equation*}
 N_i^{(k_\ell)}(u) + \frac{1}{k_\ell} A_i(\lfloor u \rfloor ) \in \big[n_i(t)-2\epsilon L,\ n_i(t)+2\epsilon L)\big), \quad  \forall\,u\in[t,t+\epsilon].
\end{equation*}

\noindent {\bf Case (a): $n_1(t)>n_2(t)>0$}\\

For $\epsilon$ small enough, we have $n_1(t)-2\epsilon L > n_2(t) + 2\epsilon L > 0$, and thus $N^{(k_\ell)}_1(u)>N^{(k_\ell)}_2(u)>0$ for all $u\in[t,t+\epsilon]$, for all $\ell$ large enough. It follows that $X(j)=0$ for all $j=\lfloor k_\ell t \rfloor + 1,\dots,\lfloor k_\ell(t+\epsilon)\rfloor$, and that the minima in equations \eqref{eq:R1} and \eqref{eq:R2} are never attained in the third case (otherwise we would have $N^{(k_\ell)}_2(u)=0$ for some $u\in[t,t+\epsilon]$). Therefore, we have
\begin{align*}
  M_1(j) &= \min\Big\{Q_2(j)-D_2(j)+S_2(j),\,\, Q_1(j)-D_1(j)+S_1(j) \Big\}, \\
  M_2(j) &= \min\Big\{Q_3(j)-D_3(j)+S_3(j) - M_1(j),\,\,  Q_2(j)-D_2(j)+S_2(j) \Big\},
\end{align*}
where $Q(\cdot)$ is defined recursively as
\begin{align*}
    Q_1(u+1) &= Q_1(u) - D_1(u) + S_1(u) - M_1(u) \\
    Q_2(u+1) &= Q_2(u) - D_2(u) +S_2(u) - M_1(u) - M_2(u) \\
    Q_3(u+1) &= Q_3(u) - D_3(u) +S_3(u) - M_2(u).
\end{align*}
Note that this corresponds to Case 1 in Lemma \ref{lem:nice_set}. Therefore, since $\omega\in\mathcal{C}$, Lemma \ref{lem:nice_set} implies
\begin{align*}
    r_1(t+\epsilon) - r_1(t) &= \lim\limits_{\ell\to\infty} R^{(k_\ell)}_1(t+\epsilon)-R^{(k_\ell)}_1(t) \\
    &= \epsilon .\, \overline{C}_1,
\end{align*}
and
\begin{align*}
    r_2(t+\epsilon) - r_2(t) &= \lim\limits_{\ell\to\infty} R^{(k_\ell)}_2(t+\epsilon)-R^{(k_\ell)}_2(t) \\
    &= \epsilon .\, \underline{C}_2.
\end{align*}
Dividing by $\epsilon$ and taking the limit as $\epsilon$ goes to zero we conclude that, when $n_1(t)>n_2(t)>0$, we have
\begin{align*}
    \frac{dn_1(t)}{dt} &= \lambda_1 - \overline{C}_1 \qquad \text{and} \qquad \frac{dn_2(t)}{dt} = \lambda_2 - \underline{C}_2.
\end{align*}

\noindent {\bf Case (b): $n_1(t)=n_2(t)>0$}\\

First note that, if $n_1(t)>0$ and $n_2(t)>0$, the same argument as in the previous case gives us
\begin{equation}
    r_i(t+\epsilon)-r_i(t) \in \big[\epsilon\underline{C}_i,\, \epsilon\overline{C}_i\big], \label{eq:rateBounds}
\end{equation}
and
\begin{equation}
r_1(t+\epsilon)+r_2(t+\epsilon)-r_1(t)-r_2(t) = \epsilon C_{1,2}, \label{eq:rateTotal}
\end{equation}
for all sufficiently small $\epsilon$.\\

Suppose that
\[ \lambda_1 - \overline{C}_1 > \lambda_2 - \underline{C}_2. \]

Combining this with Equation \eqref{eq:rateBounds}, it follows that 
\[ n_1(u) > n_2(u) \]
for all $u\in(t,t+\epsilon]$. Therefore
\begin{align*}
    \frac{dn_1(u)}{du} &= \lambda_1 - \overline{C}_1 \qquad \text{and} \qquad \frac{dn_2(u)}{du} = \lambda_2 - \underline{C}_2,
\end{align*}
for all $u\in(t,t+\epsilon]$. Since $t$ is a regular time, we also have
\begin{align*}
    \frac{dn_1(t)}{dt} &= \lambda_1 - \overline{C}_1 \qquad \text{and} \qquad \frac{dn_2(t)}{dt} = \lambda_2 - \underline{C}_2.
\end{align*}

Analogously, if 
\[ \lambda_2 - \overline{C}_2 > \lambda_1 - \underline{C}_1, \]
we have
\begin{align*}
    \frac{dn_1(t)}{dt} &= \lambda_1 - \underline{C}_1 \qquad \text{and} \qquad \frac{dn_2(t)}{dt} = \lambda_2 - \overline{C}_2.
\end{align*}

Finally, suppose that
\[ \frac{\lambda_1 + \lambda_2 - C_{1,2}}{2} \geq \max\left\{ \lambda_1 - \overline{C}_1,\,\, \lambda_2 - \overline{C}_2 \right\}. \]
In particular, this means that 
\[ \lambda_1 - \overline{C}_1 \leq \lambda_2 - \underline{C}_2 \qquad \text{and} \qquad \lambda_2 - \overline{C}_2 \leq \lambda_1 - \underline{C}_1. \]
Therefore, we have 
\[ \frac{dn_1(u)}{du} - \frac{dn_2(u)}{du} \leq 0 \]
when $n_1(u)>n_2(u)$, and 
\[ \frac{dn_1(u)}{du} - \frac{dn_2(u)}{du} \geq 0 \]
when $n_1(u)<n_2(u)$.

\begin{lemma}\label{lem:absurd}
We have
\[ \frac{dn_1(t)}{dt} - \frac{dn_2(t)}{dt} =0, \]
and
\[ \frac{dn_1(t)}{dt} = \frac{dn_2(t)}{dt} = \frac{\lambda_1 + \lambda_2 - C_{1,2}}{2}. \]
\end{lemma}
\begin{proof}{Proof:}
We prove this by contradiction. Suppose that
\[ \frac{dn_1(t)}{dt} - \frac{dn_2(t)}{dt} > 0. \]
Since $n_1(t)=n_2(t)$ and $t$ is a regular point, we must have $n_1(u) > n_2(u)$ for all sufficiently small $u>t$. However, we have
\[ \frac{dn_1(u)}{du} - \frac{dn_2(u)}{du} \leq 0 \]
for all sufficiently small $u>t$, which contradicts the regularity of $t$. Assuming
\[ \frac{dn_1(t)}{dt} - \frac{dn_2(t)}{dt} < 0 \]
yields the same contradiction. Therefore, we must have
\[ \frac{dn_1(t)}{dt} - \frac{dn_2(t)}{dt} =0. \]
Combining this with Equation \eqref{eq:rateTotal}, we obtain
\[ \frac{dn_1(t)}{dt} = \frac{dn_2(t)}{dt} = \frac{\lambda_1 + \lambda_2 - C_{1,2}}{2}. \]
  \end{proof}

\noindent {\bf Case (c): $n_1(t)>n_2(t)=0$}\\

Suppose that $\lambda_2 < \underline{C}_2$. Then, using the same argument as in Case (a) but now with only $N_1(\cdot)$ being infinitely backlogged and $N_2(\cdot)$ being stable (Lemma \ref{lem:halfStability}), we conclude that
\begin{align*}
    \frac{dn_1(t)}{dt} &= \lambda_1 - C_1(\lambda_2) \qquad \text{and} \qquad \frac{dn_2(t)}{dt} = 0.
\end{align*}

Suppose that $\lambda_2 > \underline{C}_2$. Since $n_2(u)<n_1(u)$ for all sufficiently small $u>t$, we have
\begin{align}
    \frac{dr_2(t)}{dt} &\leq \underline{C}_2. \label{eq:lastDriftBound}
\end{align}
Therefore, we have $n_2(u)>0$ for all sufficiently small $u>t$. By Case (a), we have
\begin{align*}
    \frac{dn_1(u)}{du} &= \lambda_1 - \overline{C}_1 \qquad \text{and} \qquad \frac{dn_2(u)}{du} = \lambda_2 - \underline{C}_2
\end{align*}
for all sufficiently small $u>t$. Using once more that $t$ is a regular time, it follows that
\begin{align*}
    \frac{dn_1(t)}{dt} &= \lambda_1 - \overline{C}_1 \qquad \text{and} \qquad \frac{dn_2(t)}{dt} = \lambda_2 - \underline{C}_2.
\end{align*}

Finally, suppose that $\lambda_2 = \underline{C}_2$. Using the same argument as in Lemma \ref{lem:absurd}, it can be checked that
\begin{align*}
    \frac{dn_2(t)}{dt} &= 0.
\end{align*}

\noindent {\bf Case (d): $n_1(t)=n_2(t)=0$}\\

Since $t$ is a regular point, $n_i(0)>0$, and cases (a), (b), and (c) imply that $n_i(\cdot)$ can only become $0$ at a non-regular point, we must have $n_1(u)=n_2(u)=0$ for all sufficiently large $u<t$. It follows that
\[ \frac{dn_1(u)}{du} = \frac{dn_2(u)}{du} = 0 \]
for all sufficiently large $u<t$. Using once again that $t$ is a regular point yields
\[ \frac{dn_1(t)}{dt} = \frac{dn_2(t)}{dt} = 0. \]
  \end{proof}

\subsection{Completing the proof of Theorem \ref{thm:fluid}}

For every sample path in $\mathcal{C}$, we have established the following. Proposition~\ref{prop:tightness} implies the existence of limit points of the sequence of processes $\left\{N^{(k)}(\cdot)\right\}_{k=1}^\infty$. Furthermore, according to Proposition~\ref{prop:derivatives} these limit points verify the differential equations of the fluid model. Combining this with the fact that the trajectories are Lipschitz continuous, and thus they are differentiable almost everywhere, the limit points are fluid solutions. In particular, this means that fluid solutions exist.

Finally, since the trajectories are piece-wise linear and any component that hits zero stays at zero forever, the uniqueness of solutions follows from the uniqueness of the pieces that comprise them. In particular, this implies that all limit points are the same, and therefore the limit converges.

\section{Proof of Theorem \ref{thm:stability}}\label{app:stability}

\begin{itemize}
    \item[(i)] When $\lambda_1+\lambda_2<C_{1,2}$, $\lambda_1<C_1(\lambda_2)$, and $\lambda_2<C_2(\lambda_1)$, theorems \ref{thm:fluid} and \ref{thm:uniquenessAndStability} imply that there exists $\delta>0$ such that, for any $n^0\in\mathbb{R}_+^2$ with $\|n^0\|_1>0$, if $N(0)=kn^0$, we have
    \[ \lim\limits_{k\to\infty} \frac{N_i(kt)}{k} = 0, \qquad a.s., \]
    for all $t\geq \delta \|n^0\|_1$. Moreover, Equation \eqref{eq:collapseOfQ} implies that, for any $q^0\in\mathbb{R}_+^3$ with $\|q^0\|_1>0$, if $Q(0)=k q^0$, we have
    \[ \lim\limits_{k\to\infty} \frac{Q_j(kt)}{k} = 0, \qquad a.s., \]
    for all $t>0$. Therefore, the positive recurrence of $\big({\bf N}(\cdot),{\bf Q}(\cdot)\big)$ when $\lambda_1+\lambda_2<C_{1,2}$, $\lambda_1<C_1(\lambda_2)$, and $\lambda_2<C_2(\lambda_1)$ follows from these finite-time convergences to $0$, and \cite[Theorem 6.2]{DaiHarrisonBook}.
    \item[(ii)] We first consider a coupled process $\left(\underline N(\cdot), \underline{\bf  Q}(\cdot)\right)$, where the coupling with the original processes is done in a way so that $\underline N(0) = {\bf N}_1(0)+ {\bf N}_2(0)$, $\underline{\bf Q}_1(0)={\bf Q}_1(0)+{\bf Q}_3(0)$, and $\underline{\bf Q}_2(0)={\bf Q}_2(0)$ and so that they have the same arrival processes $A_1(\cdot)$, $A_2(\cdot)$, $S_1(\cdot)$, $S_2(\cdot)$, and $S_3(\cdot)$, and the same abandonment primitives $Z_\ell^{(i)}(\cdot)$. These new processes are defined recursively as
    \begin{align*}
        \underline N(t+1) &= \underline N(t) + A_1(t) + A_2(t) - \sum\limits_{\ell=1}^{\overline{\overline{M}}(t)} Y_\ell(t) \\
        \underline Q_1(t+1) &= \underline Q_1(t) - \underline D_1(t) + S_1(t) + S_3(t) - \overline{\overline{M}}(t) \\
        \underline Q_2(t+1) &= \underline Q_2(t) - \underline D_2(t) +S_2(t) - \overline{\overline{M}}(t),
    \end{align*}
    where
    \[ \overline{\overline{M}}(t) = \min\left\{\underline Q_1(t)-\underline D_1(t)+S_1(t) + S_3(t),\,\, \underline Q_2(t)-\underline D_2(t)+S_2(t)\right\}. \]
    This corresponds to a system where three-way matchings are always attempted if there are enough qubits, regardless of the requests available. Then, $\underline N(\cdot)$ keeps track of the difference between the total number of requests that arrived to the system, and the total number of successful three-way matchings. This can be negative, and it can be checked that $\underline N(t) \leq {\bf N}_1(t)+ {\bf N}_2(t)$, for all $t\geq 0$, for any non-idling policy. Moreover, since $\underline Q_1(\cdot)$ and $\underline Q_2(\cdot)$ behave as two-sided queues with abandonments, we have
    \[ \lim\limits_{t\to\infty} \frac{1}{t} \sum\limits_{s=0}^t \sum\limits_{\ell=1}^{\overline{\overline{M}}(s)} Y_\ell(s) = C_{1,2}. \]
    Therefore, since $\lambda_1+\lambda_2 > C_{1,2}$, we have that $\underline N(t)$ diverges to $+\infty$ almost surely and, since $\underline N(t) \leq {\bf N}_1(t)+ {\bf N}_2(t)$ for all $t\geq 0$, the process $\big({\bf N}(\cdot),{\bf Q}(\cdot)\big)$ is transient.\\

    On the other hand, suppose that $\lambda_1+\lambda_2 \leq C_{1,2}$, and that $\lambda_j < \underline{C}_j$ and $\lambda_i>C_i(\lambda_j)$. Analogously to the previous case, we can construct a coupled process $\underline N_i(\cdot)$ such that $N_i(t) \geq \underline N_i(t)$ for all $t\geq 0$, and such that its throughput is $C_i(\lambda_j)$. Then, since $\lambda_i>C_i(\lambda_j)$, we have that $\underline N_i(\cdot)$ diverges to $+\infty$ almost surely and, since $N_i(t) \geq \underline N_i(t)$ for all $t\geq 0$, the process $\big({\bf N}(\cdot),{\bf Q}(\cdot)\big)$ is transient.
    \end{itemize}
    
    \end{appendix}

\end{document}